\renewcommand{\marginpar}[1]{{}}
 \newcounter{myenumi}
\newcommand{\myversion}{}%
\newcommand{\ie}{{\it i.e.}}
\newcommand{\bG}{\ensuremath{\mathbf G}\xspace}
\newcommand{\bH}{\ensuremath{\mathbf H}\xspace}
\newcommand{\bK}{\ensuremath{\mathbf K}\xspace}
\newcommand{\figdelta}{\delta}
\newcommand{\figgamma}{\gamma}
\newcommand{\cF}{\ensuremath{\mathcal{F}}}
\renewcommand{\epsilon}{\varepsilon}
\def\card{\mbox{card}} \def\cG{{\cal G}} 
\def\ra{\rightarrow}
\newcommand{\macro}[2]{ \providecommand{#1}{{\ensuremath{#2}}\xspace}}
\macro{\bF}{{\mathbf F}}
\macro{\grs}{\mathrel{{\mathcal R}}}
\macro{\gfam}{\cF}%
\macro{\N}{\mathbb N}
\newcommand{\graph}{\cG\xspace} %
\macro{\etiq}{L} %
\newcommand{\lgraph}{\ensuremath{\cG_\etiq}\xspace} %
\newcommand{\gmin}{{\ensuremath{\mathcal G_{\mathrm{min}}}}\xspace}
\macro{\simtaille}{\sim^{\mbox{\tiny\textsc{Taille}}}}
\newcommand{\RR}{\ensuremath{\mathrel{\mathcal R}}} 
\newcommand{\romanitems}{\renewcommand{\labelenumii}{(\roman{enumii})}}
\newcounter{theoenumcounter}
\def\theoenumlabel{\thetheorem.\roman{theoenumcounter}}
\newenvironment{theoenum}{\begin{list}{\theoenumlabel}{%
      \usecounter{theoenumcounter}      
}      %
  }{\end{list}}
\newcounter{rrulec}
\newenvironment{rrule}[1]{%
  \begin{list}{\therrulec~:}
    { 
      \usecounter{rrulec}
      
    }
  }
  {\end{list}}
\newcommand{\ritem}[4][\therrule\arabic{rrulec}]{
  \renewcommand{\therrulec}{#1}
  \pagebreak[3]
  \item {\bf #2 }\nopagebreak[4] 
    \begin{itemize}
    \item[] \underline{\em Precondition~:}
      \begin{itemize}
        #3
      \end{itemize}
    \item[] \underline{\em Relabelling~:}
      \begin{itemize}
        #4
      \end{itemize}
    \end{itemize}{\vskip 0.5cm}
}
\renewenvironment{proof}{\par\noindent{\bf Proof.}}{\hfill%
    $\square$\medskip\endtrivlist} 
\title{
  Termination Detection\\ 
  of Local Computations
}
 \author{Emmanuel Godard \inst{1}, Yves M{\'e}tivier \inst{2} and 
Gerard Tel \inst{3}
}
 \institute{
\email{emmanuel.godard@lif.univ-mrs.fr}\\
Laboratoire d'Informatique Fondamentale \\
CNRS (UMR 6166) -- 
Universit{\'e} de Provence\\
39 rue Joliot-Curie\\ 
13453, Marseille Cedex 13, France\\
\and
\email{metivier@labri.fr}\\
LaBRI\\ 
Universit{\'e} Bordeaux 1,  ENSEIRB,\\
 351 cours de la Lib{\'e}ration\\ 
 33405 Talence, France
\and
\email{gerard@cs.uu.nl}\\
Department of Computer Science\\
University of Utrecht\\
P.O. Box 80.089, 3508 TB Utrecht\\
The Netherlands
}
\date{ } \pagestyle{plain}
\begin{document}
\setcounter{tocdepth}{3} %
\maketitle

\begin{abstract}

Contrary to the sequential world, the processes involved in a
distributed system do not necessarily know when a computation is
{\em globally} finished. This paper investigates the problem of the
detection of the termination of local computations.

We define four types of termination detection: no 
detection, detection of the local termination, detection by a distributed
observer, detection of the global termination. We give a complete
characterisation (except in the local termination detection case where
a partial one is given) for each of this termination detection and
show that they define a strict hierarchy. These results emphasise the
difference between computability of a distributed task and termination
detection. 

Furthermore, these characterisations encompass all standard criteria
that are usually formulated :
topological restriction (tree, rings, or triangulated networks ...),
topological knowledge (size, diameter ...),
and local knowledge to distinguish nodes (identities, sense of direction).
These results are now presented as corollaries of  generalising 
theorems.
As a very special and important case, the techniques are also applied
to the election problem. Though given in the model of local
computations, these results can give qualitative insight for similar
results in other standard models.

 The necessary conditions involve graphs covering and quasi-covering;
 the sufficient conditions (constructive local computations)
 are based upon an enumeration algorithm of Mazurkiewicz and a stable
 properties detection algorithm of Szymanski, Shi and Prywes.

\end{abstract}
\tableofcontents
\par
\newpage
\section{Introduction}
This paper presents results concerning two fundamental problems in the
area of distributed computing:  the termination detection problem and the
election problem. 
The proofs are done in the model of local computations and use mainly
common results and tools.
Namely, they use Mazurkiewicz' algorithm \cite{MazurEnum}, the
Szymanski-Shi-Prywes 
algorithm \cite{SSP}, coverings and quasi-coverings of graphs. 
\noindent
\subsection{The Model}
We consider networks of processors with arbitrary
 topology. A network is represented as a connected, 
undirected graph where vertices
 denote processors and edges denote direct communication links.  Labels 
are attached to vertices and edges. 
The identities of the vertices, 
a distinguished vertex, the number of 
 processors, the diameter of the graph or the topology are examples of
 labels
attached to vertices; weights, marks for encoding a spanning tree or
 the sense of direction are examples of labels attached to edges.

The basic computation step is to  modify labels  \emph{locally}, that is, 
on a subgraph of fixed radius $1$ of the given graph, according to certain
rules depending on the subgraph only ({\em local computations}). The
relabelling is performed until no more transformation is possible, \ie, until
a normal form is obtained. This is a model first proposed by
A. Mazurkiewicz \cite{Mazur}.

This model  has numerous interests. As any rigorously
defined model, it gives an abstract tool 
to think about some problems in the field of distributed computing
independently of the wide variety of models used to represent
distributed systems \cite{lamport}. As classical models in
programming, it enables to build and to prove complex systems, and so,
to get them right.
And quoting D. Angluin in \cite{Angluin}, this kind of model makes it
possible to put forward  phenomena common to other models.
It is true that this model is strictly stronger
than other standard models (like message passing systems), but then,
impossibility results remains true in weaker models. Furthermore, any
positive solution in this model may guide 
the research of a solution in a weaker model or be implemented 
in a weaker model using randomised algorithms. Finally,  this model 
gives nice properties and examples using classical combinatorial 
material, hence
we believe this model has a very light overhead in order to 
understand and to explain distributed problems.

We acknowledge, and underline, that the results presented here might be
quantitatively different from other models, but we claim that they are not
significantly different: they are qualitatively similar, as are all the
impossibility results proved in different models since the seminal
work of Angluin. All of them use the same ``lifting technique'',
even though not on exactly the same kind of graph morphism
\cite{Angluin,MazurEnum,YKsolvable,BVselfstab}. Thus it seems possible to extend the
general results of this paper to more standard models like the
``message passing model''. Moreover, this direction has already given
some results \cite{CGMT,CMelection,CGMlocalterm}. Note also that all the
questions addressed in this paper are not specific of the model of
local computations. E.g, is there a unique (universal) algorithm that
can solve the election problem on the family \gmin of networks that admit an
election algorithm? Though this very\marginpar{ML ?} set \gmin can be
different depending on 
the model of computations that is used, we claim that the generic
answer is no and that our main impossibility result can be extended to
any other model. The reader should note that this question has not been
previously thoroughly answered in any model (see the discussion about
the election problem on Section~\ref{BVelection}).

\subsection{Related Works} 
Among models related to our model there are local
computation systems as defined by Rosenstiehl et al.~\cite{Rosen},
Angluin \cite{Angluin}, Yamashita and Kameda \cite{Yaka1}, Boldi
and Vigna \cite{BV,BV0} and Naor and Stockmeyer \cite{Stockmeyer}.  In
\cite{Rosen} a synchronous model is considered, where vertices
represent (identical) deterministic finite automata.  The basic
computation step is to compute the next state of each processor
according to its state and the states of its neighbours.  In
\cite{Angluin} an asynchronous model is considered.  A basic
computation step means that two adjacent vertices exchange their
labels and then compute new ones.  In \cite{Yaka1} an asynchronous
model is studied where a basic computation step means that a
processor either changes its state and sends a message or it receives
a message.
In \cite{BV,BV0} networks are directed graphs coloured on their arcs;
each processor changes its state depending on its previous state and
on the states of its in-neighbours. Activation of processors may
be synchronous, asynchronous or interleaved.
In \cite{Stockmeyer} the aim is a study of distributed computations 
that can be done in a network within a time independent of the size of
the network.

\noindent
\subsection{The Termination Detection  Problem}
Starting with the works by Angluin \cite{Angluin} and Itai and Rodeh 
\cite{IR}, many papers have discussed the question: what functions 
can be computed by distributed algorithms in networks where knowledge 
about the network topology is limited?

Two important factors limiting the computational power of distributed 
systems are {\em symmetry} and {\em explicit termination}.
Some functions can be computed by an algorithm that terminates {\em 
implicitly} but not by an {\em explicitly} terminating algorithm.
In an implicitly terminating algorithm, each execution is finite and in 
the last state of the execution each node has the correct result.
However, the nodes are not aware that their state is the last one in the 
execution; with an explicitly terminating algorithm, nodes know the
local or global termination of the algorithm.

\subsubsection{Known Results about the Termination Detection Problem.}
Impossibility proofs for distributed computations quite often use the 
{\em replay} technique.
Starting from a (supposedly correct) execution of an algorithm, an 
execution is constructed in which the same steps are taken by nodes in a 
different network.
The mechanics of distributed execution dictate that this can happen, if 
the nodes are {\em locally} in the same situation, and this is precisely 
what is expressed by the existence of coverings.
The impossibility result implies that such awareness can never be 
obtained in a finite computation.
During the nineteen eighties there were many proposals for {\em 
termination detection} algorithms: such algorithms transform implicitly 
into explicitly terminating algorithms.
Several conditions were found to allow such algorithms (thus to null the 
difference between implicitly and explicitly computable functions) and 
for each of these conditions a specific algorithm was given 
(see \cite{Mat87,Lynch,Tel}).
These conditions include:
\begin{enumerate}
\item a unique {\em leader} exists in the network,
\item the network is known to be a tree,
\item the diameter of the network is known,
\item the nodes have different identification numbers.
\end{enumerate}
\subsubsection{The Main Result.}
In this paper we show that these four conditions are just special cases 
of one common criterion, namely that the local knowledge of nodes 
prohibits the existence of quasi-coverings of unbounded radius.
We also prove, by generalising the existing impossibility proofs to the 
limit, that in families with quasi-coverings of unbounded radius, 
termination detection is impossible. 
Informally, we prove (see Theorem \ref{caracOTD}):\par
{\it
  A distributed task $T=(\gfam,S)$ is locally computable with explicit
  termination detection if and only if 
  \begin{theoenum}
  \item $S$ is covering-lifting closed on \gfam,
  \item there exists a recursive function $r$
    such that for any $\bH$, there is no  strict
    quasi-covering  of \bH  of radius $r(\bH)$ in \gfam. 
  \end{theoenum}
}

Actually, we investigate different termination detection schemes:
local termination detection, observed termination detection and global
termination detection. This is  explained later in this introduction.

This is the first time, to our knowledge, that computability of a
distributed task (that is known to relate to ``local symmetries'') is
fully distinguished from the problem of detecting a kind of termination of a
distributed computation.

\subsubsection{Structural Knowledge and Labelled Graphs}
The definition of coverings and quasi-coverings are extended to 
include node and link labellings as well.
In the extension it is required that a node is mapped to a node with the 
same label, and links are mapped to links with the same label.
Our approach then naturally abstracts away the difference between 
anonymous or non-anonymous, centred or uniform
networks.
Indeed, the network being centred is modelled by considering as local 
knowledge that the graph family is the collection of graphs that contain 
exactly \emph{one} node with the label {\em leader}.

Specific assumptions (leader, identities, sense of direction, knowledge 
of size) now are examples of local knowledge that prevents certain 
quasi-coverings, thus allowing termination detection to take place.
Weak sense of direction (WSD) allows to distinguish closed from open 
walks, which is sufficiently strong to rule out all non-trivial 
quasi-coverings.
Thus termination detection is possible in all systems with WSD.

\subsection{The Election Problem}

As a very fundamental and illustrative problem, we investigate the
election problem. 
The election problem is one of the paradigms of the theory
of distributed computing. 
It was first posed by LeLann \cite{Lelann}.
Considering a network of processors 
the election problem is to arrive at a configuration where
exactly one processor is in the state ${\it {elected}}$ and all other
processors are in the state \textit{non-elected}.
The elected vertex is used to make decisions, to centralise or
to broadcast some information.
\subsubsection{Known Results about the Election Problem.}
Graphs where election is possible were already studied but the
algorithms usually involved some particular knowledge.
Solving the
problem for different knowledge has been investigated for some
particular cases  (see \cite{Attiya,Lynch,Tel} for details)
including:
\begin{enumerate}
\item the network is known to be a tree,
\item the network is known to be complete,
\item the network is known to be a grid or a torus,
\item the nodes have different identification numbers,
\item the network is known to be a ring and has a known
prime number of vertices.
\end{enumerate}
The classical proof techniques used for showing the non-existence of
election algorithm are based on coverings \cite{Angluin}, which is
a notion known from algebraic topology \cite{Massey}. 
A graph $G$ is a
covering of a graph $H$ if there is a surjective morphism from $G$ to
$H$ which is locally bijective.  The general idea,
used for impossibility proofs,  is as follows. If
$G$ and $H$ are two graphs such that $G$ covers $H$ and $G\neq H$, 
then every local
computation on $H$ induces a local computation on $G$ and
every label which appears in $H$ appears at least twice
in $G.$ Thus using $H$ it is always possible to build a computation
in $G$ 
such that the label $\it {elected}$ appears twice. By this way
it is proved that
there is no election algorithm for $G$  and $H$
(\cite{Angluin} Theorem 4.5).

A labelling is said to be locally bijective if vertices with the same
label are not in the same ball and 
have isomorphic labelled neighbourhoods.
A graph $G$ is non-ambiguous if any locally bijective labelling is
bijective.  Mazurkiewicz  has proved that,
knowing the size of graphs,
there exists an election algorithm for the class of non-ambiguous graphs
 \cite{MazurEnum}. This distributed 
algorithm, applied to a graph of size $n,$  assigns bijectively
numbers
of $[1..n]$ to vertices of $G.$ The elected vertex is the vertex
having the number $1.$

In \cite{MMW} the notion of quasi-covering has been introduced 
to study the problem of termination detection. A graph $G$
is a quasi-covering of a graph $H$ if $G$  is locally a covering
of $H$ (locally means that there is a vertex $v$ of
$G$ and a positive integer $k$ such that the  ball centred on $v$ 
of radius $k$ is a covering of a ball of $H$).
\subsubsection{The Main Result.}
We characterise which knowledge
is necessary and sufficient to have an election algorithm, or
equivalently, what is the general condition for a class of graphs to
admit an election algorithm: Theorem \ref{election}.
Sufficient conditions given below are just
special cases of criteria of Theorem \ref{election}.

We explain new parts in this theorem.
It is well known (see above) that the existence of an election
algorithm needs graphs minimal for the covering relation. We prove
in this paper 
that if a graph  is minimal for the covering relation
and admits quasi-coverings of arbitrary large radius in
the family there is no election algorithm. 
This part can be illustrated by the family of prime rings.
Indeed, prime rings are minimal for the covering relation nevertheless
there is no election algorithm for this family: without the knowledge
of the size, a ring admits quasi-covering prime rings of arbitrary
large radius.

These two results prove
one sense of Theorem \ref{election}.
To prove the converse:
\begin{itemize}
\item We remark that non-ambiguous graphs are precisely
graphs which are minimal for the covering relation.
\item We extend the Mazurkiewicz algorithm to labelled graphs.
\item We prove that the Mazurkiewicz algorithm applied in a labelled graph
$\mathbf G$ enables the ``cartography'', on each
node of $\mathbf G$, of a labelled graph $\mathbf H$ such that 
$\mathbf G$ is a quasi-covering of $\mathbf H;$
and when the computation is terminated $\mathbf G$ is a covering of
$\mathbf H.$
\item We define and 
we use an extension of an algorithm by Szymanski, Shi and Prywes
\cite{SSP} which enables the distributed
detection of stable properties in a labelled graph.
\item  We prove that the boundedness of the radius  of quasi-coverings 
of a given labelled graph enables to
each node $v$ to detect the termination  of the Mazurkiewicz algorithm 
and finally each node can decide if it is elected by testing if it has
obtained number $1$ by the Mazurkiewicz algorithm. 

\end{itemize}

\subsection{Tools}
\subsubsection{Coverings, Computations and Symmetry Breaking.}

The first step of a node in a distributed computation depends only on 
local initial knowledge of this node; only after receiving information 
from neighbours, the steps may depend on initial knowledge of these 
neighbours.
(Here initial knowledge includes the node's input, topological 
knowledge, degree, etc.)
Thus, consider a labelled graph $\mathbf G$ 
that contains a node $v$ with initial 
knowledge $x$, executing a distributed algorithm ${\cal A}$.
If $\mathbf G$ contains another node, $w$ say, with the same initial knowledge,
or a different labelled graph $\mathbf H$ 
contains a node with this knowledge, these 
nodes may thus execute the same first step if ${\cal A}$ is executed.
Now let $v$ in $\mathbf G$ have neighbours with 
initial knowledge $a$, $b$, and 
$c$ and assume that in the labelled graph $\mathbf H$, node $w$ 
also has neighbours with 
initial knowledge $a$, $b$, and $c$.
We thus create a ``local similarity'' to $\mathbf G$ of, in this case, 
a radius  1.
In this situation, not only will node $w$ start with the same step as 
node $v$, but also will receive the same information after the first 
step, and consequently will also perform the same second step.

Distributed tasks like election, enumeration (assigning different 
numbers to the nodes), and mutual exclusion require the network to reach 
a {\em non-symmetric} state.
A network state is symmetric if it contains different nodes that are in 
exactly the same situation; not only their local states, but also the 
states of their neighbours, of their neighbours' neighbours, etc.
That is, there exists a ``local similarity'' between different nodes of 
infinite radius.

The replay argument shows that different nodes that are locally similar
with infinite radius will exhibit the same behaviour in some infinite 
computation.
Thus, there is no algorithm that guarantees that the symmetry ceases in 
all finite computations.
Symmetry could be broken only by randomised protocols.

It is not difficult to see that local similarity of infinite radius may 
exist in finite graphs.
The classical example is a ring $R_6$ of six nodes, with initial states 
$a$, $b$, $c$, $a$, $b$, $c$.
Indeed, the two nodes with state $a$ both have neighbours in state $b$ 
and $c$, and so on, so the local similarity exists over an infinite 
radius.

The ring $R_6$ can be mapped into a ring $R_3$ with only three
nodes, with initial states $a$, $b$, and $c$, in such a way that each 
node is mapped to a node with the same initial state {\em and with the 
same states in neighbours}.
Such a mapping is called a {\em covering} and is the mathematical tool 
to prove the existence of symmetries.

\subsubsection{The Mazurkiewicz Algorithm.}

The proofs of our results used the fundamental Mazurkiewicz
distributed enumeration  algorithm.
A distributed enumeration algorithm on a graph $G$ is a distributed
algorithm such that the result of any computation is a labelling of
the vertices that is a bijection from $V(G)$ to
$\{1,2,\dots,|V(G)|\}$.  In  \cite{MazurEnum}, Mazurkiewicz 
presents a distributed enumeration algorithm for the class of
non-ambiguous graphs (graphs such that any local bijective labelling
is a bijective labelling). In this paper we prove that
the family of non-ambiguous graphs is the family of graphs
minimal for the covering relation. 

We prove also that
a run of the Mazurkiewicz algorithm on a labelled graph $\mathbf G$ 
(not necessarily
minimal for the covering relation) enables the  computation
on each vertex of $\mathbf G$ 
of a graph $\mathbf H$ quasi-covered by $\mathbf G$ 
(the quasi-covering becomes a
covering when the algorithm halts): we obtain a universal 
algorithm.

\subsubsection{The Szymanski, Shy and Prywes Algorithm and
Quasi-Co\-verings Relate to Termination Detection.}
Termination detection requires that a node certifies, in a {\em finite} 
computation, that all nodes of the network have completed their 
computation.
However, in a finite computation only information about a bounded region 
in the network can be gathered.
The algorithm by Szymanski, Shy, and Prywes does this for a region of 
pre-specified diameter; the assumption is necessary that the diameter of 
the {\em entire network} is known.
This implies that, termination detection, unlike symmetry breaking, is 
possible in  {\em every graph}, but provided some {\em knowledge}.

Network knowledge in an algorithm is modelled by a graph {\em family} in 
which the algorithm is required to work.
The detection algorithm by Szymanski {\em et al.}\ can be generalised in 
this way to work in a labelled graph family ${\cal F}$.
Nodes observe their neighbourhood and determine in what labelled graph 
$\mathbf H$ of  ${\cal F}$ they are.
Then they try to get a bound $k$ on the radius to which a different 
labelled 
graph of ${\cal F}$ can be locally similar to $\mathbf H$, 
and then certify that 
all nodes within distance $k$ are completed.
The universal termination detection algorithm thus combines the
Mazurkiewicz algorithm with (minimal) topological knowledge 
\cite{MazurEnum} and a known termination detection algorithm.

Of course the approach fails if a labelled graph 
${\mathbf H} \in {\cal F}$ is locally 
similar, with {\em unbounded} radius, to other graphs in ${\cal F}$.
Local similarities of this type are made precise in the notion of {\em 
quasi-coverings}.
Fortunately, the impossibility proofs for termination detection can be 
extended to cover exactly those families of labelled graphs that contain such 
unbounded-radius coverings.
Consequently, the sketched universal termination detection algorithm is 
the most general algorithm possible.

\subsubsection{Other Termination Detections}

In fact, in the previous algorithm, what is detected is that all
output values are correctly computed: the task is terminated, the
distributed algorithm is not terminated. Indeed, without symmetry
breaking conditions, we cannot detect the end of the
algorithm. Given a symmetric network, the ``last'' step can be
performed on at least two nodes. We call this kind of
detection {\em observed termination detection} because in this case,
the algorithm acts as an ``observer'' that knows when the underlying
computation of values is finished.
We do not ask this observer algorithm to detect its {\em own}
termination. Thus we distinguished the detection of the global
termination of the task from the detection of the termination of the
detection... This is presented in Theorem~\ref{caracOTD}.

In order to precise what can be explicit termination, we define
also  other kinds of termination detection: {\em detection of the  local
termination} (the nodes know when they enter their final step) and
{\em global termination detection} (one node knows that the
distributed   algorithm is finished). This last termination detection
scheme is characterised in Theorem \ref{caracGTD} that adds classical
coverings-based symmetry breaking conditions to the characterisation
of observed termination detection.

Such refinements of the notion
of termination  of a distributed algorithm are necessary to address
all kind of termination that are encountered in distributed
computing. One can think in particular about the composition of
distributed algorithms where observed termination detection seems not
enough decentralised.  

For example, from Th.~\ref{caracOTD}, it can be shown that they are no
distributed algorithm with detection of the global termination for
such computations - that are usually preliminary to general
distributed tasks - like computing the degree of a node, or any
computations that involve only a local part of the network (like in
\cite{Stockmeyer}). Indeed, on  
a huge network, without knowledge of something like a bound of the
diameter, a node can not even know if a very distant node has ever
started the distributed algorithm. Theorem~\ref{caracLTD} gives a
characterisation when the task is uniform, \ie,{} when the same value has
to be computed everywhere in the network. Open problems remains for
this kind of termination detection.

Finally, we show that, as it seems intuitively, these notions form a
strict hierarchy.

\subsection{Summary}
Section 2 reviews the definitions of coverings and quasi-coverings.
It pres\-e\-n\-ts local computations and their  relations with  coverings
and quasi-coverings.  
Section 3 presents local computations, coverings,
quasi-cov\-e\-r\-i\-n\-gs 
with their
properties that we need in the sequel of the paper.
Section 4 is 
devoted to the Mazurkiewicz algorithm,
Szymanski, Shy and Prywes algorithm and some extensions. 
In Section 5, we define formally our four notions of termination
detection (no detection, local termination, observed termination,
global termination) and gives numerous examples.
Our main results concerning the termination detection problem and the
 election problem are formulated and proved 
in Section 6 and Section 8. Section 7 presents some applications of
the theorems that present classical network hypothesises as corollaries.

This paper is an extended and improved version of the extended
abstracts \cite{MT} 
(the termination problem) and \cite{GMelection} (the election problem).

\section{Basic Notions and Notations}

\subsection{Graphs}
The notations used here are essentially standard \cite{Rosenpress}.  We
only consider finite, undirected, connected graphs without
multiple edges and self-loop.  
If $G$ is a graph, then $V(G)$ denotes the set of vertices and $E(G)$
denotes the set of edges. Two vertices $u$ and $v$  are said to be
adjacent if
$\{u,v\}$ belongs to $E(G).$
The distance between two vertices $u,v$ is denoted $d(u,v)$. 
The set of neighbours of $v$ in $G,$ denoted $N_G(v),$ is the set
of all vertices of $G$ adjacent to $v.$
For a vertex $v,$
 we denote by $B_G(v)$ the ball of radius 1 with center
$v,$ that is the graph with vertices $N_G(v) \cup \{v\}$ 
and edges $\left\{\{u,v\}\in  E(G) \mid u\in V(G) \right\}.$
We also denote by $B_G(v,r)$ the ball of center $v$ and radius
$r\in\N$. 

A homomorphism between $G$ and $H$ is a mapping $\gamma \colon V(G)\ra V(H)$ such that if $\{u,v\}$ is an
edge of $G$ then $\{\gamma(u),\gamma(v)\}$ is an edge of $H$.
Since
we deal only with graphs without self-loop, we have
$\gamma(u)\not=\gamma(v)$ whenever $\{u,v\}$ is an edge of $G$.  Note also
that $\gamma(N_G(u))\subseteq N_H(\gamma(u)).$
For an edge $\{u,v\}$ of $G$ we define 
$\gamma(\{u,v\})=\{\gamma(u),\gamma(v)\};$ this extends
$\gamma$ to a mapping $V(G)\cup E(G)\ra V(H) \cup E(H).$
 We say that $\gamma$
is an isomorphism if $\gamma$ is bijective and $\gamma^{-1}$ is 
a homomorphism, too.  We write $G\simeq G'$ whenever  $G$ and $G'$ are
isomorphic.  A class of graphs will be any set of graphs containing all graphs isomorphic to some of its
elements. The class of all graphs will be denoted $\mathcal G$.

For any set $S$, $\card(S)$ denotes the cardinality of $S$. For any
integer $q$, we denote by $[1,q]$ the set $\{1,2,\dots,q\}.$

\subsection{Labelled Graphs}
Throughout the paper we will consider  graphs where
vertices and edges are labelled with labels from a recursive
alphabet $L$. A graph labelled over $L$ will be denoted by
$(G,\lambda)$, where $G$ is a graph and $\lambda \colon V(G) \cup
E(G)\ra L$
is the labelling function. 
The graph $G$ is called the underlying
graph and the mapping $\lambda$ is a labelling of $G$.  
For a labelled graph $(G,\lambda)$, $lab((G,\lambda))$ is the set of labels
that occur in $(G,\lambda),$ i.e.,
$$
lab((G,\lambda))=\{ \lambda(v) | v\in V(G)\}.
$$
The class of
labelled graphs over some fixed alphabet $L$ will be denoted by 
${{\cal G}_L}$. 
Note that since $L$ is recursive, also ${{\cal G}_L}$ is recursive. 

Let $(G,\lambda)$ and $(G',\lambda')$ be two labelled graphs. Then
$(G,\lambda)$ is a subgraph of $(G',\lambda')$, denoted by
$(G,\lambda) \subseteq (G',\lambda')$, if $G$ is a subgraph of $G'$
and $\lambda$ is the restriction of the labelling $\lambda'$ to
$V(G)\cup E(G)$.

A mapping $ \gamma\colon V(G) \ra V(G')$ is a
homomorphism from $(G,\lambda)$ to $(G',\lambda')$ if $\gamma$ is a
graph homomorphism from $G$ to $G'$ which preserves the labelling,
i.e., such that $\lambda'(\gamma(x))=\lambda(x)$ holds for every 
$x\in V(G) \cup E(G).$ 

An {\em occurrence} of $(G,\lambda)$ in $(G',\lambda')$ is an
isomorphism $\gamma$ between $(G,\lambda)$ and a subgraph $(H,\eta )$
of $(G',\lambda')$. It shall be denoted 
$\gamma: (G,\lambda)\hookrightarrow(G',\lambda').$

Labelled graphs will be designated by  bold letters like 
$\mathbf G,$ $\mathbf H,\ldots$
If $\mathbf G$ is a labelled graph, then $G$ denotes  the underlying
graph. 

\subsection{Coverings}
We say that a graph $G$ is a {\em covering} of a graph $H$ via $\gamma$
if $\gamma$ is  a surjective 
homomorphism  from $G$ onto $H$ such that for every vertex
$v$ of $V(G)$ the restriction of $\gamma$ to $B_G(v)$ is a
bijection onto $B_H(\gamma(v))$.  The covering is proper if $G$ and
$H$ are not isomorphic. 

Examples and properties of coverings linked to networks are presented 
in \cite{Leeuwen,Bodlaender}. A generalization of coverings called fibrations
has been studied by Boldi and Vigna in \cite{BVfibrations}, this paper
emphasizes properties which found applications in distributed computing.

\begin{example} \label{ex_cov}
Let $R_n$, $n>2$, denote the ring on $n$ vertices defined
by $V(R_n)=[0,n-1]$ and $E(R_n)=\{\{x,y\} \mid y=x+1$ (mod $n$)$\}$.
Let now $m\ge n$ and $\gamma_{m,n}:[0,m]\longrightarrow[0,n]$ be
the mapping defined by $\gamma_{m,n}(i)=i$ (mod $n$), for every $i\in[0,m]$.
It is  easy to check that for every $n>2$ and for every $k>2,$
the ring $R_{k\times n}$ is a covering of the ring $R_n$ via
the mapping $\gamma_{k\times n,n}$.
\end{example}

The notion of covering extends  to labelled graphs in an obvious
way.  The labelled graph $(H,\lambda')$ is {\em covered} by
$(G,\lambda)$ via $\gamma,$  if $\gamma$ preserves labels and is
a covering from $G$ to $H$.

A graph $G$ is called {\em covering-minimal} if every
covering from $G$ to some $H$ is a bijection.
Note that a graph covering is exactly a covering in the classical
sense of algebraic topology, see \cite{Massey}. We have  the following basic property of coverings
\cite{Reidemeister}:

\begin{lemma}
  \label{un}
  Let $G$ be a covering of $H$ via $\gamma$ and let $v_1,v_2\in
  V(G)$ be such that $v_1\neq v_2$.  If $\gamma(v_1)=\gamma(v_2)$
  then $B_G(v_1) \cap B_G(v_2)=\emptyset$.
\end{lemma}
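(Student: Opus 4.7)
The plan is to argue by contradiction: suppose there exists $w \in B_G(v_1) \cap B_G(v_2)$ and derive a contradiction from the local bijectivity condition in the definition of covering.

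The key observation is a symmetry of the ball relation: for any vertices $u, w \in V(G)$, one has $w \in B_G(u)$ if and only if $u \in B_G(w)$. Indeed, by definition $w \in B_G(u)$ means that either $w = u$ or $\{u,w\} \in E(G)$, and either condition implies $u \in B_G(w)$. Applied to our hypothetical $w$, this gives both $v_1 \in B_G(w)$ and $v_2 \in B_G(w)$.

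Now I would invoke the covering hypothesis at the vertex $w$: the restriction of $\gamma$ to $B_G(w)$ is a bijection onto $B_H(\gamma(w))$, hence in particular injective on $B_G(w)$. Since $v_1, v_2 \in B_G(w)$ with $v_1 \neq v_2$, injectivity forces $\gamma(v_1) \neq \gamma(v_2)$, contradicting the assumption $\gamma(v_1) = \gamma(v_2)$.

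There is no real obstacle here; the only thing to get right is the symmetry of membership in the radius-$1$ ball (which hinges on the fact that graphs have no self-loops, so $v_1 \neq v_2$ together with $v_1, v_2 \in B_G(w)$ really does witness two distinct elements of $B_G(w)$). Once this is noted, the conclusion is immediate from the definition of a covering applied at $w$ rather than at $v_1$ or $v_2$, which is the slightly non-obvious choice in the proof.
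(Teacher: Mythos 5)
Your proof is correct: the pivot to the ball centred at the common vertex $w$, where $v_1$ and $v_2$ are two distinct preimages of the same vertex under a map that the covering condition forces to be injective on $B_G(w)$, is exactly the standard argument. The paper states Lemma~\ref{un} without proof (citing Reidemeister), so there is nothing to compare against; your write-up fills that gap correctly, and the symmetry of radius-$1$ ball membership that you flag is indeed the only point worth making explicit.
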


\begin{lemma}\label{arbre}
  Suppose that $G$ is a covering of $H$ via $\gamma.$ Let $T$ be
  a subgraph of $H.$ If $T$ is a tree then $\gamma^{-1}(T)$ is a set
  of disjoint trees, each isomorphic to $T.$
\end{lemma}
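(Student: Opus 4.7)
The plan is to proceed by induction on $n = \card(V(T))$, using the standard strategy of peeling off a leaf of $T$ and lifting it back through $\gamma$ using the local bijectivity of the covering. The base case $n=1$ is immediate: $T$ is a single vertex $v$, so $\gamma^{-1}(T)$ is just the fibre $\gamma^{-1}(v)$ viewed as isolated vertices, each of which is trivially a tree isomorphic to $T$.

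For the induction step, I would fix a leaf $v$ of $T$, let $u$ be its unique neighbour in $T$, and set $T' = T\setminus\{v\}$, which is a tree on $n-1$ vertices. By the inductive hypothesis, $\gamma^{-1}(T')$ is a disjoint union of trees, each mapped isomorphically onto $T'$ by $\gamma$. I then want to reconstruct $\gamma^{-1}(T)$ from $\gamma^{-1}(T')$ by attaching, to each $\tilde u \in \gamma^{-1}(u)$, a single new leaf $\tilde v \in \gamma^{-1}(v)$ via a new edge. The key point is that the covering condition, stating that $\gamma$ restricts to a bijection $B_G(\tilde u) \to B_H(u)$, produces a unique such $\tilde v \in N_G(\tilde u)$ with $\gamma(\tilde v) = v$; symmetrically, local bijectivity on $B_G(\tilde v)$ shows that every preimage of $v$ is obtained this way from a unique preimage of $u$.

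Once this pairing is established, the claim follows: attaching a new vertex as a leaf to a tree yields a tree, the isomorphism with $T$ is obtained by extending the restriction $\gamma\colon \text{(component in }\gamma^{-1}(T'))\to T'$ by the single new edge, and the components remain pairwise disjoint because the $\tilde v$'s are fresh (since $v \notin V(T')$, they do not collide with previously placed vertices) and because the new edges connect each component only to its own new leaf.

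The main obstacle, which is where Lemma~\ref{un} is essential, is verifying that no unintended edges or coincidences arise. Specifically, I must check that distinct $\tilde u_1, \tilde u_2 \in \gamma^{-1}(u)$ produce distinct new leaves $\tilde v_1, \tilde v_2$: this uses that $B_G(\tilde u_1) \cap B_G(\tilde u_2) = \emptyset$ by Lemma~\ref{un}, so the chosen neighbours above $v$ cannot coincide. I must also argue that every edge of $\gamma^{-1}(T)$ lying over $\{u,v\}$ is exactly one of the attached edges (again from local bijectivity on $B_G(\tilde v)$), so that no extra edges could join two existing components or create a cycle. All of the combinatorial content thus reduces cleanly to the two ingredients: local bijectivity of $\gamma$ on each ball, and the disjointness of balls over the same vertex given by Lemma~\ref{un}.
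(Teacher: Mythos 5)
Your proof is correct. The paper itself gives no argument for this lemma --- it is quoted as a classical property of coverings with a reference to Reidemeister, and only Lemma~\ref{sheets} is derived from it --- so there is no ``paper proof'' to compare against; your leaf-peeling induction is the standard self-contained way to establish it. The two ingredients you isolate are exactly the right ones: local bijectivity of $\gamma$ on each ball gives, for every $\tilde u\in\gamma^{-1}(u)$, a unique edge of $G$ above the edge $\{u,v\}$ incident to $\tilde u$ (and symmetrically at each $\tilde v\in\gamma^{-1}(v)$), so the edges above $\{u,v\}$ form a perfect matching between the fibres of $u$ and $v$, and Lemma~\ref{un} (or, equivalently, uniqueness of the neighbour above $u$ at each $\tilde v$) rules out two components grabbing the same new leaf. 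One small point worth making explicit: the statement is only true if $\gamma^{-1}(T)$ is read as the subgraph with vertex set $\gamma^{-1}(V(T))$ and edge set $\{e\in E(G)\mid\gamma(e)\in E(T)\}$, not as the induced subgraph on $\gamma^{-1}(V(T))$ (a $6$-cycle covering a triangle, with $T$ a $2$-edge path, shows the induced-subgraph reading fails); your argument implicitly and correctly uses the former reading, since you only ever account for edges lying over edges of $T$.
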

By considering simple paths between any two vertices,
the previous lemma implies:

\begin{lemma} \label{sheets}
  For every  covering  $\gamma$ from $G$ to $H$ there exists an integer $q$ such that $\card(\gamma^{-1}(v))=q$, for
  all $v\in V(H).$  
\end{lemma}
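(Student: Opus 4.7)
The plan is to use Lemma~\ref{arbre} together with the connectedness of $H$. Given any two vertices $u,v \in V(H)$, since $H$ is connected there exists a simple path $P$ in $H$ from $u$ to $v$. Such a simple path, viewed as a subgraph, is a tree $T$. Applying Lemma~\ref{arbre} to $T$, the preimage $\gamma^{-1}(T)$ is a disjoint union of trees $T_1, \ldots, T_q$, each of which is isomorphic to $T$ via the restriction of $\gamma$.

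The key observation is now the following: since $\gamma$ restricted to $T_i$ is an isomorphism onto $T$ (in particular a bijection on vertices), each $T_i$ contains exactly one vertex of $\gamma^{-1}(u)$ and exactly one vertex of $\gamma^{-1}(v)$. Conversely, every vertex of $\gamma^{-1}(u)$ (resp.\ $\gamma^{-1}(v)$) lies in $\gamma^{-1}(T)$ and hence in exactly one of the trees $T_i$ (they are pairwise disjoint). Therefore the number $q$ of trees satisfies
\[
\card(\gamma^{-1}(u)) \;=\; q \;=\; \card(\gamma^{-1}(v)).
\]

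Since $u$ and $v$ were arbitrary and $H$ is connected, the cardinality of the fiber $\gamma^{-1}(v)$ does not depend on $v \in V(H)$, which is exactly the statement of the lemma. The integer $q$ is of course finite, since $G$ itself is finite. I do not expect any real obstacle here: the entire argument is an immediate application of Lemma~\ref{arbre}, and the only point to articulate with care is the bijection between the set of trees $\{T_1,\ldots,T_q\}$ and each fiber above a vertex of the chosen path.
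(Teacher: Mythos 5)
Your proof is correct and follows exactly the route the paper intends: the paper states Lemma~\ref{sheets} with only the remark ``by considering simple paths between any two vertices, the previous lemma implies,'' and your argument is precisely the fleshed-out version of that remark, using Lemma~\ref{arbre} on the path-tree and counting the disjoint isomorphic copies. Nothing to add.
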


 The integer $q$ in the previous lemma is called the number of {\em sheets} of the covering.
We also refer to $\gamma$ as a {\em $q$-sheeted covering}.

\begin{figure}[hbtp]
  \begin{minipage}[c]{0.45\textwidth}%
    \begin{example}\label{1stex}
      A simple example of a 2-sheeted covering  is given in
      Fig.~\ref{fig:exemple}. The image of each vertex of $G$ is given
      by the corresponding Roman letter. Furthermore, we note
      that the image of each vertex is also given by its position on
      the $H$ pattern (the spanning tree of $H$ suggested in the figure).
      All  examples of coverings below will be implicitely described
      by this geometric scheme, that is based  on Theorem \ref{reid}.
    \end{example}
  \end{minipage}
  {\hskip1cm} 
  \begin{minipage}[c]{0.45\textwidth}
    \begin{center}
      \input{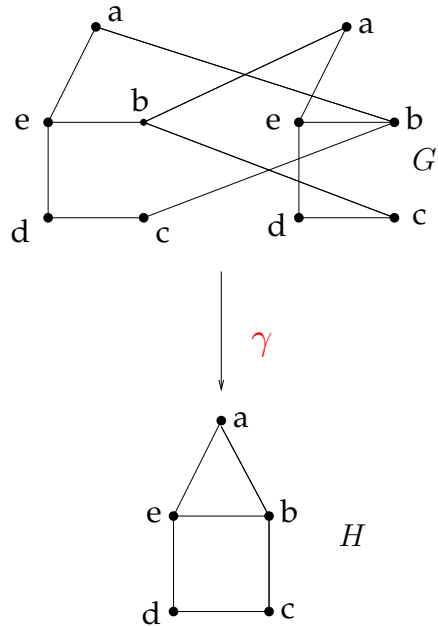}
      \caption{The morphism $\gamma$ is a covering from $G$ to
        $H$.}%
      \label{fig:exemple}
    \end{center}
  \end{minipage}
\end{figure}

Note also that for the rings $R_{k\times n}$ and $R_n$ 
the number of sheets is $k.$

In \cite{Reidemeister}, it is shown that all coverings of
$H$ can be obtained from a given spanning tree of $H$:
\begin{theorem}[\cite{Reidemeister}]\label{reid}
  Let $H$ be a graph and $T$ a spanning tree of $H$. A connected graph $G$ is
  a covering of $H$ if and only if there exist a non-negative integer
  $q$ and a set 
  $\Sigma=\{\sigma_{(x,y)}\mid\ x,y\in V(H), \{x,y\}\in E(H)\setminus
  E(T)\}$ of
  permutations\footnote { with the convention that $\sigma_{(x,y)} =
    \sigma^{-1}_{(y,x)}$ }  on $[1,q]$ such that $G$ is isomorphic to the graph
  $H_{T,\Sigma}$ defined by:
  \begin{eqnarray*}
    V(H_{T,\Sigma}) & = & \{ (x,i)\mid\ x\in V(H)\mid i\in [1,q] \},\\
    E(H_{T,\Sigma}) & = & \{\ \{(x,i),(y,i)\}\mid \{x,y\}\in E(T),\ i\in [1,q] \}\ \cup\\
    & & \{\ \{(x,i),(y,\sigma_{(x,y)}(i))\}\mid  \{x,y\}\in
    E(H)\setminus E(T),\ i\in [1,q] \}. 
  \end{eqnarray*}
\end{theorem}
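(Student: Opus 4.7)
The plan is to prove the two implications separately, leveraging the two structural lemmas already stated: Lemma~\ref{sheets} giving a common fibre size $q$, and Lemma~\ref{arbre} giving $q$ disjoint isomorphic copies of any subtree of $H$ inside $G$. The easy direction checks that the constructed graph $H_{T,\Sigma}$ is a covering; the harder direction recovers $q$ and $\Sigma$ from an arbitrary covering $\gamma \colon G \to H$.

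For the ``if'' direction, I would define the projection $\pi \colon V(H_{T,\Sigma}) \to V(H)$ by $\pi(x,i) = x$ and show it is a covering. Surjectivity is immediate once $q \geq 1$, and both types of edges in $E(H_{T,\Sigma})$ project to edges of $H$: one in $E(T)$ in the first case and one in $E(H)\setminus E(T)$ in the second. For local bijectivity at a vertex $(x,i)$, each neighbour $y$ of $x$ in $H$ yields exactly one neighbour of $(x,i)$: namely $(y,i)$ if $\{x,y\} \in E(T)$, and $(y,\sigma_{(x,y)}(i))$ if $\{x,y\} \in E(H)\setminus E(T)$. Since each $\sigma_{(x,y)}$ is a function, this neighbour is unique, so $\pi$ restricts to a bijection $B_{H_{T,\Sigma}}((x,i)) \to B_H(x)$. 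Whenever $G$ is connected and isomorphic to $H_{T,\Sigma}$, transporting $\pi$ through the isomorphism yields the required covering of $H$ by $G$.

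For the ``only if'' direction, starting from a covering $\gamma \colon G \to H$, Lemma~\ref{sheets} supplies the integer $q$. Applying Lemma~\ref{arbre} to $T$, the preimage $\gamma^{-1}(T)$ splits into $q$ vertex-disjoint trees $T_1, \dots, T_q$, each mapped isomorphically onto $T$ by $\gamma$. Enumerate these sheets once and for all, and for each $x \in V(H)$ and $i \in [1,q]$ let $\phi(x,i)$ be the unique preimage of $x$ lying in $T_i$. Then $\phi \colon V(H) \times [1,q] \to V(G)$ is a bijection that already matches the tree edges of the would-be $H_{T,\Sigma}$ with the edges of the $T_i$ in $G$. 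For each non-tree edge $\{x,y\} \in E(H)\setminus E(T)$ and each $i$, local bijectivity of $\gamma$ at $\phi(x,i)$ produces a unique neighbour of $\phi(x,i)$ in $G$ lying above $y$; if this neighbour is $\phi(y,j)$, set $\sigma_{(x,y)}(i) = j$. Local bijectivity at $\phi(y,j)$ prevents two distinct $i$'s from producing the same $j$, so $\sigma_{(x,y)}$ is injective on $[1,q]$, hence a permutation, and the symmetric construction for $\sigma_{(y,x)}$ makes it the inverse.

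The main obstacle, and in fact the only step requiring care, is obtaining a single coherent numbering of sheets that is valid at every vertex of $H$ simultaneously, so that tree edges really join preimages with the same index; this is exactly the content of Lemma~\ref{arbre} for the spanning tree $T$. Once that is in place, checking that $\phi$ is an isomorphism from $H_{T,\Sigma}$ to $G$ reduces to an immediate edge-by-edge comparison, using the covering property to account for every edge of $G$ exactly once.
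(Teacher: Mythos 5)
Your argument is correct and complete: the ``if'' direction verifies local bijectivity of the projection $(x,i)\mapsto x$ directly, and the ``only if'' direction correctly extracts $q$ from Lemma~\ref{sheets}, uses Lemma~\ref{arbre} on the spanning tree $T$ to obtain a coherent sheet numbering, and recovers each $\sigma_{(x,y)}$ from local bijectivity at the endpoints, which also gives injectivity and the inverse convention. The paper itself imports this theorem from Reidemeister without proof, so there is nothing to compare against, but what you wrote is the standard and correct derivation from the two lemmas the paper does state.
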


\subsubsection{The Universal Covering.}
The universal covering of a graph 
is a special example of covering. It may be
defined as follows \cite{Angluin,Leighton}. Let $G$ be a graph, 
let $v$ be a vertex of 
$G,$  the universal covering of $G,$ denoted $U(G),$ is the infinite tree whose
vertex set is the set of all finite walks from $v$ in $G$ that do not traverse
the same edge in two consecutive steps. Two nodes are adjacent if
one is a one-step extension of the other. It is easy to verify
that $U(G)$ is a tree, unique up to isomorphism and independent of the choice
of $v.$
Clearly $U(G)$ covers $G.$ See Section\ref{univcovering} for a more formal definition. 
\subsection{Ambiguous Graphs and Coverings}
In this part we give the definition of ambiguous graphs introduced
by Mazurkiewicz in \cite{MazurEnum} and we show that the 
non-ambiguous graphs are precisely the covering-minimal graphs.

A labelling is said to be locally bijective if vertices with the same
label are not in the same ball and 
have isomorphic labelled neighbourhoods. Formally, we have:
\begin{definition}\cite{MazurEnum}\label{locbij}
  Let $L$ be a set of labels and let $(G,\lambda)$ be a labelled
  graph. The labelling $\lambda$ is  {\em locally bijective}
  if it verifies the following two conditions:
  \begin{enumerate}
  \item \label{inj_amb} For each $v\in V$ and for all $v',v'' \in
    B_G(v)$ we have $\lambda(v')=\lambda(v'')$ if and only if
    $v'=v''$.
  \item \label{surj_amb} For all $v',v''\in V$ such that
    $\lambda(v')=\lambda(v'')$, the labelled balls
    $(B_G(v'),\lambda)$ and $(B_G(v''),\lambda)$ are isomorphic.
  \end{enumerate}
  A graph $G$ is  {\em ambiguous} if there exists a non-bijective
  labelling of $G$ which is locally bijective.
\end{definition}
The labelling of the graph $G$ in Figure 1 proves that $G$ is ambiguous.

Locally bijective labellings and coverings are closely related
through quotient graphs.
\begin{definition} Let $\lambda$ be a labelling of the graph $G$. We
  define the {\em quotient graph} $G/\lambda$ by letting:
  \begin{itemize}
  \item $V(G/\lambda)=\lambda(V(G)),$ and
  \item $E(G/\lambda)=\{\{\alpha,\alpha'\} \mid \exists v,v'\in
    V(G) \text{ such that } \{v,v'\} \in E(G),\alpha=\lambda(v),\alpha'=\lambda(v')\}.$
  \end{itemize}
\end{definition}

\begin{lemma}\label{LB_kcov}
Let $G$ be a graph:
  \begin{enumerate}
  \item If $\lambda$ is a locally bijective labelling of $G$ then
    the quotient mapping $G\longrightarrow G/\lambda$ is a covering.
  \item Every covering  $\gamma:G\longrightarrow H$ 
    defines a locally bijective labelling of $G.$
  \end{enumerate}
\end{lemma}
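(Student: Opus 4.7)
The plan is to treat the two parts essentially by unwinding the definitions, with the observation that the axioms of a locally bijective labelling were designed precisely so that the quotient map is locally a bijection.

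For part (1), I would define the candidate covering $\pi : V(G) \to V(G/\lambda)$ by $\pi(v) = \lambda(v)$ and extend to edges by $\pi(\{u,v\}) = \{\lambda(u),\lambda(v)\}$. By construction of the quotient, $\pi$ is a surjective graph homomorphism. It remains to check that for each $v \in V(G)$, the restriction $\pi|_{B_G(v)}$ is a bijection onto $B_{G/\lambda}(\lambda(v))$. \emph{Injectivity} is immediate from clause~\ref{inj_amb} of Definition~\ref{locbij}: two vertices of $B_G(v)$ with equal labels must coincide. For \emph{surjectivity} onto the ball, let $\alpha$ be a neighbour of $\lambda(v)$ in $G/\lambda$. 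By the definition of $E(G/\lambda)$ there exist $w,w' \in V(G)$ with $\{w,w'\} \in E(G)$, $\lambda(w)=\lambda(v)$ and $\lambda(w')=\alpha$. By clause~\ref{surj_amb} of Definition~\ref{locbij}, there is a label-preserving isomorphism $\varphi : B_G(w) \to B_G(v)$. Clause~\ref{inj_amb} applied to $B_G(v)$ forces $\varphi(w)=v$ (as $v$ is the unique vertex of $B_G(v)$ carrying the label $\lambda(v)$). Then $u := \varphi(w')$ lies in $B_G(v)$, is adjacent to $v$, and satisfies $\lambda(u) = \lambda(w') = \alpha$, so $\pi(u)=\alpha$ as required.

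For part (2), given a covering $\gamma : G \to H$ I would simply take $\lambda = \gamma$ (viewed as a vertex- and edge-labelling with values in $V(H) \cup E(H)$). Clause~\ref{inj_amb} follows directly from the fact that $\gamma$ restricted to any ball $B_G(v)$ is injective: if $v',v'' \in B_G(v)$ share a label, i.e. $\gamma(v')=\gamma(v'')$, then $v'=v''$. For clause~\ref{surj_amb}, suppose $\lambda(v')=\lambda(v'')$. Then $\gamma$ restricts to bijections $B_G(v') \to B_H(\gamma(v'))$ and $B_G(v'') \to B_H(\gamma(v''))$ which are moreover label-preserving (tautologically, since labels are $\gamma$-values). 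As $\gamma(v')=\gamma(v'')$, the composition $(\gamma|_{B_G(v'')})^{-1} \circ \gamma|_{B_G(v')}$ is the desired label-preserving isomorphism from $(B_G(v'),\lambda)$ to $(B_G(v''),\lambda)$.

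No step seems to present a serious obstacle; the only mildly delicate point is the surjectivity argument in part~(1), where one has to use both clauses of Definition~\ref{locbij} together—\ref{surj_amb} to transport the neighbour $w'$ of $w$ into $B_G(v)$, and \ref{inj_amb} to guarantee that this transport lands on an actual neighbour of $v$ (by pinning down $\varphi(w) = v$).
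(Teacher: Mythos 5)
Your proof is correct and follows the same route as the paper's (which is much terser, essentially just asserting that the two clauses of Definition~\ref{locbij} make $\lambda$ a bijection on each ball, and that $\gamma$ itself serves as the labelling for part (2)); your expansion of the surjectivity step via the ball isomorphism $\varphi$ with $\varphi(w)=v$ is the right way to fill in the details. The only point the paper makes explicit that you leave implicit is that clause~\ref{inj_amb} also guarantees $G/\lambda$ has no self-loop, which is needed for the quotient map to be a homomorphism in the paper's sense.
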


\begin{proof}
  \begin{enumerate}
  \item Using  condition
    (\ref{inj_amb}) in Definition \ref{locbij}
     we note that
    $G/\lambda$ has no self-loop. 
    Moreover,
    the conditions (\ref{inj_amb}) and (\ref{surj_amb}) imply
    that $\lambda$ is a bijection from $B_G(v)$ to
    $B_{G/\lambda}(\lambda(v))$,  for each $v\in V(G)$.
    Hence $B_G(v)$ and $B_{G/\lambda}(\lambda(v))$ are isomorphic.
  \item We consider $V(H)$ as set of labels and  we label a vertex 
 $v\in V(G)$   by $   \gamma(v).$ 
 It is straightforward to verify that this labelling is
    locally bijective.
  \end{enumerate}
\end{proof}

Using the previous lemma we obtain:
\begin{corollary}\label{ambiguous}
  A graph is non-ambiguous if and only if it is covering-minimal.
\end{corollary}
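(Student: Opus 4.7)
The plan is to derive both directions of the equivalence directly from the two halves of Lemma \ref{LB_kcov}, which set up a tight correspondence between locally bijective labellings of $G$ and coverings with domain $G$. The key observation is that ``non-ambiguous'' and ``covering-minimal'' are the two sides of this correspondence once one asks the map in question to be bijective.

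For the direction ``covering-minimal $\Rightarrow$ non-ambiguous'', I would start with an arbitrary locally bijective labelling $\lambda$ of $G$ and apply part (1) of Lemma \ref{LB_kcov} to obtain the covering $G \longrightarrow G/\lambda$. Covering-minimality forces this quotient map to be a bijection, which is exactly the condition that $\lambda$ is bijective on $V(G)$. Hence no non-bijective locally bijective labelling of $G$ can exist, i.e., $G$ is non-ambiguous.

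For the converse, I would take any covering $\gamma \colon G \longrightarrow H$ and apply part (2) of Lemma \ref{LB_kcov}: the labelling of $v \in V(G)$ by $\gamma(v) \in V(H)$ is locally bijective. Non-ambiguity then says this labelling must be bijective, so $\gamma$ is injective; combined with the surjectivity built into the definition of a covering (and the usual argument that a bijective graph homomorphism whose inverse preserves edges is an isomorphism, which here comes from the local bijection condition on balls), we conclude that $\gamma$ is a bijection. Thus $G$ is covering-minimal.

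There is no real obstacle here: both directions are one-line consequences of Lemma \ref{LB_kcov}, and the only thing worth checking carefully is the logical equivalence between ``the quotient labelling is bijective'' and ``the covering onto the quotient is a bijection'', which is immediate from the definition of $G/\lambda$. The corollary is therefore essentially a restatement of the lemma with the adjective ``bijective'' inserted on both sides.
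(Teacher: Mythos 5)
Your proposal is correct and follows exactly the route the paper intends: the paper gives no explicit proof, merely stating that the corollary follows ``using the previous lemma,'' and your two directions are precisely the application of parts (1) and (2) of Lemma \ref{LB_kcov} together with the observation that bijectivity of the quotient map $G\longrightarrow G/\lambda$ is the same as bijectivity of $\lambda$, and that injectivity plus the surjectivity built into the definition of a covering yields a bijection. Note only that the paper defines covering-minimality by requiring the covering to be a bijection, so your extra remark about upgrading a bijective homomorphism to an isomorphism is not actually needed.
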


\section{Local Computations}
In this section we recall the definition of local computations
and their relation with coverings \cite{LMZrecog}. They model
distributed algorithms on networks of processors of arbitrary
topology. The network is 
represented as a connected, undirected graph where vertices denote
processors and edges denote direct communication links.  Labels (or
states) are attached to vertices and edges.

Graph relabelling systems and more generally local computations satisfy the
following constraints, that arise naturally when describing distributed
computations with  decentralized control:
\begin{enumerate}
\item[$(C1)$] they do not change the underlying graph 
  but only the
  labelling of its components (edges and/or vertices),
  the final labelling being the result of the
  computation,
\item[$(C2)$] they are {\em local}, that is, each relabelling step
  changes only a connected subgraph of a fixed size in the underlying graph,
\item[$(C3)$] they are {\em locally generated}, that is, the
  applicability of a relabelling rule
  only depends on the {\em local context} of the relabelled subgraph.
\end{enumerate}
The relabelling is performed until no more
transformation is possible, i.e., until a normal form is obtained.
\subsection{Local Computations}
Local computations as considered here can be described in the
following general framework.  Let ${\cal G}_L$ be the class of
$L$-labelled graphs and let $\RR \;\subseteq {\cal G}_L \times {\cal
  G}_L$  be a
binary relation on ${\cal G}_L$. Then $\RR$ is called a {\em graph rewriting
  relation}.  We assume that $\RR$ is closed under isomorphism, i.e.,
if ${\mathbf G} \RR {\mathbf{G}'}$ and ${\mathbf H}\simeq
{\mathbf G}$ then ${\mathbf{ H}} \RR {\mathbf H'}$ for some
labelled graph ${\mathbf H'} \simeq {\mathbf G'}$.  In the
remainder of the paper $\RR^*$ stands for the 
reflexive-transitive closure of $\RR.$ The labelled graph
${\mathbf G}$ is {\em  $\RR$-irreducible} if there is no $\mathbf G'$
such that 
$\mathbf G\RR {\mathbf G'}.$ For ${\mathbf G}\in {\cal G}_L,$
Irred$_{\RR}({\mathbf G})$ denotes the set of $\RR$-irreducible
(or just irreducible if $\RR$ is fixed) graphs obtained from
$\mathbf G$ using $\RR,$ i.e., 
$Irred_{\RR}({\mathbf G})=\{{\mathbf H} | {\mathbf G} {\RR}^*{\mathbf H}
\text{ and } {\mathbf H} \text{ is } {\RR}\text{-irreducible}\}.$

\begin{definition} 
  Let $\RR\;\subseteq {{\cal G}_L} \times {{\cal G}_L}$ 
  be a graph rewriting relation.
  \begin{enumerate}
  \item $\RR$ is a {\em relabelling relation} if whenever two labelled
    graphs are in relation then the underlying graphs are equal (we
    say equal, not just isomorphic), \ie,
    $${\mathbf G} \RR {\mathbf H}\text{  implies that } G=H.$$
  \item $\RR$ is {\em local} if it can only modify   balls of radius
    $1$, i.e., $(G,\lambda) \RR (G,\lambda')$ 
    implies that there exists a vertex $v\in V(G)$ such that
    $$
    \lambda(x)=\lambda'(x) \mbox{ for every } x \notin
    V(B_G(v))\cup E(B_G(v)).$$
  \end{enumerate}

  The labelled ball $(B_G(v),\lambda)$ is a
  {\em support} of the relabelling relation.
\end{definition}

The next definition states that a local relabelling relation $\RR$ is
{\em locally generated} if the applicability of any relabelling 
depends only on the balls of radius 1.

\begin{definition} \label{locgen}Let $\RR$ be a  relabelling
  relation. Then $\RR$ is 
  {\em locally generated} if it is local and the 
  following is satisfied. For all labelled graphs $(G,\lambda)$,
  $(G,\lambda')$, such that  $(G,\lambda) \RR (G,\lambda')$, 
  there exists a vertex $v\in V(G)$, such that 
  $\lambda(x)=\lambda'(x)$, for all $x \notin V(B_G(v)) \cup
    E(B_G(v)),$
  and such that for all $(H,\eta)$, $(H,\eta')$, $w \in V(H)$ 
  where the balls $B_G(v)$ and $B_H(w)$  are isomorphic via $\varphi
  \colon V(B_G(v)) \longrightarrow V(B_H(w))$ and $\varphi(v)=w$,
  the following conditions:
  \begin{enumerate}\romanitems
  \item $\lambda(x)=\eta(\varphi(x))$ and
    $\lambda'(x)=\eta'(\varphi(x))$ for all $ x \in V(B_G(v))\cup
    E(B_G(v)),$
  \item $\eta(x)=\eta'(x)$, for all $x \notin V(B_H(w)) \cup
    E(B_H(w)),$
  \end{enumerate}
  imply that $(H,\eta)\RR (H,\eta')$. 

\end{definition}

By definition, local computations on graphs are computations on graphs
corresponding to locally generated relabelling relations.

We only consider recursive relabelling relations such that
the set of irreducible graphs is recursive. The purpose of all 
assumptions about recursiveness  done throughout  the paper is to
have  ``reasonable'' objects w.r.t.~the computational
power. Furthermore, in order to prevent ambiguousness,
Turing-computability will only be addressed as ``recursivity'', and we
will restrict the use of the word ``computability''  to the 
context of local computations.

A  sequence $({\mathbf G}_i)_{0\leq i\leq n}$ is called an 
 $\RR$-{\it relabelling}
{\it sequence} (or {\it relabelling sequence,} when $\RR$ is clear from the
context) if ${\mathbf G}_i \RR {\mathbf G}_{i+1}$ for 
every $0\leq i < n$ (with $n$ being the length of the sequence). A
relabelling sequence of length $1$ is a {\it relabelling step.}
The relation $\RR$ is called {\em noetherian} on a graph $\mathbf G$
if there
is no infinite relabelling  sequence ${\mathbf G_0}\RR {\mathbf
  G_1}\RR \ldots,$ with ${\mathbf G}_0={\mathbf G}.$ 
The relation $\RR$ is noetherian on 
a set of graphs if it is noetherian on each graph of the set. Finally,
the relation $\RR$ is called noetherian if it is noetherian
on each graph.

\subsection{Graph Relabelling Systems}
We present now graph relabelling systems as used  for modelling
distributed algorithms, by describing the exact form of the
relabelling steps. Each step will modify a  
{\em star-graph}, i.e., a graph with a distinguished center vertex
connected to all other vertices (and having no other edge besides
these edges).  As any  ball of radius one is isomorphic to a  
labelled star-graph, the support (or precondition) 
of any  relabelling rule  will be supposed to
be a labelled star-graph.

\subsubsection{Graph Relabelling Rules.}\par\noindent  
A {\it graph relabelling rule}  is a triple
$r=(B_r,\lambda_r,\lambda_r')$, 
where $B_r$ is a star-graph and $\lambda_r$, $\lambda_r'$  are two
labellings of $B_r$. We refer to  $(B_r,\lambda)$ as the {\em
  precondition} of the rule $r$, whereas   
$(B_r,\lambda')$ is referred to as the {\em relabelling}  through $r$.

Let $r=(B_r,\lambda_r,\lambda_r')$ be a relabelling rule, $H$ an 
(unlabelled) graph and $\eta$, $\eta'$ two labellings of $H$. We say
that $(H,\eta')$ is obtained from $(H,\eta)$ by
applying the rule $r$ to  the occurrence $\varphi$ of
$B_r$ in $H$ (and we write $(H,\eta) {\underset {r,\varphi}
  {\Longrightarrow} } (H,\eta')$) 
if the following conditions are satisfied, with $v_0$ denoting the
center of $B_r$: 
\begin{enumerate}
\item $\varphi$ induces both an isomorphism from $(B_r,\lambda_r)$ to
  $B_{(H,\eta)}(\varphi(v_0))$ and from $(B_r,\lambda'_r)$ to
  $B_{(H,\eta')}(\varphi(v_0))$, 
\item $\eta'(x)=\eta(x)$  for all $x \in (V(H)\setminus
  V(B_H(\varphi(v_0))))\cup (E(H) \setminus E(B_H(\varphi(v_0)))$,  
\end{enumerate}%
In this case we also say  that $\varphi$ is an occurrence of the rule
$r$ in $(H,\eta)$ and the image of $B_r$ under $\varphi$ is called the
image of $r$ under $\varphi.$

The relabelling 
relation $\underset {r} \Longrightarrow$ induced by the rule $r$ is
defined by letting  
$(H,\eta) {\underset r  \Longrightarrow} (H,\eta')$
if there exists an occurrence $\varphi$ of $r$ in $(H,\eta)$ with
$(H,\eta) {\underset {r,\varphi}  \Longrightarrow} (H,\eta').$\par
Let $r=(B_r,\lambda_r,\lambda_r')$ and $s=(B_s,\lambda_s,\lambda_s')$
be two (not necessary distinct) relabelling rules and let 
$$\varphi_r:\quad (B_r,\lambda_r)\quad \hookrightarrow\quad (H,\eta), \quad
\varphi_s: \quad (B_s,\lambda_s) \quad\hookrightarrow\quad (H,\eta)$$
be two occurrences of $r$ and $s$ respectively in $(H,\eta)$. We say
that these two occurrences  {\it overlap} if\par
(i) the images of $B_r$ by $\varphi_r$ and $B_s$ by $\varphi_s$ have a
common vertex, and\par 
(ii) either $r\neq s$ or ($r=s$ and $\varphi_r\neq \varphi_s$).\par

\subsubsection{Graph Relabelling Systems.}  
A {\it graph relabelling system}  is a recursive set $R$
of graph relabelling rules, such that the set of labelled star-graphs that are
preconditions of a rule in $R$ is also recursive.

The relabelling relation $\underset R  \Longrightarrow$ is defined by 
$(G,\lambda)\quad {\underset R  \Longrightarrow}\quad (G,\lambda')$
if there is a rule  $r \in R$ such that
$(G,\lambda)\quad {\underset r  \Longrightarrow} \quad (G,\lambda').$

Examples of graph relabelling systems are presented in
\cite{LMS1,LMZrecog}.

Clearly, graph relabelling systems represent locally generated relabelling
relations. Conversely, any locally generated relabelling relation
can be represented by a graph relabelling system.

\begin{proposition}
  Let \grs  be a  relabelling relation.
  Then \grs is both locally generated and a recursive relation such that
  the set of irreducible graphs is recursive
  if and only if  there exists a graph
  relabelling system $R$ such that $\grs$ equals
  $\underset{R}\Longrightarrow.$ 
\end{proposition}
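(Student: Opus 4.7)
The proof splits into two directions, which I treat separately.

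For the easier direction ($\Leftarrow$), assume $R$ is a graph relabelling system and let $\grs \; = \; \underset{R}\Longrightarrow$. First I would verify that $\grs$ is a relabelling relation and is local: both follow directly from the definition of $(H,\eta) \underset{r,\varphi}{\Longrightarrow} (H,\eta')$, since such a step only modifies labels inside $B_H(\varphi(v_0))$ and leaves the underlying graph fixed. To check that $\grs$ is locally generated in the sense of Definition \ref{locgen}, I would take a step $(G,\lambda) \grs (G,\lambda')$ witnessed by rule $r$ at occurrence $\varphi$, put $v = \varphi(v_0)$, and given any isomorphism $\psi$ from $B_G(v)$ to a ball $B_H(w)$ with $\psi(v)=w$ in some labelled graph $(H,\eta)$ that matches $(G,\lambda)$ on $B_G(v)$, I would simply apply $r$ at the occurrence $\psi\circ\varphi$ in $(H,\eta)$: condition (ii) of Definition \ref{locgen} exactly says that outside the ball nothing changed, so $r$ is indeed applicable and produces the required $(H,\eta')$. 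Finally, recursivity of $\grs$ and of its set of irreducible graphs follows from recursivity of $R$ and of the set of preconditions.

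For the harder direction ($\Rightarrow$), assume $\grs$ is locally generated and both $\grs$ and $\mathrm{Irred}_{\grs}$ are recursive. The plan is to define $R$ as the set of all triples $r=(B,\lambda_r,\lambda_r')$ where $B$ is a star-graph and $(B,\lambda_r) \grs (B,\lambda_r')$, viewed as a relabelling step of the isolated ball $B$ itself. The locally-generated property, applied to the identity isomorphism $B_G(v)\to B_G(v)$, guarantees that for every $\grs$-step $(G,\lambda)\grs(G,\lambda')$ centred at $v$, the induced labellings of the star-graph $B_G(v)$ are themselves in relation by $\grs$; conversely, the same locally-generated condition lifts any such ball-level step to any larger labelled graph where the precondition occurs. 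Hence the relation $\underset{R}\Longrightarrow$ coincides with $\grs$.

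The main obstacle, as I see it, is establishing that $R$ (and the associated set of preconditions) is \emph{recursive}, as demanded by the definition of a graph relabelling system. Here I would use the hypothesis that $\grs$ is a recursive relation: the set of star-graphs over $L$ is itself recursive (since $L$ is recursive), so candidate triples $(B,\lambda_r,\lambda_r')$ can be enumerated, and membership in $R$ reduces to deciding whether $(B,\lambda_r)\grs(B,\lambda_r')$, which is decidable by assumption. The set of preconditions is then the recursive projection onto the first two components. Putting both directions together yields the equivalence stated in the proposition.
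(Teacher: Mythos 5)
Your construction of $R$ is exactly the paper's (the rules are the pairs of star-graph labellings related by \grs), and most of the argument is sound, but one step as written would fail: the recursivity of the set of preconditions. You assert that this set is ``the recursive projection onto the first two components'', yet the projection of a recursive set is in general only recursively enumerable, not recursive. To decide whether a labelled star-graph $(B,\lambda)$ is a precondition you must decide whether \emph{some} $\lambda'$ exists with $(B,\lambda)\grs(B,\lambda')$, and since the label alphabet $L$ may be infinite, a blind search over candidate labellings $\lambda'$ need not terminate when the answer is negative. This is precisely the role of the hypothesis --- which you never invoke in the forward direction --- that the set of \grs-irreducible graphs is recursive: a star-graph $(B,\lambda)$ is a precondition of some rule of $R$ exactly when it is not \grs-irreducible, and the paper decides membership in the precondition set by exactly this test. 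Without that substitution, your $R$ satisfies the recursivity requirement on the rule set but not the separate requirement that the set of preconditions be recursive, so it is not a graph relabelling system in the paper's sense.

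Everything else matches the paper's (very terse) proof: the same definition of $R$, the same use of Definition~\ref{locgen} to transfer a ball-level step $(B,\lambda)\grs(B,\lambda')$ to any occurrence of that labelled ball in a larger graph and back, and in the converse direction the composition of the occurrence $\varphi$ with an isomorphism of balls to verify that $\underset{R}{\Longrightarrow}$ is locally generated --- all of which the paper compresses into ``it is then straightforward to verify''. So the fix is local: replace the projection claim by the observation that the precondition set coincides (on star-graphs) with the complement of the set of \grs-irreducible graphs, which is recursive by hypothesis.
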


\begin{proof}
  Given a locally generated relabelling relation \grs, we have  to
  find a graph relabelling system $R$ that generates \grs.

  We define:
  \begin{eqnarray*}
    R &=& \{(B,\lambda,\lambda')\;\mid 
    \; B \mbox{ is a star-graph},\; (B,\lambda)\; \grs\; (B,\lambda')
    \}
  \end{eqnarray*}

First, $R$ is obviously recursive since \grs is. The set of
preconditions of $R$ is also recursive, since one can check wether a
precondition does not belong to the set of \grs-irreducible graphs.
It is then straightforward to  verify that $R$ generates 
exactly \grs from Definition~\ref{locgen}.
\end{proof}

In the following, we do not discriminate between a locally generated
relabelling relation and a graph relabelling system that generates
it. They, both, model distributed algorithms.

\subsubsection{Generic Rules.}

We explain here the convention under which we will
describe graph relabelling systems later. If the number of rules is
finite then we will describe all rules by their preconditions
 and relabellings. We will also
describe a family of rules by  a generic rule (``meta-rule''). 
In this case, we will consider a
generic star-graph of generic center $v_0$ and of generic set of
vertices $B(v_0)$. Within these conventions, we will refer 
 to a vertex $v$ of the star graph by writing 
$v\in B(v_0)$. 
If $\lambda(v)$ 
is the label of $v$ in the
precondition, then
$\lambda'(v)$  will be its label in the relabelling. We
will omit in the description labels that are not modified by the rule. This means that if $\lambda(v)$ is a 
label such that $\lambda'(v)$ is not explicitly described in the rule
for a given $v$, then  $\lambda'(v)=\lambda(v)$.
In all the examples of 
graph relabelling systems that we consider in this paper the edge labels are
never changed. 

We do not require relabelling rules to be antisymmetric, but obviously
a system with such rules would have some difficulties to terminate. Thus, in order to have
light preconditions for generic rules, we consider that a rule
(induced by a given generic rule) that would not modify any label in
the star-graph is not enabled. 

With these conventions, the only point we have to care about is to
verify that the set of graph relabelling rules and the set of preconditions
described by the generic rule are recursive.

\subsubsection{Example}
  {Our first example is a $(d+1)$-coloring  of regular graphs of 
    degree $d$.}
  \label{sec:colord}
  This example will allow us to use the above described conventions.
\begin{example}\label{colod}
  We consider the graph relabelling system
  $\mbox{\textsc{Colo}}_d$. The value of the label of a vertex
  $v$ is denoted by  $c(v)$. The ``colors'' used here are integers from
  $[1,d+1]$, all vertices are initially labeled by $0$. 
  The following generic rule means that if $v_0$ is labelled by 0, then
   $v_0$ is relabelled by the smallest value that does not occur as
    label of one of 
  its neighbours. The edge labels are not used in this example.
  \label{grs:dcolor}
  \begin{rrule}{$\mbox{\textsc{Colo}}_d$ }
    \ritem[$\mbox{\textsc{Colo}}_d$ ]{$(d+1)$-Coloring}{
    \item $c(v_0)=0$ 
      }{
    \item $c'(v_0):=\min\left([1,d+1]\setminus\{c(v)\mid v\in B(v_0),
        c(v)\neq 0\}\right)$ 
      }
  \end{rrule}
  The  figures below show an execution of   \textsc{Colo$_3$}.

  \newcommand{\sepcube}{
    \raisebox{2.2cm}{$\underset {\mbox{\textsc{Colo}}_3}  \Longrightarrow$ 
      }
    }

  \newcommand{\etiqinit}{$0$}

  \providecommand{\etiqa}{\etiqinit}
  \providecommand{\etiqb}{\etiqinit}
  \providecommand{\etiqc}{\etiqinit}
  \providecommand{\etiqd}{\etiqinit}
  \providecommand{\etiqe}{\etiqinit}
  \providecommand{\etiqf}{\etiqinit}
  \providecommand{\etiqg}{\etiqinit}
  \providecommand{\etiqh}{\etiqinit}

  The initial labelling is the following:
  \begin{center}
    \raisebox{0.25cm}{
      \input{cube.pstex_t}
      }
  \end{center}

  Two non-overlapping occurrences where a rule can be applied are indicated below:

  \begin{center}
    \input{cubeselect2.pstex_t}
  \end{center}
  
  A corresponding relabelling sequence is as below: 
  \noindent\begin{center}
    \input{cube.pstex_t}
    \sepcube
    \renewcommand{\etiqa}{{\bf 1}}
    \input{cube.pstex_t}

    \sepcube
    \renewcommand{\etiqg}{{\bf 1}}
    \input{cube.pstex_t}
  \end{center}
  \renewcommand{\etiqa}{{\bf 1}}
  \renewcommand{\etiqg}{{\bf 1}}

  The remaining part of the relabelling sequence is for instance: 

  \noindent
  \sepcube
  \renewcommand{\etiqh}{{\bf 2}}
  \input{cube.pstex_t}
  \sepcube 
  \renewcommand{\etiqd}{{\bf 3}}
  \input{cube.pstex_t}
  \sepcube
  \renewcommand{\etiqe}{{\bf 3}}
  \input{cube.pstex_t}
  \sepcube
  \renewcommand{\etiqb}{{\bf 2}}
  \input{cube.pstex_t}
  \sepcube
  \renewcommand{\etiqf}{{\bf 4}}
  \input{cube.pstex_t}
  \sepcube 
  \renewcommand{\etiqc}{{\bf 4}}
  \input{cube.pstex_t}
  \hspace{1cm}\strut

  One can note that the correctness of the algorithm follows
  from the fact that the set upon which the minimum is taken is never empty.
\end{example}

\subsection{Distributed Computations of Local Computations}
The notion of relabelling sequence defined above obviously
corresponds to a notion of {\em sequential} computation.  
Clearly, a locally generated relabelling relation  allows
parallel relabellings too, since non-overlapping balls may be relabelled
independently.  Thus we can define a distributed way of computing by
saying that two consecutive relabelling steps with disjoint supports
may be applied in any order (or concurrently).
More generally, any two relabelling sequences  such that  one
can be obtained from the other by 
exchanging successive concurrent steps, lead to the same result. 

Hence, our
notion of relabelling sequence
associated to a locally generated relabelling relation
may be regarded as a {\em
  serialization} \cite{Mazur1} of a distributed computation.  This
model is  asynchronous, in the sense that  several relabelling steps {\em may} be
done at the same time but we do not require that all of them have to
be performed.  In the sequel we will essentially handle sequential
relabelling sequences, but the reader should keep in mind that such
sequences may be done in parallel.

\subsection{Local Computations and Coverings}

We  now present the fundamental lemma connecting coverings and
locally generated  relabelling relations.  It states that whenever
$\mathbf G$ is a covering of $\mathbf H$, every relabelling step in 
$\mathbf H$ can
be lifted to a relabelling sequence in $\mathbf G$,  which is compatible with
the covering relation. It was first given in \cite{Angluin}.
\begin{lemma}[Lifting Lemma]\label{Rk_step}\label{lifting}
  Let $\RR$ be a locally generated relabelling relation and let
  $\mathbf G$ be a covering of $\mathbf H$  via $\gamma.$
  If  ${\mathbf H} \RR^* {\mathbf H'}$ then
  there exists $\mathbf G'$ such that ${\mathbf G}
  \RR^* {\mathbf G'}$ and $\mathbf G'$ is a covering of
  $\mathbf H'$ via $\gamma.$
\end{lemma}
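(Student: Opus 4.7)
The plan is to proceed by induction on the length of the relabelling sequence ${\mathbf H} \RR^* {\mathbf H'}$. The base case (length $0$) is trivial: take ${\mathbf G'} = {\mathbf G}$. For the inductive step, it suffices to handle a single relabelling step, so I may assume ${\mathbf H} \RR {\mathbf H'}$ and must produce a relabelling sequence ${\mathbf G} \RR^* {\mathbf G'}$ such that ${\mathbf G'}$ covers ${\mathbf H'}$ via the same $\gamma$.

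Since $\RR$ is locally generated, the step ${\mathbf H} \RR {\mathbf H'}$ modifies labels only inside some ball $B_H(w)$ for a vertex $w \in V(H)$. By Lemma~\ref{sheets}, the preimage $\gamma^{-1}(w)$ consists of exactly $q$ vertices $v_1,\ldots,v_q$ (the number of sheets of the covering), and by definition of a covering, the restriction of $\gamma$ to each $B_G(v_i)$ is an isomorphism onto $B_H(w)$ preserving labels. The key geometric fact, provided by Lemma~\ref{un}, is that the balls $B_G(v_1),\ldots,B_G(v_q)$ are pairwise disjoint. The plan is then to apply, one after the other, a relabelling at each $v_i$ that mirrors the step performed at $w$.

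More precisely, set ${\mathbf G}_0 = {\mathbf G}$ and, for $i=1,\ldots,q$, construct ${\mathbf G}_i$ from ${\mathbf G}_{i-1}$ by relabelling the ball $B_G(v_i)$ exactly as $B_H(w)$ is relabelled in the step ${\mathbf H} \RR {\mathbf H'}$. The fact that this is an admissible $\RR$-step follows from Definition~\ref{locgen}: at each stage, $B_{G_{i-1}}(v_i)$ is label-isomorphic to $B_H(w)$ (through $\gamma$) because the preceding steps touched only the disjoint balls $B_G(v_j)$ for $j<i$, so the local context at $v_i$ is unchanged and matches the precondition of the rule applied at $w$. The conclusion is to take ${\mathbf G'} = {\mathbf G}_q$.

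Finally, I must verify that ${\mathbf G'}$ covers ${\mathbf H'}$ via $\gamma$. The underlying graph is unchanged by any $\RR$-sequence (relabelling preserves $G$), so $\gamma$ is still a graph covering in the unlabelled sense. For labels: vertices and edges outside $\bigcup_i V(B_G(v_i)) \cup E(B_G(v_i))$ and $V(B_H(w)) \cup E(B_H(w))$ retain their labels on both sides, so $\gamma$ still preserves their labels; within each $B_G(v_i)$, the new labels agree with the new labels on $B_H(w)$ under $\gamma$ by construction. The main obstacle is precisely the disjointness argument ensuring that the $q$ local applications do not interfere; this is exactly what Lemmas~\ref{un} and~\ref{sheets} supply, together with the locally generated hypothesis encoded in Definition~\ref{locgen}.
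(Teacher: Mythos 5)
Your proof is correct and follows essentially the same route as the paper: reduce to a single relabelling step, use the disjointness of the preimage balls (Lemma~\ref{un}) and their label-isomorphism to $B_H(w)$ to replay the step on each sheet, and invoke the locally generated property to justify each replayed step. The paper's version is just a terser statement of the same argument; your added detail on why the intermediate graphs still satisfy the precondition at each $v_i$ is a welcome elaboration, not a deviation.
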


\begin{proof}
  It suffices to show the claim for the case ${\mathbf H} \RR
  {\mathbf H'}$. Suppose that the relabelling step changes labels in
  $B_{H}(v),$  for some vertex $v \in V(H)$. We may apply this
  relabelling step to each of the disjoint labelled balls of
  $\gamma^{-1}(B_{H}(v))$, 
  since they are isomorphic to $B_{H}(v)$. This yields 
  $\mathbf G'$ which satisfies the claim.
\end{proof}  

 This is depicted  in the following commutative diagram:
\newcommand{\kco}{\operatorname{covering}}
\begin{equation*}
  \begin{CD}
    {\mathbf G} @>>\grs^*> {\mathbf G'}\\
    @V{\kco}VV  @VV{\kco}V\\
    {\mathbf H} @>>\grs^*> {\mathbf H'}
  \end{CD}
\end{equation*}

\subsection{Local Computations and Quasi-coverings}

We
will see now a configuration where only relabelling chains of bounded
length can be simulated. The notion of quasi-coverings was first
introduced in \cite{MMW} to prove impossibility of termination
detection in some cases. However
the definition of quasi-coverings here differs slightly from \cite{MMW},
providing new and simplified proofs, e.g., for Lemma
\ref{quasilifting} and 
Lemma
\ref{techlemmaGSSP}. Here, the key parameter is the radius and not the
size of the quasi-covering.

\begin{figure}[htbp]
\begin{center}
\input{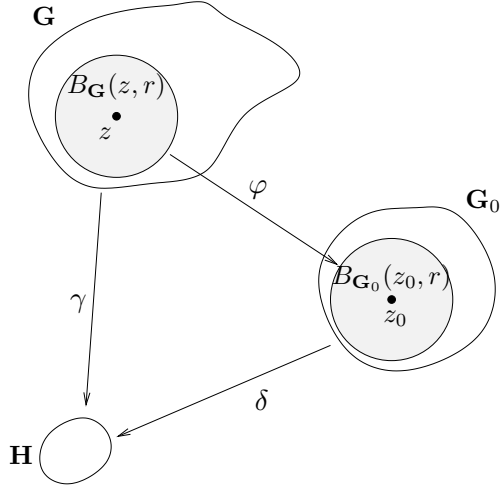}
\caption{\label{fig:qcov}$\gamma:\mathbf G\longrightarrow \bH$ is a
  quasi-covering of radius $r$ and associated covering
  $\delta:\bG_0\longrightarrow \bH$.}  
\end{center}
\end{figure}
\begin{definition} Let ${\mathbf G}, {\mathbf H}$ be two  
 labelled graphs and 
let $\gamma$ be a partial function  on $V(\bG)$ that assigns to each element of
a subset of $V(\bG)$ exactly one element of $V(\bH).$
   Then $\mathbf G$ is a {\em quasi-covering} of $\mathbf H$ 
via $\gamma$ of radius $r$ if there  exists a
  finite or infinite covering ${\mathbf G_0}$ of ${\mathbf H}$ via $\delta$, 
  vertices $z_0 \in V(G_0)$, $z \in V(G)$ such that:
\begin{enumerate}
\item $B_{\mathbf G}(z,r)$ is isomorphic via $\varphi$ to
    $B_{\mathbf G_0}(z_0,r)$,
\item the domain of definition of $\gamma$ contains $B_{G}(z,r),$ and 
\item $\gamma=\delta \circ \varphi$ when restricted to $V(B_{G}(z,r))$.
\end{enumerate}
$\card(V(B_{G}(z,r)))$ is called the {\em size} of the
  quasi-covering, and $z$ the {\em center}. The graph
  ${\mathbf G_0}$ is called the {\em associated covering} of the
  quasi-covering. See Figure~\ref{fig:qcov}.
\end{definition}
 Quasi-coverings have been introduced to study the problem of
the detection of the termination in \cite{MMW}.
The idea behind them is to enable the partial simulation of local
computations on a given graph  in a restricted area of a larger
graph. The restricted area where we can perform the simulation will
shrink while the number of simulated steps 
increases. 
The following lemma  makes precise how much the radius shrinks
when one step of simulation is performed:
\begin{lemma}[Quasi-Lifting Lemma]\label{quasilifting}
  Let $\RR$ be a locally generated relabelling relation and let
  $\mathbf G$ be a quasi-covering of $\mathbf H$  of radius $r$ via
  $\gamma.$ 
  Moreover, let ${\mathbf H} \RR {\mathbf H'}$.
  Then there exists $\mathbf G'$ such that ${\mathbf G}
  \RR^* {\mathbf G'}$ and $\mathbf G'$ is a quasi-covering of radius
  $r-2$ of $\mathbf H'.$
\end{lemma}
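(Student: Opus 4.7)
The plan is to lift $\bH \RR \bH'$ first through the associated covering and then transport the relevant portion of the lift back into $\bG$ within the zone where $\bG$ is a labelled copy of that covering. Unpacking the quasi-covering, fix the associated covering $\delta\colon {\mathbf G_0}\longrightarrow \bH$, the centres $z\in V(G)$ and $z_0\in V(G_0)$, and the labelled isomorphism $\varphi\colon B_{\bG}(z,r)\to B_{{\mathbf G_0}}(z_0,r)$ with $\gamma=\delta\circ\varphi$ on $B_G(z,r)$. The step $\bH\RR\bH'$ modifies labels inside some ball $B_H(v,1)$; using the Lifting Lemma (Lemma~\ref{lifting}) I would obtain ${\mathbf G_0}\RR^*{\mathbf G_0'}$ such that ${\mathbf G_0'}$ covers $\bH'$ via the same $\delta$, the lift being realised by firing the rule at every $w\in\delta^{-1}(v)$, whose $1$-balls are pairwise disjoint by Lemma~\ref{un}.

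Next I would define ${\mathbf G'}$ by applying the rule in $\bG$ at exactly the vertices $w'=\varphi^{-1}(w)$ for $w\in\delta^{-1}(v)\cap V(B_{G_0}(z_0,r-1))$. Each such $w'$ sits inside $B_G(z,r-1)$, so $B_G(w',1)\subseteq B_G(z,r)$ is labelled-isomorphic via $\varphi$ to $B_{G_0}(w,1)$; since the rule fires at $w$ in ${\mathbf G_0}$, its precondition is available at $w'$ in $\bG$, and because $\RR$ is locally generated (Definition~\ref{locgen}), firing it at $w'$ produces the very same label change as the lifted application at $w$. The selected $w'$ inherit pairwise disjoint $1$-balls from the disjointness in ${\mathbf G_0}$, so the corresponding rule applications in $\bG$ are non-overlapping and may be performed in any order, giving a genuine relabelling sequence $\bG\RR^*{\mathbf G'}$.

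To see that ${\mathbf G'}$ is a quasi-covering of $\bH'$ of radius $r-2$, I would use ${\mathbf G_0'}$ as the associated covering (via $\delta$) and the restriction of $\varphi$ to $B_G(z,r-2)$ as the new witnessing isomorphism. Since relabelling does not change the underlying graphs, only labels require verification. For any $u\in V(B_G(z,r-2))$, its label in ${\mathbf G'}$ can differ from its label in $\bG$ only because of a rule fired at some $w'$ with $d_G(w',u)\leq 1$, hence $d_G(w',z)\leq r-1$, which is precisely the set on which I fired the rule. Symmetrically, in ${\mathbf G_0'}$ the label of $\varphi(u)$ is changed only by rules applied at $w\in\delta^{-1}(v)$ with $d_{G_0}(w,z_0)\leq r-1$, and these correspond bijectively under $\varphi$ to the $w'$ I used. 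The restriction of $\varphi$ is therefore still a labelled isomorphism $B_{{\mathbf G'}}(z,r-2)\to B_{{\mathbf G_0'}}(z_0,r-2)$, so the quasi-covering relation is preserved with radius $r-2$.

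The main point to handle carefully is the exact loss of two units in the radius: one unit is lost because each simulated rule needs its full $1$-ball inside the isomorphism zone $B_G(z,r)$, forcing its centre to lie in $B_G(z,r-1)$; a second unit is lost because a vertex at distance $r-1$ from $z$ can still receive in ${\mathbf G_0'}$ a label change coming from a rule fired at distance $r$ from $z_0$, which is outside the set I am allowed to simulate, so label agreement can be claimed only up to distance $r-2$. All other aspects (disjointness of supports, preservation of the underlying graphs, and commutation with $\delta$) follow immediately from the coverings machinery already developed in the preceding sections.
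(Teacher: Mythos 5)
Your proposal is correct and follows essentially the same route as the paper's proof: lift the step to the associated covering via the Lifting Lemma, replay the rule at the preimage balls whose $1$-neighbourhoods lie inside the isomorphism zone $B_{\bG}(z,r)$, and check label agreement on $B_{\bG}(z,r-2)$. Your version is in fact more detailed than the paper's (which is quite terse on why exactly two units of radius are lost), but the decomposition, the use of Lemmas \ref{un} and \ref{lifting}, and the final verification are the same.
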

\begin{proof}
  Let $\bG_0$ be the associated covering and $z$ be the center of the
  ball of radius $r$. Suppose now the relabelling step ${\mathbf H}
  \RR {\mathbf H'}$ applies rule $R_0$ and modifies labels in $B_{\bH}(v),$
  for a given $v \in 
  V(\bH)$. The rule $R_0$ can also be applied to all the balls 
  $\delta^{-1}(B_{\bH}(v))$ yielding $\bG_0'$ and $\delta'$. It applied
  also to the balls  $\gamma^{-1}(B_{\bH}(v)))$  that are included in
  $B_{\bG}(z,r)$, since they are also isomorphic to $B_{\bH}(v)$. We get
  $\mathbf G'$ and $\gamma'$ satisfying the quasi-covering
  properties with radius $r-2$: consider $w$ in  $B_{\mathbf
  G'}(z,r-2)$: since any ball containing  $w$  is included
  in $B_{\bG}(z,r)$, $w$ and  $\gamma'(w)$ have the same label.   
\end{proof}

 This is depicted  in the following commutative diagram:
 \newcommand{\qco}{\operatorname{\mbox{
       \begin{tabular}[c]{c}
quasi-covering\\
of radius {$r$}
       \end{tabular}
}}}
 \newcommand{\qcoo}{\operatorname{\mbox{
       \begin{tabular}[c]{c}
quasi-covering\\
of radius {$r-2$}
       \end{tabular}
}}}
 \begin{equation*}
 \begin{CD}
\bG%
 @>>{\grs^*}> 
\bG'%
\\
 @V{\qco}VV @VV{\qcoo}V\\
\bH%
 @>>\grs>
\bH'%
\end{CD}
\end{equation*}

Using notation of this subsection:
\begin{definition}
  We define the {\em number of sheets $q$} of a quasi-covering to be the
  minimal cardinality of the sets of preimages 
  of vertices of $\mathbf  H$ which are in the ball:
 $$q = \min_{v\in V({\mathbf H})}|\{w \in\delta^{-1}(v)| 
             B_{{\mathbf K}}(w,1)\subset B_{K}(z_0,r)\}|.$$
\end{definition}

With this definition, the notion of number of sheets is equivalent in the case of coverings. 

\begin{definition}
  A quasi-covering is  {\em strict} if 
  $B_{\mathbf G}(z,r-1)\neq \mathbf G.$
\end{definition}

\begin{remark}\label{nonstrictqcovarecov}
  A non strict quasi-covering is simply a covering.
\end{remark}

\begin{remark}
With the same notation, if $\mathbf G$ is a strict quasi-covering of
$\mathbf H$ of radius $r$  then $|B_{\mathbf G}(z,r)|\geq r.$
\end{remark}
We have then the following technical lemma:
\begin{lemma}\label{techlemma}
  Let $\mathbf G$ be a strict quasi-covering of $\mathbf H$ of radius $r$ via
  $\gamma$. For any $q\in\N$, if $r \geq q|V(H)|$ then $\gamma$
  has at least $q$ sheets. 
\end{lemma}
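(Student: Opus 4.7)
The plan is to show, for every $v \in V(\mathbf H)$, that $N_v(r-1) := |\delta^{-1}(v) \cap B_{\mathbf{G_0}}(z_0, r-1)|$ is at least $q$; this gives the claim, since the condition $B_{\mathbf{G_0}}(w,1) \subset B_{\mathbf{G_0}}(z_0,r)$ is equivalent to $d_{\mathbf{G_0}}(z_0, w) \leq r-1$. Set $m = |V(\mathbf H)|$.

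The first step extracts a large ball in $\mathbf{G_0}$ from strictness. The isomorphism $\varphi\colon B_{\mathbf G}(z,r) \to B_{\mathbf{G_0}}(z_0,r)$ (sending center to center) preserves distances from the center, because any shortest path from the center of a ball of radius $r$ stays inside the ball. I claim $B_{\mathbf{G_0}}(z_0, r-1) \neq \mathbf{G_0}$: otherwise $B_{\mathbf{G_0}}(z_0,r) = \mathbf{G_0}$, so every $u \in V(B_{\mathbf G}(z,r))$ would satisfy $d_{\mathbf G}(z,u) = d_{\mathbf{G_0}}(z_0,\varphi(u)) \leq r-1$, forcing $B_{\mathbf G}(z,r) = B_{\mathbf G}(z,r-1)$; but strictness together with connectedness of $\mathbf G$ produces a vertex at distance exactly $r$ from $z$ in $B_{\mathbf G}(z,r) \setminus B_{\mathbf G}(z,r-1)$, a contradiction. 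Hence $\mathbf{G_0}$ contains a shortest path $z_0 = u_0, u_1, \ldots, u_r$ of length $r$, whose initial segment $u_0, \ldots, u_{r-m}$ yields $|B_{\mathbf{G_0}}(z_0, r-m)| \geq r-m+1 \geq (q-1)m+1$ by $r \geq qm$.

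The second step is a pigeonhole followed by a fiber-to-fiber transport via path lifting. Since $\delta$ partitions $B_{\mathbf{G_0}}(z_0, r-m)$ into at most $m$ fibers, some $v_0 \in V(\mathbf H)$ satisfies $N_{v_0}(r-m) \geq \lceil ((q-1)m+1)/m \rceil = q$. To propagate this bound to any $v' \in V(\mathbf H)$, fix a shortest path $\pi$ in $\mathbf H$ from $v_0$ to $v'$, of length $d = d_{\mathbf H}(v_0,v') \leq m-1$. For each $w \in \delta^{-1}(v_0)$, the unique path-lifting property of the covering $\delta$ produces a unique lift of $\pi$ starting at $w$, ending at some $w' \in \delta^{-1}(v')$ with $d_{\mathbf{G_0}}(w,w') \leq d$. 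The assignment $w \mapsto w'$ is injective, because reverse-lifting $\pi^{-1}$ from $w'$ uniquely recovers $w$. Moreover, $w \in B_{\mathbf{G_0}}(z_0, r-m)$ forces $w' \in B_{\mathbf{G_0}}(z_0, r-m+d) \subseteq B_{\mathbf{G_0}}(z_0, r-1)$. We conclude $N_{v'}(r-1) \geq N_{v_0}(r-m) \geq q$ for every $v'$, which is the desired lower bound on the number of sheets.

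The main delicacy lies in Step 1, where one must exploit that $\varphi$ is an isomorphism of pointed radius-$r$ balls in order to translate strictness on the $\mathbf G$-side into non-triviality of the corresponding ball on the $\mathbf{G_0}$-side. The pigeonhole count and the fiber-to-fiber injection via path lifting in Step 2 are then routine consequences of the covering-theoretic machinery already established.
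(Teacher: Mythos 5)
Your proof is correct, and its second half takes a genuinely different route from the paper's. Both arguments open the same way: strictness forces the ball $B_{\mathbf G_0}(z_0,r)$ in the associated covering to contain a geodesic of length $r$ issued from $z_0$ (the paper phrases this as $|B_{\mathbf G}(z,r)|\ge r$, hence $|V(G_0)|\ge q|V(H)|$; your transfer of strictness through the pointed isomorphism $\varphi$ is a more careful rendering of the same observation). The divergence is in showing that \emph{every} fibre $\delta^{-1}(v)$ meets $B_{\mathbf G_0}(z_0,r-1)$ in at least $q$ points. The paper invokes Reidemeister's structure theorem (Theorem~\ref{reid}): it selects $q$ lifted copies of a spanning tree of $\mathbf H$ whose union is connected and bounds the diameter of that union, so that all $q$ copies of every vertex sit inside the ball. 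You instead apply the pigeonhole principle to the $r-m+1$ vertices of the initial segment of the geodesic to find a \emph{single} fibre with $q$ points within distance $r-m$ of $z_0$, and then spread this count to every other fibre by unique path lifting along a geodesic of $\mathbf H$ of length at most $m-1$ --- an injective map that displaces points by at most $m-1$. Your route avoids Theorem~\ref{reid} entirely, relying only on the local bijectivity of $\delta$, and it is more explicit than the paper about why the $q$ preimages land at distance at most $r-1$ from $z_0$, which is what the definition of the number of sheets of a quasi-covering actually requires. The paper's route reaches the same bound faster by leaning on machinery already in place. One small imprecision on your side: the asserted equivalence $B_{\mathbf G_0}(w,1)\subset B_{\mathbf G_0}(z_0,r)\iff d(z_0,w)\le r-1$ holds only in the direction you use (a vertex at distance exactly $r$ whose neighbours all lie in $B_{\mathbf G_0}(z_0,r)$ would also satisfy the containment), but since you only need a lower bound on the count, this is harmless.
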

\begin{proof}
  Note $\mathbf K$ the associated covering. 
  The quasi-covering being strict, we have that
  $|B_{\mathbf G}(z,r)|\geq r \geq q|V(H)|$, hence $|V(K)|\geq
  q|V(H)|$. We deduce from Lemma~\ref{sheets} that $\mathbf K$ has at least
  $q$ sheets.  

  Now, consider a spanning tree $T$ of $\mathbf H$ rooted on
  $\gamma(z)$. Note $T_1$ the lifting of $T$ rooted on $z_0$. By
  Theorem~\ref{reid}, there is $q-1$ distinct lifted spanning trees
  $T_2,\dots,T_q$ such that the subgraph induced by
  $T_1\cup\dots\cup T_q$ is connected.
  As $T$ has a diameter at most $|V(H)|-1$, we have
  that $T_1\cup\dots\cup T_q \subset B_{K}(z_0,q|V(H)|)$. That
  means that every vertex of $\mathbf H$ has at least $q$ preimages in
  $B_{K}(z_0,r)$, hence in $B_G(z,r)$.
\end{proof}

The following expresses a link of the radius and of the size of the
quasi-covering of a given graph.

\begin{lemma}\label{qcov_sizeandradius} Let $\mathbf H$ be a graph
  with maximal degree $d$. Then for all quasi-covering of $\mathbf H$
  of size $s$ and radius $r$, we have
  $$ s\leq (d+1)^{r} .$$
\end{lemma}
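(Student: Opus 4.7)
The plan is to reduce the problem to bounding the size of a ball of radius $r$ in a graph of maximal degree at most $d$, using the fact that coverings are degree-preserving.

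First, I would unfold the definition: if $\mathbf{G}$ is a quasi-covering of $\mathbf{H}$ of radius $r$, there exists a covering $\delta : \mathbf{G}_0 \longrightarrow \mathbf{H}$ and a vertex $z_0 \in V(G_0)$ such that $B_{\mathbf{G}}(z,r)$ is isomorphic to $B_{\mathbf{G}_0}(z_0,r)$. So the size of the quasi-covering satisfies $s = |V(B_G(z,r))| = |V(B_{G_0}(z_0,r))|$, and it suffices to bound the right-hand side.

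Next, because $\delta$ is a covering, it is locally bijective: for every $v \in V(G_0)$, the restriction of $\delta$ to $B_{G_0}(v)$ is a bijection onto $B_{H}(\delta(v))$, so $\deg_{G_0}(v) = \deg_H(\delta(v)) \leq d$. Hence the maximum degree in $G_0$ is at most $d$, and in particular so is the maximum degree within $B_{G_0}(z_0,r)$.

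For the counting step, let $S_k = \{v \in V(G_0) \mid d_{G_0}(z_0,v)=k\}$. Every vertex in $S_k$ is adjacent (along a shortest path) to some vertex of $S_{k-1}$, and each vertex of $S_{k-1}$ has at most $d$ neighbours, so $|S_k| \leq d \cdot |S_{k-1}|$. By induction on $k$, $|S_k| \leq d^k$, and therefore
\begin{equation*}
|V(B_{G_0}(z_0,r))| = \sum_{k=0}^{r}|S_k| \leq \sum_{k=0}^{r} d^k \leq \sum_{k=0}^{r} \binom{r}{k} d^k = (d+1)^r,
\end{equation*}
where the last inequality uses $\binom{r}{k}\geq 1$ for $0 \leq k \leq r$. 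Combined with Step~1, this gives $s \leq (d+1)^r$, which is the desired bound.

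No step looks genuinely delicate here: the only thing to be careful about is to invoke the local bijectivity of the covering $\delta$ correctly so that the degree bound on $\mathbf{H}$ transfers to $\mathbf{G}_0$ (and hence to the ball in $\mathbf{G}$). The counting inside a ball is then entirely standard.
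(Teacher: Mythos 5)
Your proof is correct and takes essentially the same route as the paper's: transfer the degree bound from $\mathbf H$ to the ball via the local bijectivity of the associated covering, then perform the standard ball-growth count. The paper's version is merely terser, using the recursion $|B(z,i+1)|\leq (d+1)|B(z,i)|$ directly instead of summing sphere sizes and invoking the binomial bound.
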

\begin{proof}
  Let $\mathbf G$ be a quasi-covering of $\mathbf H$. Let $z$ be the
  center, and $r$ the radius. $B_{\mathbf G}(z,r)$ is then a subgraph of
  maximal degree $d$. By induction, remarking that $|B(z,i+1)\backslash
  B(z,i)|\leq d|B(z,i)|$, we obtain that any ball of
  radius $r$ and maximal degree $d$ has a size at most $(d+1)^{r}$.
\end{proof}
This bound is obviously not optimal but sufficient for our
purpose. Remarking that a $q$-sheeted quasi-covering of a given graph
$\mathbf H$ has a size greater than $q|V(\mathbf H)|$, we get, from
these two lemmas, a complete relation between the radius and  the
number of sheets of a quasi-covering.

\subsection{Paths and Universal Coverings}
\label{univcovering} 
\newcommand{\wG}{{\ensuremath{\widehat{\bG}}}\xspace}
\newcommand{\wg}{{\ensuremath{\widehat{\pi}}}\xspace}

 A path is a sequence of neighbouring vertices in a graph.
\begin{definition}
  A {\em path} from   $u_0$ to $u_n$ in a  graph G is a sequence
  $\Gamma = (u_0, \dots, u_n)$ such   that for all $i$, 
  \begin{theoenum}
  \item $\{u_i,u_{i+1}\}\in E(G)$.
  \end{theoenum}

  Furthermore, if, for all $i$, 
  \begin{theoenum}
    \setcounter{theoenumcounter}{1}
      \item $u_{i-1}\neq u_{i+1}$ ,
  \end{theoenum} 
  we say that $\Gamma$ is a {\em non stuttering path}\cite{BVfibrations}.
\end{definition}
We denote by $\Gamma_G(u)$ the set of paths in $G$ starting from vertex
$u$.  
For any path $\Gamma=(u_0,\dots,u_n)$ and any vertex $v$, we note
$\Gamma v$ the path $(u_0,\dots,u_n,v)$.
\begin{definition}
  Let \bG be a (labelled) graph. Let $u$ be a vertex of \bG. We denote by $\wG(u)$
  the graph of non stuttering paths starting from $u$:
  \begin{eqnarray*}
    V(\wG(u))&=&\{\Gamma\in\Gamma_\bG(u)\mid \Gamma \mbox{ is non stuttering}\},\\
    E(\wG(u))&=&\{\{\Gamma,\Gamma'\}\mid \Gamma,\Gamma' \in V(\wG(u)),
                 \mbox{ and there exists a vertex $v$}\\
    &&\mbox{ of \bG such that } \Gamma'=\Gamma v\}.
  \end{eqnarray*}
  We denote by $\wg$ the projection of $\wG(u)$ on \bG that maps  any path to
  its final vertex.
\end{definition}
\newcommand{\boule}{B_\wG(z_\wG,3)}
\begin{figure}%
  \begin{center}
    \input{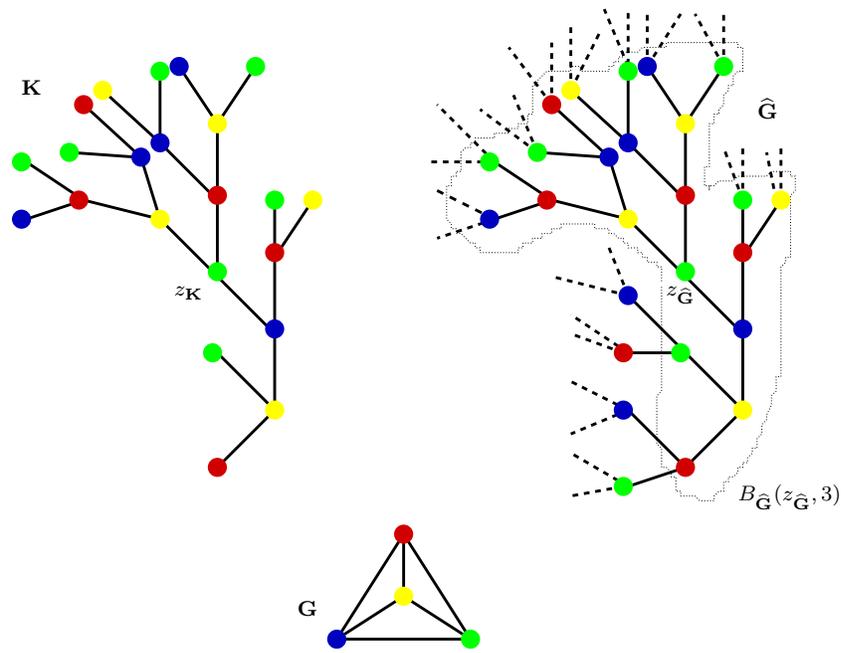}
    \caption{\bK is a quasi-covering of radius 3 of \bG, obtained by
      truncation of \wG}
    \label{fig:troncatureuniv}
  \end{center}
\end{figure}
\begin{proposition}
 The graph $\wG(u)$ is a covering of \bG via the projection \wg.
\end{proposition}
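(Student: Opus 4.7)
The plan is to verify the three conditions that define a graph covering: $\wg$ is a surjective homomorphism whose restriction to every ball of radius $1$ is a bijection.

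First, I would note that vertex labels on $\wG(u)$ are (implicitly) defined as the pullback under $\wg$, so that label preservation is automatic; I will focus on the structural part. For the homomorphism property, consider an edge $\{\Gamma,\Gamma'\}$ in $\wG(u)$. By definition there is a vertex $v$ of $\bG$ with $\Gamma'=\Gamma v$, so $\wg(\Gamma')=v$ while $\wg(\Gamma)$ is the last vertex of $\Gamma$, which is adjacent to $v$ in $\bG$ by definition of a path. Hence $\{\wg(\Gamma),\wg(\Gamma')\}\in E(\bG)$. Surjectivity of $\wg$ follows from the connectedness of $\bG$: any vertex $w$ of $\bG$ is the last vertex of some non-stuttering path from $u$, obtained by shortening any walk from $u$ to $w$ that backtracks.

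The heart of the argument is the local bijectivity. Fix a path $\Gamma=(u_0,\dots,u_n)\in V(\wG(u))$, with final vertex $u_n=\wg(\Gamma)$. I would explicitly describe the neighbours of $\Gamma$ in $\wG(u)$: they are precisely the one-step extensions $\Gamma v$ with $v\in N_\bG(u_n)$ and $v\neq u_{n-1}$ (forced by the non-stuttering condition, vacuously if $n=0$), together with the predecessor path $(u_0,\dots,u_{n-1})$ when $n\geq 1$. Then I would check what $\wg$ does on $B_{\wG(u)}(\Gamma)$: the center goes to $u_n$, each extension $\Gamma v$ goes to $v$, and the predecessor (when present) goes to $u_{n-1}$. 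The union of these images is exactly $\{u_n\}\cup N_\bG(u_n) = V(B_\bG(u_n))$, giving surjectivity onto the target ball.

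For injectivity, I would observe that distinct extensions $\Gamma v \neq \Gamma v'$ obviously go to distinct vertices $v,v'$, and that the predecessor image $u_{n-1}$ cannot coincide with any extension image since the extensions are forbidden to use $u_{n-1}$; nor can it coincide with $u_n$, since $\bG$ has no self-loop. This establishes the bijection between $B_{\wG(u)}(\Gamma)$ and $B_\bG(\wg(\Gamma))$, and hence that $\wG(u)$ covers $\bG$. No step is really hard; the only one requiring care is spelling out the neighbourhood of $\Gamma$ in $\wG(u)$ so that the non-stuttering condition is seen to prevent the predecessor path from colliding with any forward extension.
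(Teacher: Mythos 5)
Your proposal is correct and follows essentially the same route as the paper's proof: identify the neighbours of a non-empty path $\Gamma$ as its predecessor together with the non-stuttering one-step extensions, and observe that $\wg$ maps this neighbourhood bijectively onto the ball around the final vertex. You merely spell out the surjectivity of $\wg$ and the injectivity check in more detail than the paper, which states the ball isomorphism directly.
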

\begin{proof}
  Let $v$ a vertex of \bG.  Let a path 
  $\Gamma=(u_0,\dots,u_n)$ with $u_0=u$ and $u_n=v$. Suppose that
  $\Gamma$ is not the empty path. By construction,
  $\Gamma$ has as neighbours $(u_0,\dots,u_{n-1})$. Being non
  stuttering, it also has as neighbours the paths of the set
  $\{\Gamma w \mid
  w\in N(v), w\neq u_{n-1}\}$. Hence \wg defines an
  isomorphism from $B_{\wG(u)}(\Gamma)$ to $B_\bG(v)$. 
  If  $\Gamma$ is the empty path, the proof is obvious.
\end{proof}

For all vertices $u$,$v$, $\wG(u)$ is isomorphic to $\wG(v)$
\cite{BVfibrations}. We shall denote by \wG this graph defined up to
isomorphism.  We say that \wG is the {\em universal covering} of \bG.
\begin{remark}\label{troncatureuniv}
 This (possibly infinite) tree provides numerous examples of
 quasi-coverings. 
 For a given graph \bG, by truncation of the universal covering \wG to
 a ball of given radius, we obtain quasi-coverings of \bG. See
 Fig.~\ref{fig:troncatureuniv}. 

 The Reidemeister Theorem (Th.~\ref{reid}) is another tool to easily build
 quasi-covering of arbitrary radius.
\end{remark}

\subsection{Extension of Locally Generated Relabelling Relations}

In this subsection, we show how the properties of a graph relabelling
relation on a family \gfam can be naturally extended to the family of
graphs that are covered by a graph of \gfam.

\macro{\covgfam}{\widehat{\gfam}}

\begin{definition}
  Let \gfam be a graph family. We note \covgfam the family of graphs
  that are covered by a graph of \gfam. 
$$\covgfam=\{\bH\mid \exists\bG\in\gfam, \bG\mbox{ is a covering of }
\bH\}.$$ 
\end{definition}

Note that \gfam is a subset of \covgfam. The first easy property is
that if a \grs is noetherian on \gfam, it is also noetherian on
\covgfam.

\begin{lemma}\label{covgfamnoetherian}
  Let \grs be a relabelling system. If \grs is notetherian on \gfam, it
  is also noetherian on \covgfam.  
\end{lemma}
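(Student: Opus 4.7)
The plan is to argue by contradiction using the Lifting Lemma (Lemma~\ref{lifting}). Suppose that \grs is not noetherian on \covgfam. Then some $\bH\in\covgfam$ admits an infinite relabelling sequence $\bH=\bH_0\grs\bH_1\grs\bH_2\grs\cdots$. By definition of \covgfam there exists $\bG\in\gfam$ together with a covering $\gamma_0\colon\bG\longrightarrow\bH_0$. The goal is to lift the whole infinite sequence through $\gamma_0$ into an infinite \grs-sequence starting from $\bG$, which would contradict the assumption that \grs is noetherian on \gfam.

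I would build this lifted sequence inductively. Set $\bG_0=\bG$ and assume that $\bG_i$ has been constructed together with a covering $\gamma_i\colon\bG_i\longrightarrow\bH_i$. Applying Lemma~\ref{lifting} to the single step $\bH_i\grs\bH_{i+1}$ yields some $\bG_{i+1}$ with $\bG_i\grs^*\bG_{i+1}$, together with a covering $\gamma_{i+1}\colon\bG_{i+1}\longrightarrow\bH_{i+1}$. The key point to verify is that each finite sequence $\bG_i\grs^*\bG_{i+1}$ contains at least one actual relabelling step: this is immediate from the construction in the proof of Lemma~\ref{lifting}, because the step $\bH_i\grs\bH_{i+1}$ modifies some ball $B_{\bH_i}(v)$ and the lifted sequence fires the corresponding rule on each of the disjoint balls in $\gamma_i^{-1}(B_{\bH_i}(v))$, a set that is nonempty by surjectivity of $\gamma_i$.

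Concatenating these nonempty finite sequences produces an infinite \grs-sequence starting from $\bG\in\gfam$, contradicting the hypothesis that \grs is noetherian on \gfam. The argument is short and the only delicate point is checking that every lifting is strictly productive; no serious obstacle is expected beyond a careful inspection of the proof of the Lifting Lemma.
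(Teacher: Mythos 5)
Your proof is correct and follows essentially the same route as the paper, which simply invokes the Lifting Lemma to turn an infinite chain on $\bH\in\covgfam$ into an infinite chain on a covering $\bG\in\gfam$ and derives a contradiction. Your added care in checking that each lifted segment contains at least one genuine step (via surjectivity of the covering) is a detail the paper leaves implicit, but it is the same argument.
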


\begin{proof}
  Suppose there is an infinite relabelling chain on
  $\bH\in\covgfam$. Note \bG a graph in \gfam that is a covering of
  \bH. By the Lifing Lemma, we get an infinite relabelling chain on
  \bG. Hence a contradiction.
\end{proof}

\begin{remark}\label{covgfamnotrecursive}
  The closure under covering  of a recursive graph family is not
  necessarily recursive. Consider the following family 
  \begin{eqnarray*}
    \gfam_c&=\{G\mid &G \mbox{ is a ring and there exists } p,i,m\in\N
    \mbox{ such that }\\
    && p^m \mbox{ is the size of }G,\\
    && p \mbox{ is the $i$-th prime number},\\
    && \mbox{Turing Machine number $i$ has halted before step } m\}.
  \end{eqnarray*}
  The family $\gfam_c$ is obviously recursive and ${\covgfam_c}$ is 
  obviously non recursive: it is straighforward to see that deciding
  if a ring of prime size can be lifted in $\gfam_c$ corresponds to
  the Halting Problem for Turing Machines. 
\end{remark}

\section{Fundamental Algorithms}

In this section, we present our two fundamental algorithms.

\subsection{ Mazurkiewicz' Enumeration Algorithm}
\newcommand{\mk}{\ensuremath{{\mathcal{M}}}\xspace}
\newcommand{\NN}{\mathcal{N}}
\newcommand{\gse}{\rho(\mathbf G)}

A distributed enumeration algorithm on a graph $\mathbf G$ is a distributed
algorithm such that the result of any computation is a labelling of
the vertices that is a bijection from $V(G)$ to
$\{1,2,\dots,|V(G)|\}$. In particular, an enumeration of the vertices
where vertices know whether the algorithm has terminated solves the
election problem. In \cite{MazurEnum} Mazurkiewicz presents a distributed
enumeration algorithm for covering-minimal (non-ambiguous) graphs. 

The computation model in \cite{MazurEnum} consists
exactly in relabelling balls of radius $1$ and the 
initial  graph is unlabelled.

Mazurkiewicz' algorithm will be
denoted $\mk$. By abuse of language we still speak of an enumeration algorithm, 
even when it is applied to ambiguous graphs (for which
no enumeration algorithm exists, \cite{MazurEnum}). The  final labellings
that are incorrect 
from the enumeration point of view have interesting properties in the
 context of local computation. Namely, they determine a graph that is
 covered by the input graph.

In the following we describe 
Mazurkiewicz' algorithm including its extension to labelled
graphs.

\subsubsection{Enumeration Algorithm.}
We first give a general description of the algorithm $\mk$
applied to a graph $\mathbf G.$  
Let ${\mathbf G}=(G,\lambda)$ and consider  a vertex $v_0$ of $G,$ and
the set $\{v_1,...,v_d\}$ of neighbours of $v_0.$

The label of the vertex $v_0$ used by $\mk$
is the pair $(\lambda(v_0),c(v_0))$ 
where $c(v_0)$ is a triple 
$(n(v_0),N(v_0),M(v_0))$ representing the following information 
obtained during the computation (formal definitions are given below):
\begin{itemize}
\item $n(v_0) \in \N$ is the {\em number} of the vertex $v_0$ computed by the
  algorithm,
\item $N(v_0)\in\NN$ is the {\em local view} of $v_0,$ and it is either 
  empty or a family of triples defined by:
  $$
  \{(n(v_i),\lambda(v_i),\lambda(\{v_0,v_i\})) | 1\leq i \leq d\}
  ,$$
\item $M(v_0)\subseteq L\times \N\times \NN$ is the {\em mailbox} of $v_0$ and
  contains the whole information received by $v_0$ at any step of the
  computation. %
\end{itemize}
Each vertex $v$ attempts to get its
own number $n(v)$, which will be an integer between 1 and $|V(G)|$. A vertex
chooses a number and broadcasts it 
together with its label and its 
labelled neighbourhood all over the network. If a
vertex $u$ discovers the existence of another vertex $v$ with the same
number, then it compares its label and its 
local view, \ie, its number-labelled
ball, with the local view of its rival $v$. If the label of $v$ or the 
local view of $v$ is ``stronger'', then $u$ chooses another
number.  Each new number, with its local view, 
is broadcast again over the network.  At the
end of the computation it is not guaranteed that every vertex has a
unique number, unless the graph is covering-minimal.  However, all vertices
with the same number will have the same label and the same 
local view.%

\par

The crucial property of the algorithm is based on a total order on
local views such that the local  view of any vertex
cannot decrease during the computation. 
We assume for the rest of this paper that the set of labels $L$ is  totally ordered
by $<_L.$ 
Consider a  vertex $v_0$ with neighbourhood $\{v_1,...,v_d\}$ and assume that: 
\begin{itemize}
\item $n(v_1)\geq n(v_2)\geq ...\geq n(v_d),$
\item if $n(v_i)=n(v_{i+1})$ then $\lambda(v_i)\geq_L \lambda(v_{i+1}),$
\item if $n(v_i)=n(v_{i+1})$ and  $\lambda(v_i)= \lambda(v_{i+1})$
  then $\lambda(\{v_0,v_i\})\geq_L\lambda(\{v_0,v_{i+1}\}).$
\end{itemize}
Then the local view $N(v)$ is the $d$-tuple
$$((n(v_1),\lambda(v_1),\lambda(\{v_0,v_1\})),\dots,(n(v_d),\lambda(v_d),\lambda(\{v_0,v_d\}))).$$
Let $\NN_>$ be the set  of all such ordered tuples.
We define a total order
$\prec $ on $\NN_>$ by comparing the numbers, then the vertex labels 
and finally the edge labels. Formally, for two elements
$$((n_1,l_1,e_1),...,(n_d,l_d,e_d))\,
\mbox{and} \,((n_1',l_1',e_1'),...,(n_{d'}',l_{d'}',e_{d'}'))$$ 
of $\NN_>$
we define
$$
((n_1',l_1',e_1'),...,(n_{d'}',l_{d'}',e_{d'}'))
\prec 
((n_1,l_1,e_1),...,(n_d,l_d,e_d))
$$ if one of the following conditions holds:
\begin{enumerate}
\item $n_1=n'_1,...,n_{i-1}=n'_{i-1}$ and
  $n'_i < n_i$ for some $i$,
\item $d'<d$ and $n_1=n'_1,...,n_{d'}=n'_{d'}$,
\item $d=d'$, $n_1=n'_1,...,n_{d}=n'_{d}$ and 
  $l_1=l'_1,...,l_{i-1}=l'_{i-1}$ and
  $l_i'<_Ll_i$ for some $i$,
\item $d=d'$ and $n_1=n'_1,...,n_{d}=n'_{d}$ and
  $l_1=l'_1,...,l_{d}=l'_{d}$ and 
  $e_1=e'_1,...,e_{i-1}=e'_{i-1}$ and
  $e_i'<_Le_i$ for some $i$.
\end{enumerate}

If $N(u)\prec N(v)$, then we say that
the local view $N(v)$ of $v$ is stronger than the one of $u.$ The
order $\prec$ is a total order on $\NN=\NN_>\cup\{\emptyset\},$
with, by definition,: ${\emptyset}\prec N$ for every $N\in \NN_>.$

\medskip
We  now describe the algorithm through a graph
relabelling system.
The initial labelling of the vertex $v_0$ is 
$(\lambda(v_0),(0,\emptyset,\emptyset)).$

The rules are described below for a given ball
$B(v_0)$ with center $v_0$. The vertices $v$ of $B(v_0)$ have labels
$(\lambda(v),(n(v),N(v),M(v)))$. The labels obtained after applying a rule are
$(\lambda(v),(n'(v),N'(v),M'(v)))$.  
We recall that we omit labels that are 
unchanged. 

\begin{rrule}{\mk--}
  \ritem{Diffusion rule\label{dif_rule}}{
  \item There exists $v \in B(v_0)$ such that $M(v)\neq M(v_0)$.
    }{
  \item For all $v\in B(v_0)$, $M'(v):=\mathop{\bigcup}\limits_{w\in
      B(v_0)}M(w)$.
    }
  \ritem{Renaming rule\label{relab_rule}}{
  \item For all $v\in B(v_0), M(v)=M(v_0)$.
  \item $(n(v_0)=0)$ or \\$( n(v_0)>0
    \mbox{ and there exists }(l,n(v_0),N)\in M(v_0)
    \mbox{ such that }\\ (\lambda(v_0)< l)\mbox{ or }((\lambda(v_0)=l)
    \mbox{ and }  (N(v_0)\prec N)))$.
    }{
  \item $n'(v_0) = 1+\max \{n\in\N\mid (l,n,N)\in
    M(v_0)\,\,\text{for some}\,\, l,N \}$.
  \item For every $v\in B(v_0)$, $N'(v)$ is obtained from $N(v)$
    by replacing the value of $n(v_0)$  by $n'(v_0).$ 
  \item For every $v\in B(v_0),$ the mailbox contents $M(v)$
    changes to\\ 
    $M'(v) = M(v)\cup
    \{(\lambda(w),n'(w),N'(w)) | w \in B(v_0)\}$.
    }
\end{rrule}

\subsection{Properties of  Mazurkiewicz' Algorithm}

In order to make the paper self-contained, 
we present a complete proof of  the correctness of Mazurkiewicz'
algorithm in our framework following the  ideas developed in
\cite{MazurEnum}.

Let $\mathbf G$ be a labelled graph.
If $v$ is a vertex of $G$ then  the label of $v$ 
after a run $\rho$ of  Mazurkiewicz' algorithm
is denoted $(\lambda(v),c_{\rho}(v))$ with
$c_{\rho}(v)=(n_\rho(v),N_\rho(v),M_\rho(v))$ and $(\lambda,c_{\rho})$
denotes the final labelling.

\begin{theorem}\cite{MazurEnum}\label{fundamental}
  Any run $\rho$ 
  of   Mazurkiewicz' enumeration algorithm on a connected labelled graph
  ${\mathbf G}=(G,\lambda)$ terminates 
  and yields a final labelling $(\lambda,c_{\rho})$ verifying the
  following conditions for all vertices $v,v'$ of $G$:
  \begin{theoenum}
  \item Let $m$ be the maximal number in the final labelling, 
    $m=\max\limits_{v \in V(G)}
    n_\rho(v)$. Then for every $1 \le p \le m$ there is some $v \in
    V(G)$ with $n_\rho(v)=p$.
  \item \label{fondM} $M_\rho(v)=M_\rho(v')$.
  \item $(\lambda(v),n_\rho(v),N_\rho(v))\in M_\rho(v')$.
  \item \label{fondN} Let $(l,n,N) \in M_\rho(v')$. Then
    $\lambda(v)=l$, $n_\rho(v)=n$ and
    $N_\rho(v)=N$ for some vertex $v$ if and only if there is no pair
    $(l',n,N')\in M_\rho(v')$ with 
    $l<_Ll'$ or ($l=l'$ and $N \prec N'$).
  \item \label{fondn} $n_\rho(v) = n_\rho(v') \mbox{ implies } ( \lambda(v) =
    \lambda(v')$ and $ N(v) = N(v'))$ 
  \item $n_\rho$ induces a locally bijective labelling of $G$.
  \end{theoenum}
\end{theorem}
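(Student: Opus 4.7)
The plan is to separate termination (which underlies every other clause) from the structural consequences of irreducibility, and to handle the latter by decoding what the failure of the two rule preconditions means.

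\paragraph{Termination and global bounds.} I would first establish three monotonicity invariants preserved by every rule application. (a) Mailboxes only grow: a triple once stored in some $M(v)$ is never removed, and moreover the union $\bigcup_v M(v)$ grows strictly whenever the Renaming rule fires. (b) For each vertex $v$, the pair $(n(v),N(v))$ is non-decreasing in the $\prec$-order induced on pairs; it increases strictly when $v$ is renamed (the new $n'(v)$ exceeds the previous one by construction) and $N(v)$ can only be upgraded when a neighbour of $v$ is renamed. (c) The numbers stay bounded by $|V(G)|$: a Renaming step assigns $n'(v_0)=1+\max\{n:(l,n,N)\in M(v_0)\}$, and an inductive argument on the structure of $M(v_0)$ shows that achieving maximum $k$ in $M(v_0)$ requires $k$ pairwise distinct witnesses among the vertices, hence $k\le|V(G)|$. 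Since the label set over which $(n,N,M)$ ranges is finite once $|V(G)|$ is fixed, and some well-founded measure (e.g.\ the lexicographic pair consisting of $\bigcup_v M(v)$ and the sorted sequence of $(n(v),N(v))$) strictly increases at every rule application, the run $\rho$ must reach an $\mk$-irreducible configuration.

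\paragraph{Structural consequences of irreducibility.} At termination neither rule is applicable. Non-applicability of the Diffusion rule yields $M_\rho(v)=M_\rho(w)$ whenever $v,w$ are adjacent, and connectedness of $G$ then gives clause (ii). With mailboxes equal, non-applicability of the Renaming rule forces $n_\rho(v)>0$ for every $v$ (otherwise the second half of the precondition triggers) and, for every vertex $v$, forces the triple $(\lambda(v),n_\rho(v),N_\rho(v))$ to be $\prec$-maximal in $M_\rho(v)$ among all triples whose middle coordinate equals $n_\rho(v)$. I would then use the insertion mechanism of the Renaming rule (together with the diffusion that must follow before termination) to show that $(\lambda(v),n_\rho(v),N_\rho(v))$ lies in $M_\rho(v)$, giving clause (iii). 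Clause (iv) is the central characterisation: one direction is the maximality just derived; the converse says that every $\prec$-maximal triple in $M_\rho$ is realised by a vertex, and this follows by tracing the history of such a triple — it was inserted when some vertex $w$ had precisely that label, number and view, and if $w$ later changed those values then by invariant (b) it moved to a strictly larger triple, and any collision that caused the change inserted a dominating triple into every mailbox, contradicting the assumed maximality.

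\paragraph{Uniqueness of numbers and local bijectivity.} Clause (v) is an immediate consequence of (iv): two vertices $v,v'$ with $n_\rho(v)=n_\rho(v')$ yield two triples with the same middle component, both of which must be maximal in the common mailbox, so they coincide and hence $\lambda(v)=\lambda(v')$ and $N_\rho(v)=N_\rho(v')$. Clause (i), that the numbers exhaust $[1,m]$, is proved by contradiction: if some $p\le m$ is skipped, the definition of $n'$ in the Renaming rule forces the vertex currently holding $m$ to have picked $p$ first, violating the growth of the mailbox. Finally clause (vi) translates (v) into the language of labelled balls: $N_\rho(v)$ encodes (after the final round of updates) the multiset $\{(n_\rho(w),\lambda(w),\lambda(\{v,w\})):w\in N_G(v)\}$, so equality of $N_\rho(v),N_\rho(v')$ provides an isomorphism of the labelled balls $B_G(v),B_G(v')$, and no two neighbours of a single vertex can share a number because the tuple $N_\rho(\cdot)$ lists the neighbours with strictly comparable data.

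\paragraph{Main obstacle.} The delicate part is the converse direction in clause (iv): closing the loop between maximal mailbox entries and existing vertices. One has to show that the ``ghost'' triples, inserted when some vertex held an earlier value of $(n,N)$ and subsequently abandoned, cannot remain maximal. This relies on the precise interplay between the monotone invariant~(b) and the fact that whenever a vertex renames itself the new triple is broadcast and, by irreducibility of Diffusion, reaches every mailbox, so any triple that a vertex has ``left behind'' is automatically dominated by a later, still-present triple in $M_\rho$. Everything else is essentially bookkeeping on top of this point.
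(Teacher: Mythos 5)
Your overall architecture --- monotonicity invariants, termination from boundedness of the assigned numbers, then decoding what irreducibility of the two rules means --- is the same as the paper's, and your history-tracing argument for the converse of clause (iv) is a sound variant of the paper's key lemma (the paper instead picks, among all vertices that ever held the number $n$, one with $\prec$-maximal view and argues it can never be forced to rename). Two of your steps, however, would fail as written.

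First, your justification of the injectivity half of clause (vi) --- ``no two neighbours of a single vertex can share a number because the tuple $N_\rho(\cdot)$ lists the neighbours with strictly comparable data'' --- is not a valid reason: the local view is only weakly ordered ($n(v_1)\geq n(v_2)\geq\cdots$), and nothing in its definition forbids two neighbours from contributing identical triples, so the data structure alone cannot rule out a collision. The actual mechanism is the precondition of the Renaming rule, which requires $M(v)=M(v_0)$ for every $v$ in the ball before the centre is renamed, together with the fact that the freshly chosen triple is written into every mailbox of the ball: consequently, when a vertex picks a number, every vertex within distance $2$ already has that number in its mailbox and the $1+\max$ assignment forces it to pick a strictly larger one. (You should also cover the centre/neighbour pairs at distance $1$, not only pairs of neighbours of a common vertex.)

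Second, the well-founded measure you propose for termination, the lexicographic pair consisting of $\bigcup_v M(v)$ and the sorted sequence of the $(n(v),N(v))$, is \emph{invariant} under the Diffusion rule: diffusion changes neither the global union of the mailboxes nor any $(n(v),N(v))$, so this measure does not exclude infinite runs of Diffusion steps. The repair is immediate --- take the tuple of all individual triples $(n(v),N(v),M(v))$, each componentwise non-decreasing and strictly increasing at some vertex at every step, ranging over a finite set once the numbers are bounded by $|V(G)|$ --- and this is exactly the paper's growth lemma. Your argument for clause (i) is also garbled as phrased (the vertex finally holding $m$ need never have held the skipped value $p$); the correct ingredient --- every number known in a mailbox is currently assigned to some vertex, whence by induction on the steps all of $1,\dots,m$ are assigned --- is the one you already invoke in your termination paragraph, and it should be stated and used there as a separate lemma.
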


We first prove the following lemmas.
We say that a number $m$ is known by $v$ if $(l,m,N)\in M(v)$ for some
$l$ and some $N$. 
In the following $i$ is an integer denoting a computation step.
  Let $(\lambda(v),(n_i(v),N_i(v),$ $M_i(v))$ be the label of the vertex
  $v$   after the $i$th step of the computation. 
\begin{lemma} \label{croissance}
 For each $v,i:$
  \begin{itemize}
  \item $n_i(v)\leq n_{i+1}(v)$,
  \item $N_i(v)\preceq N_{i+1}(v)$,
  \item $M_i(v)\subseteq M_{i+1}(v)$.
  \end{itemize}
\end{lemma}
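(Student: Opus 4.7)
The plan is to prove each of the three monotonicity statements by induction on $i$, with a case analysis on which of the two relabelling rules of \mk was applied between steps $i$ and $i+1$. Since both rules only modify labels of vertices in some ball $B(v_0)$, it suffices to check the change in $n$, $N$, $M$ for vertices of this ball; for vertices outside, the three quantities are unchanged, so all three monotonicity claims hold trivially.

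For the diffusion rule, $n$ and $N$ are left untouched by definition, so only $M$ needs to be examined. Since $M'(v) = \bigcup_{w\in B(v_0)} M(w)$ contains $M(v)$, the inclusion $M_i(v)\subseteq M_{i+1}(v)$ is immediate, and the other two claims are trivial. For the renaming rule applied at $v_0$, first observe that $M'(v) = M(v)\cup \{(\lambda(w),n'(w),N'(w))\mid w\in B(v_0)\}$, hence again $M$ grows. For $n$, only $n(v_0)$ is modified: if $n_i(v_0)=0$ then $n'(v_0)\geq 1 > n_i(v_0)$; if $n_i(v_0)>0$, the precondition guarantees some $(l,n_i(v_0),N)\in M_i(v_0)$, so
\[
n'(v_0) = 1 + \max\{n\mid (l,n,N)\in M_i(v_0)\} \geq n_i(v_0)+1 > n_i(v_0).
\]

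The delicate point is monotonicity of $N$ under the renaming rule. For $v=v_0$, the tuple $N(v_0)$ lists the neighbours of $v_0$, none of whose numbers is modified (since the loopless graph assumption prevents $v_0$ from being its own neighbour), so $N_{i+1}(v_0) = N_i(v_0)$. For $v\in B(v_0)\setminus\{v_0\}$, the tuple $N(v)$ contains the triple associated to $v_0$; exactly its first coordinate gets replaced by $n'(v_0) > n_i(v_0)$, and the tuple is then re-sorted to remain in $\NN_>$. The key combinatorial observation I will use is that if one entry of a decreasingly-sorted sequence is strictly increased and the sequence is re-sorted, the resulting sequence is componentwise at least as large: for any prefix length $j$, the $j$ largest elements of the new multiset dominate the $j$ largest of the old multiset, because replacing an element by a larger one can only increase the $j$-th order statistic. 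The new sequence therefore differs from the old one at a first position where it is strictly larger, and by clause~1 of the definition of $\prec$ we conclude $N_i(v)\prec N_{i+1}(v)$.

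The main obstacle is precisely this combinatorial observation about sorting; everything else is bookkeeping driven by the definitions of the two rules. Once that lemma on sorted sequences is settled, the three induction steps chain together cleanly, giving the three inequalities of the statement.
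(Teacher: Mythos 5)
Your proof is correct and follows essentially the same route as the paper's: a case analysis on which of the two rules fires, with the diffusion rule and the mailbox/number components handled directly from the rule definitions, and the only real work in the monotonicity of $N$ under the renaming rule. The paper dispatches that last point with the one-line remark that it ``comes from the definition of $\prec$''; your explicit order-statistic argument about re-sorting after increasing one entry is precisely the justification the paper leaves implicit, so the two proofs coincide in substance.
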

\begin{proof}
  The property is obviously true for the vertices that are not involved in
  the rule applied at step $i.$ For the other vertices we note that the
  {\em renaming rule} applied to $v_0$ increments $n_i(v_0)$, adds
  elements to some mailboxes and makes some $N(u)$ stronger. Moreover 
  the {\em   diffusion rule} only adds elements to mailboxes.

  The fact that $N_i(v)\preceq N_{i+1}(v)$ comes from the 
  definition of $\prec$. In other words, this order 
  ensures that the past local views of a vertex are always weaker
  than its present one.

  Furthermore, one of the
  inequalities is strict for at least one vertex, namely the one for
  which the previous  rule was applied. 
\end{proof}

\begin{lemma}  \label{knowing} For every $v\in V(G)$
  and $(l,m,N)\in M_i(v)$ there exists a vertex $w  \in V(G)$ such that
  $n_i(w) = m.$ 
\end{lemma}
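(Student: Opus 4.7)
The plan is to argue by contradiction via a maximality argument over the totally ordered ``vertex states''. Let $S_i = \{w \in V(G) \mid n_i(w) = m\}$; the goal is $S_i \ne \emptyset$. (The case $m = 0$ is handled separately under the convention that every not-yet-renamed vertex realizes $0$.) Since the diffusion rule only propagates existing triples, the triple $(l, m, N) \in M_i(v)$ must have been inserted by some earlier renaming step, at the very moment a vertex acquired number $m$. Hence the set $T = \{u \in V(G) \mid \exists j \le i,\ n_j(u) = m\}$ is nonempty. Assuming for contradiction $S_i = \emptyset$, every $u \in T$ has been renamed away from $m$ by step $i+1$, so I would define $k_u \le i$ as the last step at which $n_{k_u}(u) = m$.

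For each such $u$, the renaming at step $k_u + 1$ falls under the second branch of the renaming precondition and therefore rests on the existence of a triple $(l'', m, N'') \in M_{k_u}(u)$ satisfying either $l'' >_L \lambda(u)$, or $l'' = \lambda(u)$ and $N_{k_u}(u) \prec N''$. Such a triple must have been inserted by an earlier renaming applied to some vertex $u''$ whose state at the insertion time $k'' \le k_u$ was exactly $(\lambda(u''), m, N_{k''}(u'')) = (l'', m, N'')$; in particular $u'' \in T$. Lemma \ref{croissance} then forces $u'' \ne u$: otherwise $l'' = \lambda(u)$ and $N'' = N_{k''}(u) \preceq N_{k_u}(u)$ by monotonicity of views (using $k'' \le k_u$), contradicting $N_{k_u}(u) \prec N''$.

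Monotonicity applied again gives $N_{k''}(u'') \preceq N_{k_{u''}}(u'')$, so the pair $(\lambda(u''), N_{k_{u''}}(u''))$ lexicographically dominates $(\lambda(u''), N_{k''}(u'')) = (l'', N'')$, which in turn dominates $(\lambda(u), N_{k_u}(u))$. Thus the ``final state'' of $u''$, recorded just before it loses the number $m$, strictly exceeds the final state of $u$ in the label-then-view lexicographic order. Taking $u^* \in T$ with maximal final state $(\lambda(u^*), N_{k_{u^*}}(u^*))$ (which exists because $T$ is finite and the order is total) and applying the above argument to $u^*$ yields some $u'' \in T$ with strictly greater final state, contradicting maximality. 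Therefore $S_i \ne \emptyset$, proving the lemma.

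The main obstacle is taming the potentially tangled history in which a number $m$ may be repeatedly passed around among vertices; the proof sidesteps a cascading induction by a single descent/maximality argument, made possible because Lemma \ref{croissance} forces every reassignment of $m$ to be triggered by a strictly stronger previously recorded state, so the final states of elements of $T$ form a strictly ascending chain in a finite totally ordered set.
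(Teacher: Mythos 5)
Your argument is correct for $m\ge 1$ and is essentially the paper's own proof: both are extremal arguments over the finite set of (vertex, step) pairs that realize the number $m$, resting on the monotonicity of views (Lemma~\ref{croissance}) and on the fact that every mailbox triple records an actual state at some earlier renaming step --- the paper directly selects the pair with the strongest recorded state and observes that the renaming rule can never again be applied to it, whereas you suppose every such vertex eventually loses $m$ and derive a strictly ascending chain of final states, which is the same mechanism run in reverse (and you are in fact more careful than the paper in working with the lexicographic order on label-then-view rather than on views alone). The one flaw is your parenthetical for $m=0$: once every vertex has been renamed, a triple $(l,0,N)$ persists in the mailboxes while no vertex carries the number $0$, so that case is genuinely false under your stated convention --- but the paper's proof silently fails there too (the first branch of the renaming precondition always fires on a $0$-numbered vertex), and the lemma is only ever invoked for $m\ge 1$.
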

\begin{proof}
  Assume that the number $m$ is known by $v$ and let $U=\{u\in V(G)\mid
  \exists j<i, n_j(u)=m\}.$  Obviously $U$ is not empty. Let $w\in U$ 
  and let $j<i$ such that  
  \begin{enumerate}
  \item $n_j(w)=m,$ 
  \item  for any $u\in U$ and for any $k<i$ verifying $n_k(u)=m$ we have:
    $N_k(u)\preceq N_j(w).$
  \end{enumerate}
  Clearly, the  {\em renaming rule} cannot be applied to $w$, hence $n_i(w)=m$.
\end{proof}

Next, we claim that whenever a number is known, all positive smaller
numbers are assigned to some vertex.
\begin{lemma} \label{knowingsmaller} For every vertex $v\in V(G)$ such that
  $n_i(v)\neq0$ and for every $ m\in [1, n_i(v)]$, 
  there exists some vertex $w  \in V(G)$ such that $n_i(w) = m.$
\end{lemma}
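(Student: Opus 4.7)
The plan is to prove the statement by strong induction on the value $n_i(v) = k$, relying on the previous Lemma \ref{knowing} together with the monotonicity properties from Lemma \ref{croissance}.

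For the base case $k=1$, the conclusion is immediate: the interval $[1,n_i(v)]$ contains only $m=1$, and $v$ itself has $n_i(v)=1$.

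For the inductive step, assume $k \geq 2$ and suppose the statement holds for all vertices whose number at step $i$ is strictly less than $k$. The key observation is to trace back when $v$ received its current number. Since numbers can only increase (Lemma \ref{croissance}), there is some earliest step $j \leq i$ at which the renaming rule was applied to $v$, yielding $n_j(v)=k$. By the explicit form of the renaming rule,
\[
n_j(v) \;=\; 1 + \max\{n \in \N \mid (l,n,N)\in M_j(v)\},
\]
so the maximum in the mailbox at that moment equals $k-1 \geq 1$. Hence there exists some triple $(l,k-1,N) \in M_j(v)$. By the monotonicity of mailboxes (Lemma \ref{croissance}), $M_j(v)\subseteq M_i(v)$, and so this same triple still belongs to $M_i(v)$. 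Applying Lemma \ref{knowing} at step $i$, we obtain a vertex $w \in V(G)$ with $n_i(w) = k-1$.

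The induction hypothesis applied to $w$ then provides, for every $m \in [1,k-1]$, a vertex carrying the number $m$ at step $i$. Adjoining $v$ itself, which realises the value $k$, covers the entire interval $[1,k]$, completing the induction. The only delicate point is verifying that we are really allowed to invoke Lemma \ref{knowing} on a triple whose presence in $M_i(v)$ is inherited from an earlier step rather than produced freshly; this is guaranteed precisely by the inclusion $M_j(v)\subseteq M_i(v)$. No other subtleties arise, since the renaming rule is the only mechanism that creates new numbers, and it always picks the maximum plus one.
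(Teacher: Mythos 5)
Your proof is correct, but it runs the induction along a different axis than the paper does. The paper fixes the statement as a property of the whole configuration and inducts on the computation step $i$: it checks that a diffusion step preserves the property, and that a renaming step applied to $v_0$ only creates new obligations for the numbers in $\{n_i(v_0),\dots,n_{i+1}(v_0)\}$, which it discharges via Lemma~\ref{knowing} and the induction hypothesis at step $i$ (together with the observation that $v_0$ is the only vertex whose number changes). You instead freeze the step $i$ and induct on the value $k=n_i(v)$, tracing back to the renaming step that assigned $k$ to $v$, extracting a triple with number $k-1$ from the mailbox at that moment, and pushing it forward to $M_i(v)$ by the monotonicity of mailboxes (Lemma~\ref{croissance}) so that Lemma~\ref{knowing} yields a vertex currently numbered $k-1$. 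Both arguments rest on the same two lemmas; yours avoids the case analysis on which rule fires at step $i\to i+1$ and the bookkeeping about which numbers may disappear when $v_0$ is renamed, at the price of the backward-tracing argument. One cosmetic slip: the renaming rule computes $1+\max\{n\mid (l,n,N)\in M(v_0)\}$ from the mailbox \emph{before} the rule is applied, so the triple with number $k-1$ lives in $M_{j-1}(v)$ rather than $M_j(v)$ as your displayed formula suggests (after the rule, the mailbox already contains an entry with number $k$, so the stated equality cannot hold at step $j$); since mailboxes only grow, this does not affect the conclusion that the triple belongs to $M_i(v)$.
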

\begin{proof}
  We show this claim by  induction on $i$.
  At the initial step $(i=0)$ the assertion is true.
  Suppose that it holds for $i \ge 0$. If the {\em diffusion rule} is used,
  the assertion is true for $i+1$. If the {\em renaming rule} is applied to
  $v_0$ then we just have to verify it for $v_0$, and more precisely for
  all numbers  $m$ in the interval $\{n_i(v_0),n_i(v_0)+1,\dots,n_{i+1}(v_0)\}$.
  The property holds obviously for $n_{i+1}(v_0)$ and, being
  known by $v_0$ at step $i+1$, the property for $n_i(v_0)$
  is a consequence of Lemma~\ref{knowing}.

  If the interval $\{n_i(v_0)+1,\dots,n_{i+1}(v_0)-1\}$ is
  empty then the condition is obviously satisfied.
  Otherwise  by definition of the {\em renaming rule}, $n_{i+1}(v_0)-1$ is
  known by $v_0$ at step $i$ and thus Lemma \ref{knowing} implies that
  there exists $w\neq  v_0$ such that 
  $n_i(w)=n_{i+1}(v_0)-1$. For every $m\in
  \{n_i(v_0)+1,\dots,n_{i+1}(v_0)-1\}$, we have, by induction hypothesis
  on $w$  that there exists a vertex $x\in V(G)$ such that $n_i(x) =
  m$.  For every such $x$, because $v_0$ is the only vertex changing its
  name from step $i$ to $i+1$, $n_i(x)=n_{i+1}(x)$, which proves the
  assertion for step $i+1$. 
\end{proof}

We show now  Theorem \ref{fundamental}:
\begin{proof}

  As before, we denote by
  $(\lambda(v),(n_i(v),N_i(v),$ $M_i(v)))$  the label of the vertex
  $v$   after the $i$th step of the computation. 

  As there are no more than $|V(G)|$ different numbers assigned
  it follows from Lemma~\ref{croissance} and 
  from Lemma~\ref{knowingsmaller} that the algorithm terminates.

  The properties 1 to 6 of the final labelling are easily derived from the
  above part of the proof.
  \begin{enumerate}
  \item By Lemma~\ref{knowingsmaller} applied to the final labelling.
  \item Otherwise, the {\em diffusion rule} could be applied.
  \item A direct corollary of the previous property.
  \item We have obtained a final labelling, thus it is a direct 
    consequence of the diffusion rule and of the
    precondition of the renaming rule.
  \item A direct consequence of the previous point.
  \item The first part of Definition~\ref{locbij} is a consequence
    of the rewriting mechanism: when a vertex $v$ is numbered, its number
    is put in mailboxes of adjacent vertices. Thus vertices at distance
    $2$ of $v$ cannot have the same number as $v.$
    The second part of Definition~\ref{locbij} is a consequence
    of the precondition of the renaming rule: 
    the {\em renaming rule} could have been applied to
    vertices having the same number and non-isomorphic local
    views.
  \end{enumerate}
  This ends the proof of the theorem.
\end{proof}

\begin{remark}
  By points $1$ and $6$ of Theorem~\ref{fundamental}, and similarly
  to \cite{MazurEnum}, the algorithm computes  for non-ambiguous graphs
  (and thus for minimal graphs by Corollary \ref{ambiguous}),  a
  one-to-one correspondence $n_\rho$ between the set of 
  vertices of $G$ and the set of integers $\{1,\ldots,|V(G)|\}$.
\end{remark}

\subsection{Toward an Enhanced Mazurkiewicz' Algorithm} 
In this section we prove that even by applying Mazurkiewicz'
algorithm to a graph $\mathbf G$ that is not covering-minimal, we can get some relevant information. 
In this case, we prove that we can interpret the mailbox of
 the final labelling as a 
graph $\mathbf H$ that each vertex can compute and such that 
$\mathbf G$ is a covering of $\mathbf H.$

For a mailbox $M$, we define the graph of the ``strongest'' vertices as follows. 
\newcommand{\fort}{\ensuremath{\underline{\mbox{Strong}}}}
First, for $l\in L, n\in \N, N\in \NN, M\subseteq L \times  \N \times\NN$, we
define the predicate $\fort(l,n,N,M)$ that is true if  there 
is no $(l',n,N')\in M$ verifying
$$  
            l'>l \mbox{ or } (l=l' \mbox{ and }
             N\prec N')
              .
$$

The graph  $H_{M}$ of strongest vertices of $M$ is then defined by
\begin{eqnarray*}
V({H_M})&=&\{n\mid \exists N,l: \fort(l,n,N,M)\},\\
E({H_M})&=&\{\{n,n'\}\mid \exists N,l: \fort(l,n,N,M), \mbox{ and } 
           \exists l',l'': \\
&& \;\;\;\;\;N=(...,(n',l',l''),...) \;\}.
\end{eqnarray*}
We also define a labelling on this graph by
 $\lambda_M(n)=(n,l,N,M),$ with $\fort$$(n,l,N,M)$ for some $N,$ 
and $\lambda_M(\{n,n'\})=l'',$ with 
$\fort(n,l,N,M)$ and  $N=(...,(n',l',l''),...).$

The uniqueness of this
 definition comes from the definition of {\fort} and from 
 Theorem~\ref{fondn}.

 Let $\rho$ be a run of $\mk.$ Then  
$(H_{M_{\rho}(u)},$ $\lambda_{M_{\rho}(u)})$ does not 
depend on $u$ by Theorem~\ref{fundamental}.2. We then define
$\gse=(H_{M_{\rho}(u)},\lambda_{M_{\rho}(u)}),$ for any vertex
$u$. 
Finally, we have:

\begin{proposition}
For a given execution $\rho$ of Mazurkiewicz algorithm, we have
$$
V(\rho(\mathbf G))= \{n_\rho(v)| v\in V(G)\},
$$
$$
E(\rho(\mathbf G))=\{\{n_\rho(v),n_\rho(w)\} | \{v,w\}\in E(G)\}.
$$
\end{proposition}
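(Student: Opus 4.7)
The plan is to show both set equalities by direct unpacking of the definition of $H_{M_\rho(u)}$ and applying Theorem~\ref{fundamental}. For the vertex set equality, the key observation is that Theorem~\ref{fundamental}.\ref{fondM} gives $M_\rho(v)=M_\rho(u)$ for any $u,v$, and Theorem~\ref{fundamental}.3 guarantees $(\lambda(v),n_\rho(v),N_\rho(v))\in M_\rho(u)$ for every $v\in V(G)$. Theorem~\ref{fundamental}.\ref{fondN} then asserts that a triple $(l,n,N)\in M_\rho(u)$ is Strong if and only if it is realised by some actual vertex. Combining these, the map $v\mapsto n_\rho(v)$ has image exactly $\{n\mid\exists l,N:\fort(l,n,N,M_\rho(u))\}$, which by definition is $V(\rho(\bG))$.

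For the edge set, I would first invoke the invariant (implicit in the definition of $N(v_0)$ in the algorithm) that at termination
$$N_\rho(v)=\{(n_\rho(v'),\lambda(v'),\lambda(\{v,v'\}))\mid v'\in N_G(v)\}.$$
For the forward inclusion, take $\{v,w\}\in E(G)$. Then $(\lambda(v),n_\rho(v),N_\rho(v))$ is Strong in $M_\rho(u)$ by the previous paragraph, and the neighbor $w$ contributes an entry $(n_\rho(w),\lambda(w),\lambda(\{v,w\}))\in N_\rho(v)$; by the definition of $E(H_M)$ this yields $\{n_\rho(v),n_\rho(w)\}\in E(\rho(\bG))$. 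For the reverse inclusion, take $\{n,n'\}\in E(\rho(\bG))$, so there exist $l,N$ with $\fort(l,n,N,M_\rho(u))$ and an entry $(n',l',l'')\in N$ for some $l',l''$. Theorem~\ref{fundamental}.\ref{fondN} produces a vertex $v$ realising $(l,n,N)$, and the invariant on $N_\rho(v)$ then exhibits a neighbor $w\in N_G(v)$ with $n_\rho(w)=n'$, so $\{v,w\}\in E(G)$ maps to $\{n,n'\}$ as required.

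The only point that requires care, and the one I expect to be the main obstacle, is justifying the neighborhood invariant on $N_\rho$. It is stated as part of the label semantics in the preamble to Mazurkiewicz' algorithm, but strictly speaking one needs to observe that it is preserved by both rules: the diffusion rule does not touch $N$, while the renaming rule updates $N(v)$ for $v\in B(v_0)$ by replacing the old value $n(v_0)$ by $n'(v_0)$, which is exactly the update required to restore the invariant after $v_0$'s number changes. Everything else is a direct unpacking of definitions together with the already-established properties of $\mk$, so no new machinery is needed.
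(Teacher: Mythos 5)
Your proof is correct and coincides with what the paper intends: the paper states this proposition without any proof, treating it as an immediate consequence of the definition of $H_{M}$ together with Theorem~\ref{fundamental} (points 2, 3 and \ref{fondN}), which is precisely the unpacking you carry out. Your flagged concern about the neighbourhood invariant on $N_\rho$ is the one point the paper's informal label semantics glosses over, and your justification via the renaming rule's update of $N(v)$ (with the understanding that $N(v)$ is maintained as the full tuple of neighbour triples, so that at termination every neighbour of $v$ appears in $N_\rho(v)$ with its final number) is the right way to close it.
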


\begin{remark} Before we emphasize the role of $\gse$, note  that
  $\gse$ can be locally computed by every vertex, and that the graph
  depends only on the label $M_\rho$.
\end{remark}

The next proposition  states
that we can see a run of $\mk$ as computing a graph  covered by $\mathbf G.$
Conversely, as a ``translation'' from \cite[Th. 5]{MazurEnum}, 
 every graph covered by $\mathbf G$ 
can be obtained by a run of the algorithm.

\begin{proposition}\label{calcul}
  Let $\mathbf G$ be a labelled graph.
  \begin{enumerate}
  \item For all runs $\rho$ of $\mk$, $\mathbf G$ is a covering of 
    $\gse.$
  \item (completeness) For all $\mathbf H$ such that $\mathbf G$ is a 
    covering  of $\mathbf H$,  there exists a run $\rho$ such that
    ${\mathbf H}\simeq \gse.$ 
  \end{enumerate}
\end{proposition}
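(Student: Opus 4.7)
The plan is as follows. For part~1, the key observation is that by the last point of Theorem~\ref{fundamental}, the final naming $n_\rho$ is a locally bijective labelling of $G$; hence by Lemma~\ref{LB_kcov} the quotient map $G\to G/n_\rho$ is a covering. It remains to see that $G/n_\rho$ and $\gse$ coincide as labelled graphs. Their vertex sets agree because point~\ref{fondN} of Theorem~\ref{fundamental} shows that the numbers $n$ satisfying $\fort(l,n,N,M_\rho)$ for some $l,N$ are exactly the $n_\rho(v)$ for $v\in V(G)$. Their edge sets agree because an edge $\{n,n'\}$ in $\gse$ arises from a strongest tuple at $n$ whose local view mentions $n'$, which by construction corresponds to a neighbour in $\mathbf G$; the edge labels are carried by the third component of the local view and match similarly. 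Finally, the vertex label $(n_\rho(v),\lambda(v),N_\rho(v),M_\rho(v))$ carried by $\gse$ is, by Mazurkiewicz' rules, determined by the label of $v$ in the current relabelling of $\mathbf G$, establishing that the covering preserves labels canonically.

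For part~2, my strategy is first to exhibit a run of $\mk$ on $\mathbf H$ whose output is $\mathbf H$ itself, and then lift it to $\mathbf G$ along the covering $\delta\colon \mathbf G\to\mathbf H$. To build $\rho_H$, I pick an enumeration $h_1,\dots,h_k$ of $V(H)$ and schedule as follows: first rename $h_i$ (legal because $n(h_i)=0$; it receives the name $i=1+\max$ of currently known numbers), then apply the diffusion rule exhaustively, and proceed to $h_{i+1}$. After $h_k$ has been renamed, I must verify that no renaming step is enabled: the non-trivial precondition requires $(l,n(v_0),N)\in M(v_0)$ with $\lambda(v_0)<l$ or $\lambda(v_0)=l$ and $N(v_0)\prec N$; since the final numbering is bijective the only such entries originate from $v_0$ itself, and Lemma~\ref{croissance} guarantees that the stored $N(v_0)$ dominates all earlier versions, so the precondition fails. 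Thus $\rho_H$ terminates and $\rho_H(\mathbf H)\simeq\mathbf H$.

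I then lift $\rho_H$ to a run $\rho_G$ on $\mathbf G$ using the Lifting Lemma (Lemma~\ref{lifting}). The Mazurkiewicz initial labellings on $\mathbf G$ and $\mathbf H$ are compatible with $\delta$ because $\delta$ preserves $\lambda$. Applying the lemma to each step of $\rho_H$ produces a valid run $\rho_G$ whose final labelled graph is a covering, via $\delta$, of the final labelled graph of $\rho_H$; consequently, at every $v\in V(G)$ the mailbox $M_{\rho_G}(v)$ equals $M_{\rho_H}(\delta(v))$, and by point~\ref{fondM} of Theorem~\ref{fundamental} this is the common mailbox of $\rho_H$. Since $\rho(\mathbf G)$ depends only on this mailbox, we conclude $\rho_G(\mathbf G)=\rho_H(\mathbf H)\simeq\mathbf H$, as required.

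The main obstacle is verifying that the scheduled bijective run $\rho_H$ actually reaches an $\mk$-irreducible configuration: accumulated older entries in the mailboxes could a priori re-enable the renaming rule after the naming phase is finished. The monotonicity of the local view $N$ given by Lemma~\ref{croissance}, combined with the fact that no two vertices share a number in the final state, is the key ingredient that rules this out.
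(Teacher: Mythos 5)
Your proof is correct, and part~1 matches the paper's argument (the locally bijective numbering $n_\rho$ plus Lemma~\ref{LB_kcov}, with the identification of the quotient graph with $\rho(\mathbf G)$ supplied by the preceding proposition describing $V(\rho(\mathbf G))$ and $E(\rho(\mathbf G))$). For part~2 you take a genuinely different route. The paper works directly on $\mathbf G$: it lifts the chosen enumeration of $V(H)$ to a labelling $\mu$ of $G$ and, at stage $i$, applies the renaming rule to the whole fibre $\mu^{-1}(i)$, which is legitimate because the balls centred on a fibre are pairwise disjoint (Lemma~\ref{un}); the diffusion phase between stages guarantees that every vertex of $\mu^{-1}(i)$ receives the number $i$. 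You instead build a terminating, bijectively numbering run on the base graph $\mathbf H$ and transport it to $\mathbf G$ with the Lifting Lemma~\ref{lifting}. Your decomposition buys a cleaner separation of concerns --- the scheduling argument lives entirely on the small graph $\mathbf H$, where one vertex is renamed at a time and no disjointness of fibres is needed, and the covering does the rest of the work --- at the price of two verifications the paper's route does not isolate: (i) that your scheduled sequence on $\mathbf H$ really ends in an irreducible configuration, which you argue correctly from the monotonicity of local views (Lemma~\ref{croissance}) together with the bijectivity of the final numbering; and (ii) that the lifted sequence on $\mathbf G$ is itself maximal, i.e.\ that its final configuration $\mathbf G'$ is irreducible, so that the final mailboxes of the run are indeed those you read off. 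You leave (ii) implicit; it does hold, because the relabelling relation is locally generated and every labelled ball of $\mathbf G'$ is isomorphic to a labelled ball of the irreducible graph $\mathbf H'$, so any rule enabled in $\mathbf G'$ would also be enabled in $\mathbf H'$. With that one sentence added, your argument is complete.
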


\begin{proof}
  \begin{enumerate}
  \item Since $n_\rho$ is
    locally bijective (Theorem~\ref{fundamental}.6), we obtain from Lemma 
    \ref{LB_kcov} that $\mathbf G$ is a covering of $\gse.$
  \item We exhibit a run of $\mk$ having the required property.
    Suppose that we have an enumeration of the vertices of $\mathbf H$. Let
    $\mu$ be the labelling of $G$ obtained by lifting the
    enumeration. There is an execution of Mazurkiewicz' algorithm such
    that each vertex $v$ of $G$ gets $\mu(v)$ as a final
    $n_\rho$-labelling. 

    This is done in the following way. First we
    apply the renaming rule to all vertices in $\mu^{-1}(1)$. This
    is possible because there is no overlapping of balls, since
    $\mathbf G$
    is a covering of $\mathbf H$. Then we apply the diffusion 
    rule as long as we can. After that, we apply the renaming
    rule to $\mu^{-1}(2)$. Because of the diffusion, the number 1 is
    known by all the vertices, so the vertices of $\mu^{-1}(2)$
    get labelled by $2$. And so on, until each vertex $v$ gets
    labelled by $\mu(v)$.
  \end{enumerate}
\end{proof}

  From Proposition~\ref{calcul}.1, we can see a  run of $\mk$ as
computing a covering. Furthermore, if the underlying graph is
covering-minimal, then $\rho(\bG)$ is an isomorphic copy of \bG.
 This copy can be computed
from their mailbox by any vertex, providing a ``map'' -- with numbers
of identification -- of the
underlying network. Thus, on minimal networks, the algorithm of Mazurkiewicz
can actually be seen as a {\em cartography algorithm}. 
\label{reconstruct}

\subsubsection{Interpretation of the Mailboxes at the Step $i.$}
The previous results concern the interpretation of the final
mailboxes. Now, we consider a relabelling chain 
$(\mathbf G_i)_{0\leq i}.$ For a given $i$ and a given vertex $v$
we prove that it is possible to interpret the label of 
$v$ in $\mathbf G_i$ as
a graph quasi-covered by $\mathbf G_i.$
We recall notation.
Let $\mathbf G$ be a labelled graph.
Let $\rho$ be a run of the Mazurkiewicz algorithm and let
$(\mathbf G_i)_{0\leq i}$ be a chain associated to $\rho$ with
$({\mathbf G}_0={\mathbf G}).$
If $v$ is a vertex of $\mathbf G$ then  the label of $v$ at step $i$
is denoted by $(\lambda(v),c_i(v))=(\lambda(v),(n_i(v),N_i(v),M_i(v))).$
Using the interpretation of the previous section by defining
$Strong(M_i(v)),$  this label enables in some cases 
the reconstruction of the graph  $\mathbf H_{M_i(v)}.$ We note
\begin{eqnarray*}
  \mathbf {\mathbf H}_i(v)&=&
  \begin{cases}
 {\mathbf  H}_{M_i(v)} \mbox{if it is defined and } 
(n_i(v),\lambda(v),N_i(v)) \in Strong(M_i(v))\\
  \bot \mbox{ otherwise.}
  \end{cases}
\end{eqnarray*}
We prove that $\mathbf G_i$ is a quasi-covering of $\mathbf H_i(v).$ 
First, we need a definition:
\begin{definition}
Let $(\mathbf G_i)_{0\leq i},$ be a relabelling chain obtained with
the Ma\-z\-u\-r\-k\-i\-e\-wicz algorithm and let $v$ be  
a vertex. We associate
to the vertex $v$ and to the step $i$ the integer $r_{agree}^{(i)}(v)$
being the maximal integer bounded by the diameter of
$\mathbf G$ such that  any vertex $w$ of
$B(v,r_{agree}^{(i)}(v))$ verifies: ${\mathbf H_i(v)}={\mathbf H_i(w)}.$
\end{definition}
Now we can state the main result of this section:
\begin{theorem}\label{agree}
Let $(\mathbf G_i)_{0\leq i},$ be a relabelling chain obtained with
the Maz\-u\-r\-k\-i\-e\-wicz algorithm and let $v$  be a vertex. The graph
$\mathbf G_i$ is a quasi-covering of $\mathbf H_i(v)$ centered on
$v$ of radius $r_{agree}^{(i)}(v).$
\end{theorem}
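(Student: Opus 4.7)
The plan is to construct the associated covering $\bG_0$ of $\bH := \bH_i(v)$ together with the required isomorphism $\varphi$. Set $r := r_{agree}^{(i)}(v)$ and define $\gamma(u) := n_i(u)$ for $u \in B_{G_i}(v,r)$. Since $\bH_i(u)=\bH_i(v)$ on the ball (assuming $\bH_i(v)\neq\bot$, else the claim is vacuous), the triple $(\lambda(u),n_i(u),N_i(u))$ lies in $\text{Strong}(M_i(u))$, so $\gamma(u)$ is a well-defined vertex of $\bH$ whose label in $\bH$ matches that of $u$.

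The heart of the argument is showing that $\gamma$ is locally bijective on $B_{\bG_i}(v,r-1)$: for each such $u$, $\gamma$ restricts to a label-preserving bijection from $B_{\bG_i}(u)$ onto $B_\bH(\gamma(u))$. The most delicate point is \emph{injectivity on neighbours}: two distinct neighbours $w,w'$ of any vertex $u$ in $G_i$ cannot satisfy $n_i(w) = n_i(w') > 0$. Suppose otherwise, with $w$ renamed to $m := n_i(w)$ first, at some step $j+1$; the renaming rule inserts $(\lambda(w),m,N_{j+1}(w))$ into $M_{j+1}(u)$, and by Lemma~\ref{croissance}, $M_k(u)$ contains a triple with second component $m$ for every $k \geq j+1$. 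At the later step $k$ when $w'$ is renamed to $m$, the precondition forces $M_k(w') = M_k(u)$, so $1+\max\{n \mid (l,n,N)\in M_k(w')\} > m$, contradicting $n_{k+1}(w')=m$. Local surjectivity and edge-label preservation then follow from the construction of $\bH$: the edges incident to $\gamma(u)=n_i(u)$ in $\bH$ are given by the entries of $N_i(u)$, which precisely enumerate the neighbours of $u$ in $\bG_i$ together with their labels and incident edge labels.

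The covering $\bG_0$ is obtained by extending $B_{\bG_i}(v,r)$ outward in a Reidemeister-style tree unfolding. At each vertex $u'$ at distance $r$ from $v$, and for each neighbour $m$ of $\gamma(u')$ in $\bH$ not already realised as the $\gamma$-image of a neighbour of $u'$ inside the ball, glue a fresh vertex mapping to $m$; continue the unfolding recursively, freely choosing lifts of a spanning tree of $\bH$ at each new vertex. The resulting (possibly infinite) graph $\bG_0$, with projection $\delta$ extending $\gamma$, carries a locally bijective labelling by construction, hence is a covering of $\bH$ by Lemma~\ref{LB_kcov}. Taking $z_0$ as the copy of $v$ in $\bG_0$ and $\varphi$ the canonical inclusion $B_{\bG_i}(v,r) \hookrightarrow \bG_0$, we obtain $B_{\bG_0}(z_0,r) = B_{\bG_i}(v,r)$ (since all added vertices lie at distance strictly greater than $r$ from $v$) and $\delta\circ\varphi = \gamma$, establishing the quasi-covering.

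The main obstacle is the local injectivity argument above; once it is in place, local surjectivity, edge-label preservation, and the tree-unfolding extension are essentially bookkeeping. As a sanity check, when $r$ exceeds the diameter of $G$ the extension is vacuous, $\varphi$ becomes an isomorphism of the whole graph, and Proposition~\ref{calcul}.1 is recovered as a special case.
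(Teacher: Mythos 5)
Your proof is correct and follows essentially the same route as the paper's: define $\gamma=n_i$ on the agreement ball, check that it is locally bijective onto $B_{\bH_i(v)}(n_i(u))$ using the local views, and complete the ball to a genuine (possibly infinite) covering by gluing universal-covering subtrees onto the boundary vertices along the "missing" neighbours. Your explicit mailbox argument for local injectivity is a worked-out version of what the paper dismisses as "a direct consequence of the Mazurkiewicz algorithm", and your appeal to Lemma~\ref{LB_kcov} in place of the paper's direct verification that $\delta$ is a covering morphism is only a cosmetic difference.
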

\begin{proof}
Let $r=r_{agree}^{(i)}(v),$ and  
let $\gamma$ be the partial function which associates to the vertex
$u$ of $B_{\mathbf G_i}(v,r)$ the vertex $n_i(u).$ The aim
of the proof is to verify that $\mathbf G_i$ is a quasi-covering 
via $\gamma$ of $\mathbf H_i(v)$ centered on $v$ of radius $r.$

Using notation of the definition of quasi-coverings,
first we define the covering $\mathbf K.$ Let 
$\mathbf B$ be an isomorphic copy of 
$B_{\mathbf G_i}(v,r).$
The graph $\mathbf K$  is obtained by adding to $\mathbf B$
infinite trees defined as follows.

Let $\mathbf U$ be the universal covering of $\mathbf H_i(v).$
Let $x$ be a vertex of  $\mathbf H_i(v),$   let $S$ verifying
$S\subseteq N_{G_i(v)}(x),$ we define  ${\mathbf U}(x,S)$ as
the subtree of $\mathbf U$ obtained by considering walks 
rooted in $x$ such that the first step is of the form $\{x,s\}$
with $s\in S.$

For each vertex $w$ such that $d(v,w)=r,$ we define ${\mathbf U}_w$
as an isomorphic copy to ${\mathbf U}(\gamma(w),S_w)$ with
$$
S_w=\{s\in N_{G_i(v)}(\gamma(w))\mid \forall y\in 
N_{\mathbf G_i}(w)\cap B_{\mathbf G_i}(v,r)\,\,\,\, \gamma(y)\neq s\}.
$$
The copies are disjoint, \ie, if $w\neq w'$ then 
$V({\mathbf U}_w) \cap V({\mathbf U}_{w'})=\emptyset.$

\newcommand{\rconf}{{\ensuremath{r}}\xspace}
  \begin{figure}[htbp]
    \begin{center}
      \input{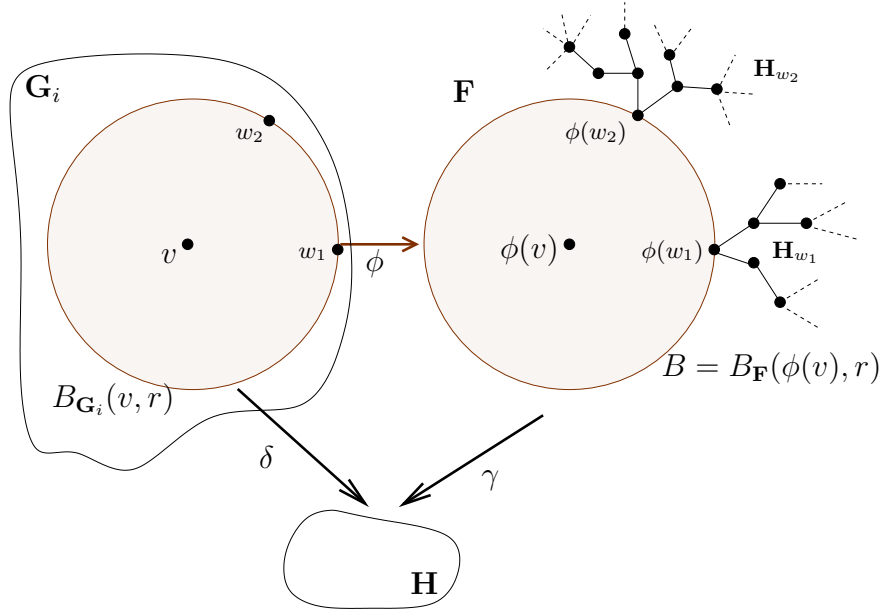}
      \caption{Construction of the associated covering \bF.}
      
    \end{center}
  \end{figure}

For each vertex $w$ such that $d(v,w)=r,$ we add $\mathbf U_w$ 
to $\mathbf B$ by identifying the copy of $w$ and 
the root of $\mathbf U_w.$  Let $\mathbf K$ be the graph we have built
by this way. 

The isomorphism  $\varphi$ is the canonical bijection between
$B_{\mathbf G_i}(v,r)$ and $\mathbf B.$ 

We define the morphism $\delta$ from $\mathbf K$ to $\mathbf H_i(v)$
by:
\begin{itemize}
\item if $u\in B$ then $\delta(u)=\gamma(\varphi^{-1}(u)),$ and
\item if $u\in U_w$ and $t$ is the end-vertex  of the path in 
$\mathbf H_i(v)$ corresponding to $u$ then $\delta(u)=t.$
\end{itemize}

First, we verify that $\delta$ is a morphism. There are three cases:
\begin{itemize}
\item if $u$ and $v$ are adjacent in $\mathbf U_w$ then by construction
$\delta(u)$ and $\delta(v)$ are adjacent,
\item if $u$ and $v$ are adjacent in $\mathbf B$ then they are
adjacent in $\mathbf H_{M_i(u)}=\mathbf H_{M_i(v)}$ 
thus $\delta(u)$ and $\delta(v)$ are
adjacent in $\mathbf H_i(v),$
\item if $u$ and $v$ are adjacent and $u$ belongs to $\mathbf B$
and $v$ belongs to $\mathbf U_w$ for some $w,$ by construction of
$\mathbf U_w$ the vertices $\delta(u)$ and $\delta(v)$ are adjacent in
$\mathbf H_i(v).$
\end{itemize}

By construction, $\delta$ is surjective.
To achieve the proof we verify that 
for all vertices $u$ the restriction of $\delta$ to
$N_{\mathbf K}(u)$ is a bijection onto $N_{\mathbf H_i(v)}.$ 

Once more there are three cases.
\begin{itemize}
\item If $u\in \mathbf H_w$ for some vertex $w$ and if $u$ is
not the root of $\mathbf H_w$ then, by definition of the universal
covering, the restriction of $\delta$ to
$N_{\mathbf K}(u)$ is a bijection onto $N_{\mathbf H_i(v)}.$
\item If $u\in B_{\mathbf B}(\varphi(v),r-1).$ We prove that
the restriction of $\gamma$ to $N_{\mathbf G_i}(\varphi^{-1}(u))$
is a bijection onto $N_{\mathbf G_i(v)}(\gamma(u)).$
By definition of $\gamma$ the restriction is surjective; furthermore
two vertices in the same ball of radius $1$ have different numbers
(it is a direct consequence of the Mazurkiewicz algorithm) thus
the restriction is also injective.
\item If $d(u,\varphi(v))=r$ then using the same argument that for
the previous item combined with the definition of the universal
covering we obtain the result.
\end{itemize}
\end{proof}
\begin{remark}
The previous result remains true for any radius bounded
by $r_{agree}^{(i)}(v).$
\end{remark}

\subsection{An Algorithm to Detect Stable Properties}
In this section we describe in our framework the algorithm by
Szymanski, Shy and Prywes  (the SSP algorithm for short) \cite{SSP}.
\subsubsection{The SSP Algorithm}
We consider a distributed algorithm which terminates when
all processes reach their local termination conditions.
Each process is able to determine only its own termination condition.
The SSP algorithm detects an instant in which the entire
computation is achieved.

Let $G$ be a graph, to each node $v$ is associated a predicate $P(v)$
and an integer $a(v).$ Initially $P(v)$ is false and $a(v)$ is equal
to $-1.$
The relabelling rules are the following, let $v_0$ be a vertex and let $\{v_1,...,v_d\}$ the set of vertices 
adjacent to $v_0.$ 
If $P(v_0)=false$ then $a(v_0)=-1;$
if $P(v_0)=true$ then $a(v_0)= 1 + Min\{a(v_k) \mid 0\leq k \leq d\}.$

\subsubsection{A Generalization of the SSP Algorithm}

We present here a generalization of the hypothesis under which the SSP
rules are run. For every  vertex $v$, the value of $P(v)$ is no more a
boolean and can have any value. Hence we will talk of the {\em
  valuation} $P$. Moreover, we {\em do not} require each
process to determine when it reachs its own termination
condition. Moreover the valuation $P$ must verify 
the following property: for any  $\alpha$, if $P(v)$ has the value
$\alpha$ and changes to $\alpha'\neq\alpha$ then it cannot be equal
to $\alpha$ at an other time. In other words, under this hypothesis,
the function is constant between two moments where it has the same
value. We say that the valuation $P$ is {\em value-convex}.

We extend the SSP rules and we shall denote by GSSP this generalisation. In
GSSP, the counter of  $v$ is incremented only if $P$ is constant on
the ball $B(v)$. As previously, every underlying rule that computes in
particular $P(v),$ has to be modified in order to eventually
reinitialize the counter. Initially $a(v)=-1$ for all vertices.
The GSSP rule modifies the counter $a$.

\label{grs:gssp}
\begin{rrule}{\strut}
  \ritem[\textsc{Rule\_forGSSP}]{Modified rule for GSSP}{
    \item \dots
    \item unchanged
    \item \dots %
      }{
    \item \dots
    \item unchanged
    \item \dots
    \item  For every vertex $v$ of $B(v_0)$, \\
      if $P'(v) \neq P(v)$ then
      \begin{itemize}
      \item       $a'(v):=-1.$ 
      \end{itemize}
      otherwise
      \begin{itemize}
      \item       $a'(v):=a(v).$
      \end{itemize}
      }
    \ritem[GSSP]{GSSP rule}{
    \item For all $v\in B(v_0), \; P(v) = P(v_0)$,
      }{
    \item  $a'(v_0) := 1 +  \min\{a(v) \mid v\in B(v_0)\}.$
      }
\end{rrule}

We shall now use the following notation. Let  $(\mathbf G_i)_{0\leq i
  }$ be a relabelling chain associated to the algorithm GSSP. We denote 
by    $a_i(v)$
(resp. $P_i(v)$) the value of the counter (resp. of the function)
  associated to the vertex $v$ of $\mathbf G_i.$ According to the
  definition of the GSSP rule, we remark that for every vertex $v$,
  $a(v)$ can be increased, at each step, by 1 at most and that if
  $a(v)$ increases from $h$ to $h+1$, that means that at the previous
  step, all the neighbours $w$ of $v$ were such that  $a(w)\geq h$ and
   $P(w)=P(v)$.  The following lemma is the iterated version of this 
   remark.
\begin{lemma}\label{iterated_a}
  For all $j$, for all $v$, for all $w\in B(v,a_j(v))$, there exists
  an integer $i\leq j$ such that
  \begin{eqnarray*}
    a_i(w)& \geq& a_j(v) - d(v,w),\\
    P_i(w)&=&P_j(v).
  \end{eqnarray*}
\end{lemma}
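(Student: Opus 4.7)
The plan is to prove a slight strengthening of the lemma by induction on $k$: whenever $a_j(v) \geq k \geq 0$, for every $w \in B_G(v,k)$ there exists $i \leq j$ satisfying $a_i(w) \geq k - d(v,w)$ and $P_i(w) = P_j(v)$. The lemma follows by taking $k = a_j(v)$. The base case $k=0$ is immediate, with $i=j$ and $w = v$.

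For the inductive step, assume $a_j(v) \geq k+1$. Recall the two ways $a(v)$ can change: the GSSP rule centered at $v$ can increment it by at most one, and the modified rule can reset it to $-1$ (precisely when $P(v)$ changes). Since $a$ starts at $-1$, there is a well-defined last step $j_0 \in [1,j]$ with $a_{j_0-1}(v) = k$ and $a_{j_0}(v) = k+1$, and at $j_0$ the GSSP rule was applied with center $v$. The key claim is that $P_{j_0-1}(v) = P_j(v)$: if $a(v)$ were ever reset in $(j_0,j]$, it would have to climb back up past $k$ to reach $a_j(v) \geq k+1$, contradicting the maximality of $j_0$; therefore, by the modified rule, $P(v)$ stays constant on $[j_0-1, j]$.

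Now I can unpack the precondition of the GSSP rule at step $j_0$: for every $u \in B_G(v)$, we have $a_{j_0-1}(u) \geq k$ and $P_{j_0-1}(u) = P_{j_0-1}(v) = P_j(v)$. Fix $w \in B_G(v,k+1)$. If $w = v$, pick $i = j$. Otherwise let $v'$ be a neighbour of $v$ on a shortest $v$-$w$ path, so that $d(v',w) = d(v,w) - 1 \leq k$ and $w \in B_G(v',k)$. The induction hypothesis, applied to the pair $(j_0 - 1, v')$ with $a_{j_0-1}(v') \geq k$, furnishes some $i \leq j_0 - 1 \leq j$ with $a_i(w) \geq k - d(v',w) = (k+1) - d(v,w)$ and $P_i(w) = P_{j_0-1}(v') = P_j(v)$, as required.

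The main obstacle is getting the bookkeeping right around $j_0$, in particular justifying that $P(v)$ is preserved between $j_0 - 1$ and $j$. This hinges on the value-convexity of $P$ together with the fact that $a(v)$ cannot descend and re-ascend across the value $k+1$ after $j_0$ without contradicting the choice of $j_0$. The reason to strengthen the statement to $a_j(v) \geq k$ rather than $= k$ is exactly to be able to invoke the hypothesis at the neighbour $v'$, whose counter at step $j_0 - 1$ is only known to be at least $k$.
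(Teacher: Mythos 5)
Your proof is correct and follows essentially the same route as the paper's: an induction along a shortest path from $v$ to $w$ in which, at each vertex, one rewinds to the step where its counter last attained the relevant value, so that the precondition of the GSSP rule (the new counter being $1+\min$ over the ball, and $P$ constant on the ball) transfers the counter lower bound and the $P$-value to the next neighbour one step earlier in time, the $P$-value being preserved across the rewind because any change of $P$ forces a reset of the counter. The only difference is organisational --- you induct on the threshold $k$ with the pair $(j,v)$ universally quantified and perform the rewind at the centre before recursing into the neighbour $v'$, whereas the paper fixes $(j,v)$, inducts on the distance, and performs the rewind at the intermediate vertex $u$ --- and the two unroll to the same argument.
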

\begin{proof}
  The proof is done by induction upon the radius $k\in[0,a_j(v)]$ of
  the ball. For  $k=0$, the result is true trivially.

  Suppose that the result is true for all vertices in the ball
  $B(v,k)$, $k\leq a_j(v)-1$. Now, we consider a vertex  $w$ at
  distance $k+1$ of  $v$. The vertex $w$ has a neighbour $u$ such that
  $d(v,u)=k$. By induction hypothesis, there exists $i_u\leq j$ 
  such that 
  $a_{i_u}(u)\geq a_j(v)-k$ and $P_{i_u}(u)=P_j(v)$.
 
  Let  $i$ be a step, in the steps preceding $i_u$, where the counter $u$
  reached $a_{i_u}(u)$ with $P_i(u)=P_{i_u}(u)$. This step exists for
  the counter increases of at most 1 at a time, and each time that
  $P(u)$ is modified, the counter $a(u)$ is reinitialized to  $-1$
  (modified rules for GSSP).

  Moreover, according to GSSP rule, we have necessarily,  $a_i(w)\geq
  a_{i_u}(u) -1$ and  $P_i(w)=P_i(u)$. Consequently $a_i(w) \geq
  a_j(v) -k -1$ and $P_i(w)=P_j(v)$. The result is true for  $w$ and
  so for every vertex at distance $k+1$.
\end{proof}

\newcommand{\aaa}{\ensuremath{\lfloor\frac{a_j(v)}{3}\rfloor}}
In particular, this proves that at any moment $j$, for all $v$, for
all  $w\in B(v,a_j(v))$, there exists a moment $i_w$ in the past such
that $P_{i_w}(w)=P_j(v)$. We now prove that, for all vertices  $w$ in
the ball \aaa, we can choose  the same $i_w$. This is a fundamental
property of GSSP algorithm.
\begin{lemma}[GSSP]\label{techlemmaGSSP}
   Consider an execution of the GSSP algorithm under the hypothesis
   that the function $P$ is value-convex. For all $j$, for all
   $v$, there exists $i\leq j$ such that for all $w\in B(v,\aaa)$,
   $P_i(w)=P_j(v)$.
\end{lemma}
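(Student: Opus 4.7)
The plan is to reduce the statement to Lemma~\ref{iterated_a} applied twice, and then extract a single common time by exploiting the value-convexity of $P$.

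Write $a = a_j(v)$ and $k = \lfloor a/3 \rfloor$. For every $w \in B(v,k) \subseteq B(v,a)$, Lemma~\ref{iterated_a} furnishes some time $t_w \le j$ with $P_{t_w}(w) = P_j(v)$ and $a_{t_w}(w) \ge a - d(v,w) \ge a - k \ge 2k$. My first step would be to select $w_0 \in B(v,k)$ achieving $i := t_{w_0} = \min_{w \in B(v,k)} t_w$. At this time $i$ we have $P_i(w_0) = P_j(v)$ and $a_i(w_0) \ge 2k$.

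Next I would apply Lemma~\ref{iterated_a} a second time, now at the pair $(w_0, i)$. Any $w \in B(v,k)$ satisfies $d(w_0, w) \le 2k \le a_i(w_0)$, so Lemma~\ref{iterated_a} produces a time $s_w \le i$ with $P_{s_w}(w) = P_i(w_0) = P_j(v)$. At this point I will have, for every $w \in B(v,k)$, two witnesses $s_w \le i \le t_w$ with $P_{s_w}(w) = P_{t_w}(w) = P_j(v)$ (the inequality $i \le t_w$ holding by the choice of $i$ as a minimum).

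The final step invokes the value-convexity hypothesis: since the set of times at which $P(w)$ equals any given value is an interval, and both endpoints $s_w$ and $t_w$ lie in the interval $\{t : P_t(w) = P_j(v)\}$, the intermediate time $i$ lies there as well, so $P_i(w) = P_j(v)$ uniformly in $w \in B(v,k)$. The delicate point, and the one where care is required, is the arithmetic that $a - k \ge 2k$ so that the second application of Lemma~\ref{iterated_a} reaches every vertex of $B(v,k)$ from $w_0$; the factor $3$ in the statement is precisely what makes this work, and choosing $i$ as a minimum (rather than a maximum) is what places $i$ inside each interval of convexity.
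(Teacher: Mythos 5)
Your proof is correct and rests on the same skeleton as the paper's: for each $w$ in the ball produce two witnesses $s_w\le i\le t_w$ with $P_{s_w}(w)=P_{t_w}(w)=P_j(v)$ and conclude by value-convexity, the factor $3$ supplying exactly the arithmetic $a_j(v)-\lfloor a_j(v)/3\rfloor\ge 2\lfloor a_j(v)/3\rfloor$ that lets the second application of Lemma~\ref{iterated_a} reach the whole ball. Where you genuinely diverge is in the choice of the common time $i$. The paper takes $i$ to be the \emph{first} step at which $a_i(v)=\lfloor a_j(v)/3\rfloor$ with $P_i(v)=P_j(v)$ --- an intrinsic property of $v$ alone --- and must then spend an extra application of Lemma~\ref{iterated_a} (a round trip $v\to w\to v$) together with the minimality of $i$ to establish $i\le i_{+}$; the lower witnesses are then obtained by a third application at $(v,i)$. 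You instead take $i=\min_w t_w$, which makes $i\le t_w$ automatic, and obtain all lower witnesses in one stroke from the vertex $w_0$ realising the minimum, via $a_i(w_0)\ge 2\lfloor a_j(v)/3\rfloor$ and the triangle inequality $d(w_0,w)\le 2\lfloor a_j(v)/3\rfloor$. The net effect is one fewer invocation of Lemma~\ref{iterated_a} and no appeal to a ``first step such that'' argument, at the cost of $i$ depending on the whole ball rather than on $v$ alone; both are equally valid since the statement only asserts existence of some $i\le j$. One cosmetic point: the times $t_w$ given by Lemma~\ref{iterated_a} are not unique, so you should fix one witness per $w$ before taking the minimum, but this is implicit in your write-up.
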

\begin{proof}\strut\par
  \begin{figure}[htbp]
    \begin{center}
      \input{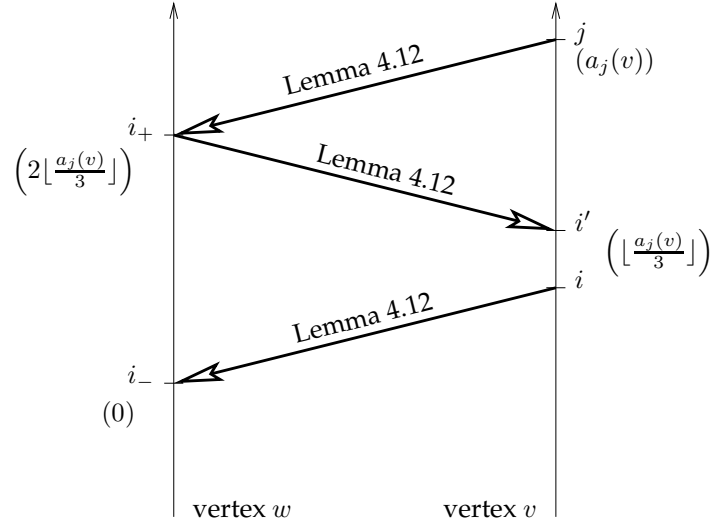}
      \caption{Proof scheme of Lemma \ref{techlemmaGSSP}: vertical
        axes denote time, the value between brackets are lower
        bounds for  the counter $a$.} 
      \label{fig:GSSP}
    \end{center}
  \end{figure}

    Let $i$ be the first step where $a_i(v)=\aaa$ and
    $P_i(v)=P_j(v)$. Let $w\in B(v,\aaa)$, and denote  $i_{+}$ a step,
    which existence is given by Lemma \ref{iterated_a}, such that
    $a_{i_{+}}(w)\geq a_j{v} -d(v,w)\geq 2\aaa$  and
    $P_{i_+}(w)=P_j(v)$. Now, let's apply 
    Lemma \ref{iterated_a} with
    center  $w$ at step $i_{+}$. We obtain  $i'\leq i_{+}$ such that  
    $a_{i'}(v)\geq  a_{i_{+}}(w)-d(w,v)\geq \aaa = a_i(v)$ and
    $P_{i'}(v)=P_{i_+}(w)=P_j(v)$. By minimality of  $i$, $i\leq i'$
    and finally $i \leq i_{+}$.

    Now, we apply another time Lemma \ref{iterated_a},
    with center $v$, at step $i$. We obtain then $i_{-}\leq
    i$, such that $a_{i_{-}}(w)\geq
    a_i(v)-d(v,w)\geq 0$ and  $P_{i_-}(w)= P_i(v)$.

    To conclude, we obtain two steps  $i_{-}$ and  $i_{+}$ such that
     $P_{i_{-}}(w)=P_{i_{+}}(w)=P_j(v)$, and $i_{-}\leq i\leq
    i_{+}$. As $P$ is value-convex, we get
    $P_i(w)=P_j(v)$.
\end{proof}

\begin{remark}
  One third of the counter is an optimal radius of stability. It is
  possible to construct examples where the function $P$ is not
  necessarily constant on the ball of center $v$ and of radius $\aaa+1$.
\end{remark}

   \macro{\algo}{\mathcal A}

In these settings, even if the valuation stabilizes, GSSP is always
guaranted to not terminate. In order 
to have noetherian relabellings systems, we define, given a
relabelling system \grs to which GSSP is applied, $\algo(\grs,P,\varphi)$
to be the relabelling system based upon \grs and GSSP with valuation $P$ and
adding $\varphi$ to the preconditions of the GSSP Rule. Now termination
is closely related to the properties of $P$ and $\varphi$. We define a
property, that is only sufficient, for termination. 

\begin{definition}
  The pair $(P,\varphi)$ is {\em uniform} if for any run of $\algo(P,\varphi)$,
  there exist time $i$ and  $r_0\in \N$ such that for all vertex $v$,
  for all $j\geq i$, we have $\aaa = r_0 \;\Leftrightarrow\;    \neg \varphi_j(v)$. 
\end{definition}

Obviously, a uniform pair implies that $\algo(P,\varphi)$ is noetherian.
The termination of the increase of $\aaa$ at a node $v$ that
first stops, does not prevent the counter at the other  nodes to reach
this particular value.

\subsection{Mazurkiewicz Algorithm + GSSP Algorithm =
  Universal Local Computation}

The main idea in this section is to use the GSSP algorithm in order to
compute, in each node, the radius of stability of \mk. In other words,
each node $u$ will know how far other nodes agree with its
reconstructed graph  $\bH_{M(u)}$. Let  ${\mathbf G }=(G,\lambda)$ be
a labelled graph, let $(\mathbf G_i)_{0\leq i}$ be a relabelling chain
associated to a run of Mazurkiewicz' Algorithm on the graph $\mathbf
G.$ The vertex $v$ of $\mathbf G_i$ is associated to the label
$(\lambda(v),(n_i(v),N_i(v),M_i(v))).$ Using the interpretation of
section \ref{reconstruct}, this labelling enables to construct a
graph that is an asynchronous network snapshot, a would-be cartography
of the network.

We now assume the main relabelling system to be \mk, the valuation to
be $\bH$. We will have to work a bit on $\varphi$ in order to get a
noetherian system. We denote by
$\algo_0$ the system $\algo(\mk,\bH,false)$. 
The output of $\algo_0$ on the node $v$ is $< \mathbf H_i(v),a_i(v)
>.$

\newcommand{\carto}{\ensuremath{\mathcal{ AS}}\xspace}

Looking only at the labels, we have, from
Theorem~\ref{fondM}, that \bH is value-convex. Then,
from Lemma~\ref{techlemmaGSSP} and Theorem~\ref{agree}, we get the
main property of the computation of $\algo_0$:  
\begin{theorem}[quasi-covering progression] \label{qcov_prog} At
  all step $j$, for all vertex $v$, the output of $\algo_0$ on $v$
  is a couple $< \mathbf H_j(v), a_j(v) > $ such that if
  $\mathbf H_j \neq \bot$, %
  then there exists a previous step $i<j$, such that $\mathbf G_i$
  is a quasi-covering of $\mathbf H_j(v)$ of center $v$
  and of radius $\lfloor \frac{a_j(v)}{3}\rfloor$.
\end{theorem}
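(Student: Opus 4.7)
The plan is to directly combine the three main tools already assembled in Section~4: Theorem~\ref{agree}, the GSSP stability Lemma~\ref{techlemmaGSSP}, and the value-convexity of the valuation $\bH$ noted just before the statement. The algorithm $\algo_0$ is, by definition, $\algo(\mk,\bH,\textsc{false})$, so the GSSP counter $a$ is driven precisely by the stability of the reconstructed-graph valuation $\bH$.

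\medskip\noindent\textbf{Step 1: value-convexity of $\bH$.} First I would argue that the valuation $v\mapsto \bH_i(v)$ is value-convex. By Lemma~\ref{croissance}, $M_i(v)$ grows monotonically in $i$; consequently, once a triple $(l,n,N)$ that is strongest at position $n$ in $M_i(v)$ is displaced by a strictly stronger $(l',n,N')$, the latter remains in every subsequent mailbox, so the old strongest triple cannot reappear. The same reasoning handles the $\bot$ case, since the predicate defining $\bot$ is monotone in $M$. Hence the valuation induced on every node by $\bH$ is value-convex, and Lemma~\ref{techlemmaGSSP} applies.

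\medskip\noindent\textbf{Step 2: invoke GSSP stability.} Fix $v$ and $j$; if $\bH_j(v)=\bot$ there is nothing to prove, so assume $\bH_j(v)\neq\bot$. Applying Lemma~\ref{techlemmaGSSP} to $P=\bH$ at step $j$ and center $v$ produces a step $i\leq j$ such that $\bH_i(w)=\bH_j(v)$ for every $w\in B_{\bG_i}(v,\lfloor a_j(v)/3\rfloor)$. In particular $\bH_i(v)=\bH_j(v)$, and by the very definition of the radius of agreement
\[
 r_{\text{agree}}^{(i)}(v)\;\geq\;\left\lfloor\tfrac{a_j(v)}{3}\right\rfloor.
\]

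\medskip\noindent\textbf{Step 3: invoke Theorem~\ref{agree}.} Theorem~\ref{agree} then tells us that $\bG_i$ is a quasi-covering of $\bH_i(v)=\bH_j(v)$, centred at $v$, of radius $r_{\text{agree}}^{(i)}(v)$. The remark immediately following Theorem~\ref{agree} allows us to shrink this radius to the asserted value $\lfloor a_j(v)/3\rfloor$. This yields exactly the quasi-covering stated in the theorem. For the strict inequality $i<j$ claimed in the statement, one observes that if $a_j(v)\geq 0$ then between the initialisation $a=-1$ and step $j$ at least one GSSP increment took place, so the $i$ produced by Lemma~\ref{techlemmaGSSP} can always be taken strictly smaller than $j$ (when $a_j(v)=0$ the claim is vacuous since the quasi-covering of radius $0$ is trivial).

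\medskip\noindent\textbf{Main obstacle.} The only non-routine point is Step~1: checking value-convexity of the derived valuation $\bH$, including the $\bot$ branch. Everything else is mechanical application of results already proved in the paper. Once value-convexity is in hand, the theorem is essentially the ``commutative diagram'' obtained by reading Lemma~\ref{techlemmaGSSP} horizontally (stable past neighbourhood of size $\lfloor a_j(v)/3\rfloor$) and Theorem~\ref{agree} vertically (stable agreement $\Rightarrow$ quasi-covering).
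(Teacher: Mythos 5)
Your proposal is correct and follows exactly the route the paper takes: the paper derives this theorem in one line from the value-convexity of the valuation $\bH$, Lemma~\ref{techlemmaGSSP}, and Theorem~\ref{agree} (with its following remark), which is precisely your Steps 1--3. Your elaboration of value-convexity via the monotone growth of mailboxes (Lemma~\ref{croissance}) and your handling of the $i<j$ versus $i\leq j$ point are reasonable fillings-in of details the paper leaves implicit.
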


And as the underlying Mazurkiewicz Algorithm is always terminating, we
have that the value of \bH will stabilize with $a$ going to the
infinite. 

\macro{\rtrust}{r^{\mathrm{t}}}

Finally, and considering the previous theorem, we note
$\rtrust(v)=\lfloor \frac{a_j(v)}{3}\rfloor$, the {\em radius of trust}
for the algorithm $\algo_0$ at node $v$. In the following section, we
show how to get a noetherian relabelling system from $\algo_0$.

\subsection{Computing Further Informations}

   \macro{\grsdxi}{\grs^\dxi}

The following sections show that \mk, Mazurkiewicz' algorithm, is
actually computing the maximal information we can get
distributively. In this subsection, we precise how to extract more
information if we have more structural knowledge. In the following, we
suppose we know a recursive graph family \gfam to which the underlying
network belongs.

Despite the non-recursivity of \covgfam (cf
Remark~\ref{covgfamnotrecursive}), we explain how to use the
 algorithm $\algo$ and Theorem \ref{qcov_prog} to
determine when \bH is in \covgfam. In the remaining of this part, we
suppose $\algo_0$ is running and all considered \bH and $r$ are outputs
of $\algo_0$.

Given these outputs, we describe a (sequential) computation that is
done by all the nodes. 

   \macro{\dxi}{{\chi_\covgfam}}

   \dontprintsemicolon %
   \newcommand{\Com}[1]{{\it \hspace{1cm}/* #1 */}}
   \begin{algorithm}[H] \label{compf}
     \caption{ \dxi: Knowing if \bH is in \covgfam.}
     \KwData{%
       a graph ${\mathbf H}\in\lgraph$,\\
       $r\in\N$.%
}
     \KwResult{$\bot$ or $Yes$ }
     \Repeat{ \bK is a quasi-covering of center $u$ and radius $r$ of \bH
       \Com{Loop ends by Theorem \ref{qcov_prog}
     }}{  
       $\bK\in\gfam$\Com{Enumerate (always in the same order) all the
         graphs of \gfam by order of increasing diameter} \;
     }
     \eIf{\bK is a quasi-covering of radius $r$ of \bH for any vertex
       of \bH and $r > \Delta(\bH)$}{%
       \KwOut{Yes}%
     }{\KwOut{$\bot$} }
   \end{algorithm}

   We note $\dxi(\bH,u,r)$ the result of this (semi-)algorithm. If one
   of the input is clear from the context, it is omitted.

   \begin{lemma}\label{lemmadxi}
     Suppose a graph family \gfam is given.
     For any graph \bH, any vertex $u$, any $r\in\N$, if
     $\dxi(\bH,u,r) \mbox{ terminates and outputs } Yes$, then
     $\bH\in\covgfam$. 
   \end{lemma}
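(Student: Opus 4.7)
The plan is to reverse-engineer the conditions that must hold whenever $\dxi(\bH,u,r)$ outputs $Yes$, and then show that these conditions are strong enough to force $\bK$ to be a genuine covering of $\bH$. Suppose $\dxi(\bH,u,r)$ terminates with output $Yes$. Tracing through the algorithm, the inner loop has produced some $\bK \in \gfam$ such that $\bK$ is a quasi-covering of $\bH$ of radius $r$ centered at $u$, and moreover the final conditional has succeeded: $\bK$ is a quasi-covering of $\bH$ of radius $r$ at every admissible center, with $r > \Delta(\bH)$.

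Next I would establish the following key claim: under these hypotheses, there is a well-defined surjective homomorphism $\gamma : \bK \to \bH$ that is a covering. For each center $z \in V(\bK)$, unfolding the definition of a quasi-covering of radius $r$ at $z$ gives a covering $\bK_0^{(z)} \to \bH$ via some $\delta_z$, a vertex $z_0 \in V(\bK_0^{(z)})$, and an isomorphism $\varphi_z : B_\bK(z,r) \to B_{\bK_0^{(z)}}(z_0,r)$ such that the local piece of $\gamma$ agrees with $\delta_z \circ \varphi_z$ on $B_\bK(z,r)$. Since $r \geq 1$, this immediately makes $\gamma$ a bijection from $B_\bK(z,1)$ onto $B_\bH(\gamma(z))$ for every vertex $z$, which is precisely the local-bijectivity condition of a covering. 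The diameter hypothesis $r > \Delta(\bH)$ is used to ensure surjectivity and consistency: the image $\delta_z(B_{\bK_0^{(z)}}(z_0, r))$ is all of $V(\bH)$, so $\gamma$ hits every vertex of $\bH$, and any two local descriptions at overlapping centers project to the same vertex of $\bH$, making $\gamma$ globally well-defined.

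Having shown that $\gamma : \bK \to \bH$ is a covering with $\bK \in \gfam$, the conclusion $\bH \in \covgfam$ is immediate from the definition $\covgfam = \{\bH \mid \exists \bG \in \gfam,\ \bG \text{ is a covering of } \bH\}$.

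The main obstacle is making rigorous the step in which the per-center quasi-covering data are glued into a single global covering map. Each quasi-covering instance gives a local isomorphism into a potentially different associated covering $\bK_0^{(z)}$, and one must argue that these local pictures are mutually compatible. The diameter bound $r > \Delta(\bH)$ is precisely what is needed here: it forces any two overlapping radius-$r$ balls in $\bK$ to project consistently onto $\bH$, so the local bijections from the quasi-covering definitions align into one coherent covering map $\gamma$.
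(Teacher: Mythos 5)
There is a genuine gap at exactly the point you flag as ``the main obstacle'': the claim that the bound $r>\Delta(\bH)$ ``forces any two overlapping radius-$r$ balls in $\bK$ to project consistently onto $\bH$'' is not justified, and it is false in general. The quasi-covering data attached to two different centers (or even to the same center) are only determined up to a deck transformation of the covering $\bK_0^{(z)}\to\bH$: for instance $R_6$ carries several distinct covering maps onto $R_3$, obtained from one another by composing with a rotation of $R_3$, and each of them furnishes a valid quasi-covering structure of any radius at any center. Nothing about the diameter of $\bH$ makes these local charts agree on overlaps, so the map $\gamma$ you try to assemble is not well defined, and the whole gluing step collapses. (Your local-bijectivity and surjectivity observations are fine for each individual chart, but they are statements about $\delta_z\circ\varphi_z$, not about a single global $\gamma$.)

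The paper avoids gluing entirely by working with a \emph{single} center. The test performed by $\dxi$ guarantees that $r-1$ is at least the eccentricity of the chosen center $z$ in $\bK$ (the paper phrases this as $r\geq\Delta(\bK)+1$), hence $B_{\bK}(z,r-1)=\bK$ and the quasi-covering is non-strict in the sense of the paper's definition; by Remark~\ref{nonstrictqcovarecov} a non-strict quasi-covering is already a covering, so the one chart $\delta_z\circ\varphi_z$ is itself a covering $\bK\to\bH$ with $\bK\in\gfam$, and $\bH\in\covgfam$ follows. If you want to repair your argument along your own lines, the fix is the same: do not try to reconcile the charts at different centers, but show that one chart of sufficiently large radius already exhausts $\bK$ (equivalently, that $\bK$ is isomorphic to its associated covering $\bK_0$), which is precisely the non-strictness argument.
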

   \begin{proof}
     Denote $\bK$ the
     quasi-covering that ends the loop of \dxi for input
     $\gfam,\bH,u,r$.
     As $r\geq \Delta(\bK)+1$, \bK is a non-strict quasi-covering.
     $\bK$ is then a covering
     of \bH.
     
     The graph $\bK$ being in \gfam, we have that $\bH\in\covgfam$.
   \end{proof}
   
   \begin{lemma}\label{memer}
     Let $\bH\in\covgfam$, for all vertices $u,v$ in \bH, for all
     $r$, $$\dxi(u,r)=Yes \mbox{ iff } \dxi(v,r)=Yes.$$
   \end{lemma}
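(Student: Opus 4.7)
The plan is to argue by symmetry, reducing to the implication $\dxi(u,r)=Yes \Rightarrow \dxi(v,r)=Yes$, and then exploit the structural fact (essentially Lemma~\ref{lemmadxi}) that whenever $\dxi$ outputs $Yes$, the graph $\bK$ exiting the loop is actually a covering of $\bH$, not merely a strict quasi-covering. Since a covering is a non-strict quasi-covering of any radius centered at any vertex (Remark~\ref{nonstrictqcovarecov}), such a $\bK$ automatically serves as a valid witness for any other base vertex.

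Concretely, let $\bK_u \in \gfam$ be the graph that exits the loop in $\dxi(u,r)$. By the argument used in the proof of Lemma~\ref{lemmadxi}, the final check together with the radius condition force $\bK_u$ to be a genuine covering of $\bH$. Because the covering is surjective, there is some $z_v \in V(\bK_u)$ above $v$, and trivially $\bK_u$ is a (non-strict) quasi-covering of $\bH$ of radius $r$ centered at $z_v$. So $\bK_u$ satisfies the loop-exit condition of $\dxi(v,r)$, and that loop terminates with some $\bK_v$ appearing at or before $\bK_u$ in the canonical enumeration. Since the enumeration proceeds by increasing diameter, $\Delta(\bK_v) \leq \Delta(\bK_u)$, so $r$ is still large enough relative to $\bK_v$ that $B_{\bK_v}(z,r-1) = \bK_v$ for the quasi-covering center $z$; the quasi-covering is then non-strict, and by Remark~\ref{nonstrictqcovarecov} $\bK_v$ is itself a covering of $\bH$. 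The ``for any vertex of $\bH$'' part of the final check then holds automatically for $\bK_v$, and the radius condition is inherited from the $Yes$ verdict for $u$. Therefore $\dxi(v,r) = Yes$.

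The main subtlety I expect is reconciling the radius condition ``$r > \Delta(\bH)$'' displayed in Algorithm~\ref{compf} with the ``$r \geq \Delta(\bK)+1$'' that the proof of Lemma~\ref{lemmadxi} actually uses: the step above where we deduce $B_{\bK_v}(z,r-1) = \bK_v$ needs the latter form. Either this reads as a harmless typo (the intended threshold being $r > \Delta(\bK)$, consistent with the enumeration order), or else one must rule out strict quasi-coverings of $\bH$ arising before a genuine covering of $\bH$ in the enumeration by a sheet-counting argument based on Lemma~\ref{techlemma}.
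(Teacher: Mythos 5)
Your proof is correct and follows the same basic idea as the paper's, but it is substantially more careful, and the extra care is genuinely needed. The paper's own proof is two sentences long: from $\dxi(u,r)=Yes$ it takes the witness $\bK$ that ended the loop for $u$, observes that the condition for output $Yes$ makes $\bK$ a quasi-covering of radius $r$ centred over $v$ as well, and concludes $\dxi(v,r)=Yes$ --- silently assuming that the run for $v$ exits its loop with the \emph{same} witness. You address exactly the point it skips: since the loop stops at the \emph{first} suitable graph, the run for $v$ may exit earlier, at some $\bK_v$ preceding $\bK_u$ in the enumeration, and one must check that this $\bK_v$ also passes the final test. Your resolution --- $\Delta(\bK_v)\le\Delta(\bK_u)<r$ because the enumeration is by increasing diameter, hence $B_{\bK_v}(z,r-1)=\bK_v$, hence by Remark~\ref{nonstrictqcovarecov} the quasi-covering is non-strict and $\bK_v$ is a genuine covering of \bH, which is then a radius-$r$ quasi-covering over every vertex of \bH --- is sound. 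You are also right to flag that the test written in Algorithm~\ref{compf} ($r>\Delta(\bH)$) does not match the inequality $r\ge\Delta(\bK)+1$ actually invoked in the proof of Lemma~\ref{lemmadxi}; a covering $\bK$ of \bH can have diameter far larger than $\Delta(\bH)$, so the intended threshold must be $r>\Delta(\bK)$, and both your argument and the paper's only go through under that reading. In short: same route as the paper, but yours closes a real gap left open by the published two-line proof.
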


   \begin{proof}
     Suppose $\dxi(u,r)=Yes$. Denote $\bK$ the
     quasi-covering that ends the loop of \dxi for input $u$. 

     By condition for output $Yes$, we have that $\bK$ is also a
     quasi-covering of center $v$ and radius $r$ of \bH, hence  $
     \dxi(v,r)=Yes.$  
   \end{proof}

   We define 
   $\varphi_I$ to be 
   $\dxi(\bH,n(v_0),r^{\mathrm{t}}(v_0)) \neq Yes$.
   
   \renewcommand{\carto}{\ensuremath{\mathrm{Carto}}\xspace}

   We note \carto the system $\algo(\mk,\bH,\varphi_I)$. This will give a 
   noetherian system as demonstrated below. The
   output of \carto is $( \bH, \rtrust )$.

   \begin{theorem}[Asynchronous snapshot] \label{prebH}\label{dxi}
     With inputs $\bH$ and $r^{\mathrm{t}}$ computed by the
     relabelling system \carto, we have the following properties.
     \begin{theoenum}
     \item The semi-algorithm \dxi with inputs \bH and $\rtrust$ terminates.
     \item \label{bHok}  At any time, if $\dxi(\bH,\rtrust)=Yes$, then
       $\bH\in\covgfam$.  
     \item If \bH is defined, then there exists a previous step $i$, such
       that $\mathbf G_i$   
       is a quasi-covering of $\bH$ of center $v$ and of radius
       $\rtrust$. 
     \end{theoenum}
   \end{theorem}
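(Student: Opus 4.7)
My plan is to derive the three claims independently, exploiting the fact that \carto is built on top of $\algo_0$ by strengthening the GSSP precondition with $\varphi_I$. Parts~(iii) and~(ii) should follow as short corollaries of results already in hand, and the real work lies in establishing termination of \dxi in part~(i).

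For part~(iii), I would observe first that adding a precondition to a relabelling rule can only forbid steps; it cannot create new behaviour. Hence every execution of \carto up to step $j$ is also a valid execution of $\algo_0$, and Theorem~\ref{qcov_prog} applies without modification. Whenever $\mathbf H$ at vertex $v$ is not $\bot$, there is therefore a previous step $i<j$ such that $\mathbf G_i$ is a quasi-covering of $\mathbf H$ centered on $v$ of radius $\lfloor a_j(v)/3 \rfloor$, which is exactly $\rtrust(v)$. Part~(ii) is then immediate from Lemma~\ref{lemmadxi}, whose statement is precisely that $\dxi(\mathbf H,u,r)=Yes$ forces $\mathbf H \in \covgfam$, with no assumption on how the arguments were produced.

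The non-trivial point is~(i), termination of the semi-algorithm \dxi on the inputs delivered by \carto. Once its outer loop halts, \dxi returns $Yes$ or $\bot$ and terminates, so it suffices to show that the enumeration of \gfam by increasing diameter eventually hits a $\mathbf K \in \gfam$ that is a quasi-covering of $\mathbf H$ centered at the relevant vertex of radius $\rtrust(v)$. I would take the witness given by~(iii): at some previous step $i$, the graph $\mathbf G_i$ is such a quasi-covering. Forgetting the Mazurkiewicz labels, which were introduced by the computation and are absent from the initial labelled graph, the underlying initial graph $\mathbf G$ still lies in \gfam and inherits the quasi-covering property with respect to the original labelling retained in the mailbox reconstruction $\mathbf H = \mathbf H_{M_j(v)}$. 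Since \gfam is recursive and the enumeration proceeds by increasing diameter, $\mathbf G$ will be reached in finitely many steps and the loop test will succeed on it.

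The subtlety I expect to be the main obstacle is precisely this label bookkeeping in part~(i): one has to check that the quasi-covering between $\mathbf G_i$ and $\mathbf H_{M_j(v)}$ produced by Theorem~\ref{qcov_prog} descends to a genuine quasi-covering between the original-labelled $\mathbf G \in \gfam$ and the original-label projection of $\mathbf H$, so that the enumeration inside \dxi has a legitimate target. Since the morphism witnessing the quasi-covering preserves the combined $(l,n,N,M)$-labels, it preserves the $l$-component in particular, and I expect the compatibility to follow from a straightforward projection. Once this is cleanly formalised, termination reduces to well-foundedness of diameter on the recursive family \gfam.
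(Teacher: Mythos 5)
Your proof is correct and follows essentially the same route as the paper's own (three-line) proof: part~(iii) from the observation that every run of \carto is a prefix of a run of $\algo_0$ combined with Theorem~\ref{qcov_prog}, part~(ii) from Lemma~\ref{lemmadxi}, and part~(i) from Theorem~\ref{qcov_prog} supplying a member of \gfam that witnesses termination of the enumeration loop in \dxi. The label-projection bookkeeping you flag in part~(i) is a genuine detail the paper leaves implicit, and your handling of it (the quasi-covering morphism preserves the original $\lambda$-component, so it descends to the initial graph $\bG\in\gfam$ and the $l$-projection of \bH) is sound.
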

   \begin{proof}
     The first property is given by Theorem~\ref{qcov_prog}.
     The second one is then by Lemma~\ref{lemmadxi}.
     
     As every run of \carto is a prefix of a run of $\algo_0$, we get
     the final property by \ref{qcov_prog}.
   \end{proof}

   \begin{theorem}\label{dxinoeth}\label{algo0noeth} 
     The system \carto is noetherian.
   \end{theorem}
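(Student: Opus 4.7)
The plan is to argue by contradiction: suppose $\rho$ is an infinite run of $\carto$. The Mazurkiewicz component is noetherian by Theorem~\ref{fundamental}, so I can restrict attention to a tail in which only the GSSP rule fires, and then show that this tail must itself be finite.

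I would first choose $t_0$ past the last $\mk$-rule application in $\rho$; past $t_0$ the valuation $\bH$ is constant everywhere, with stable value $\bH^*$, and by Proposition~\ref{calcul} $\bG$ is a covering of $\bH^*$. Since $\bG\in\gfam$ this gives $\bH^*\in\covgfam$. Next I would observe that $m_t:=\min_v a_t(v)$ is non-decreasing for $t\geq t_0$: a GSSP firing at $v_0$ sets $a_{t+1}(v_0)=1+\min\{a_t(v)\mid v\in B(v_0)\}\geq 1+m_t>m_t$, while every other counter is unchanged. Since $m_t\in\mathbb Z$ either $m_t\to\infty$ or $m_t$ stabilizes at some $m^*<\infty$, and I would treat these two cases separately.

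In the first case $a_t(v)\geq m_t\to\infty$ at every vertex, so $\rtrust_t(v)\to\infty$ too. Fix $r_0>\max(\Delta(\bH^*),\Delta(\bG))$; once $\rtrust_t(v)>r_0$ holds at every vertex I claim $\dxi(\bH^*,n(v),\rtrust_t(v))$ returns $Yes$. Indeed, $\bG$ is itself a covering of $\bH^*$ and hence a (non-strict) quasi-covering of every radius centered at every vertex, so the loop of $\dxi$, which enumerates $\gfam$ by increasing diameter, returns a first $\bK$ with $\Delta(\bK)\leq\Delta(\bG)<\rtrust_t(v)$. By Remark~\ref{nonstrictqcovarecov} such a $\bK$ is a covering of $\bH^*$, hence a quasi-covering for any vertex of $\bH^*$; combined with $\rtrust_t(v)>\Delta(\bH^*)$ this makes the outer test succeed. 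Therefore $\varphi_I$ fails everywhere and no GSSP rule can fire, contradicting the infiniteness of $\rho$.

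In the second case, for $t$ large the set $V^*_t:=\{v\mid a_t(v)=m^*\}$ is non-empty, and since firing at any $v\in V^*_t$ would push $a(v)$ above $m^*$, the sequence $V^*_t$ is non-increasing and stabilizes to a non-empty $V^*_\infty$. Past stabilization no firing can occur on $V^*_\infty$, so all remaining firings happen on $V\setminus V^*_\infty$ with boundary values pinned at $m^*$, and the counters there stay in a bounded range determined by the graph. After the shift $b(v):=a(v)-m^*$ this is exactly the standard asynchronous SSP fixed-point iteration $b(v):=1+\min\{b(w)\mid w\in B(v)\}$ with boundary condition $b|_{V^*_\infty}=0$, known to converge in finitely many firings. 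The main obstacle here is excluding perpetual oscillation: GSSP can in principle both increase and decrease $a(v)$, so one must argue via a Lyapunov function (for instance $\Phi:=\sum_{v\notin V^*_\infty}\bigl(a(v)-m^*-d(v,V^*_\infty)\bigr)^2$ strictly decreasing on every label-changing firing) or the equivalent classical SSP convergence bound. Thus the tail is finite in this case too, giving the final contradiction.
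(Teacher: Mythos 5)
Your overall architecture is sound, and Case 1 is essentially correct: once the minimum counter tends to infinity, every node's radius of trust exceeds $\max(\Delta(\bH^*),\Delta(\bG))$, the loop of \dxi must stop at some $\bK$ of diameter at most $\Delta(\bG)$ (since $\bG$ itself qualifies), such a $\bK$ is a non-strict quasi-covering and hence a covering, so \dxi answers $Yes$, $\varphi_I$ fails everywhere and nothing is enabled. This is, in substance, what the paper argues, phrased as: the pair $(\bH,\varphi_I)$ is uniform, i.e.\ all guards switch off at the same threshold value of $r^{\mathrm t}$ by Lemma~\ref{memer}. You are in fact more careful than the paper in isolating the possibility that the minimum counter does \emph{not} diverge.

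The gap is in Case 2. The proposed Lyapunov function $\Phi=\sum_{v\notin V^*_\infty}\bigl(a(v)-m^*-d(v,V^*_\infty)\bigr)^2$ does \emph{not} strictly decrease on every label-changing firing. Counterexample: on the path $x-y-z$ with $x$ pinned at $a(x)=m^*=0$, $a(y)=3$, $a(z)=2$, firing the GSSP rule at $z$ gives $a'(z)=1+\min\{3,2\}=3$, a genuine label change that moves $a(z)$ \emph{away} from its target $m^*+d(z,V^*_\infty)=2$ and raises $\Phi$ from $4$ to $5$. Such configurations are reachable because the modified GSSP rule resets counters to $-1$ asynchronously during the \mk phase, so at the start of the GSSP-only tail adjacent counters may differ by much more than $1$; for the same reason you cannot simply invoke ``classical SSP convergence'', which assumes counters start uniformly at $-1$ and are therefore monotone. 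The statement of your Case 2 is nevertheless true, but it needs a different argument: let $W$ be the set of vertices firing infinitely often and $\alpha=\min_{v\in W}\liminf_t a_t(v)$ (finite, since you correctly note the counters are bounded here); a firing at a minimizer $v_0$ that returns its counter to $\alpha$ requires a ball-neighbour with value $\alpha-1$, which for large $t$ can be neither a vertex of $W$ (minimality of $\alpha$) nor $v_0$ itself, hence is a frozen vertex; but a frozen neighbour at $\alpha-1$ forces every later firing at $v_0$ to output at most $\alpha$, so $a_t(v_0)$ becomes constant and $v_0$ stops being enabled, contradicting $v_0\in W$. Until Case 2 is repaired along such lines, the proof is incomplete.
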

   \begin{proof}
     We  show that $(\bH,\varphi_I)$ is uniform.
     Until \mk terminates, the modified GSSP part of the system has no
     significant consequences (the computations of $r^{\mathrm{t}}$
     is resetted whenever a rule of \mk is applied). When
     \mk is finished,  $r^{\mathrm{t}}$ starts to increase. It will
     increase until the computation of $\dxi(\bH,r^{\mathrm{t}})$
     outputs $Yes$ on some node $v$.  

     At this moment, and at this node $v$, $\varphi_I$ is no more
     true. As we are working with the final labelling of \mk, \bH has 
     the same value on all nodes, hence from Lemma \ref{memer}, the
     computation of \dxi will output $Yes$ for the same value of
     $r^{\mathrm{t}}$ on every node.

     Then, \carto is noetherian. 
   \end{proof}

   \begin{remark}
     As a small optimisation for \dxi, it shall be noticed that it is
     not to be run if \rtrust is smaller than the diameter of \bH.
   \end{remark}

   Without further informations about \gfam, it seems difficult to deduce
   anything more.

\section{Termination Detections}
First, we recall from the previous section:
let $\grs$ be a locally generated relabelling relation,
let $\mathbf G$ a labelled graph, we say that $\mathbf G$ is an
irreducible configuration modulo $\grs$ if $\mathbf G$ is a $\grs$-normal
form, \ie,  no further step with  $\grs$ is possible
(${\mathbf G}{\grs} {\mathbf G'}$ holds for no $\mathbf G'$).

Irreducibility with respect to a relabelling relation yields a notion
of implicit termination: the computation has ended - no more
relabelling rule can be applied - but no node is aware of the
termination. On the other hand, one shall ask a node to be 
 aware of the termination of the algorithm. We will see that one
can define various flavour of detection of the termination of a
local computation.
\begin{itemize}

\item Termination of the algorithm but without detection: {\em implicit termination}
\item The nodes end in a {\em terminal state}, that is a state in which the
  node knows it will stay forever (whatever is happening elsewhere in the
  network): {\em detection of the local termination}
\item 
The nodes know when all other nodes have computed their final output
value.  This is the {\em observed termination detection} as when termination
is detected, some observation computations are not necessarily
terminated. Due to the asynchronous aspect of local computations,
there is still some {\em observational} computations that are going
on. 
\item A node enters a special state that indicates that the 
  algorithm has {\em terminated}. This is, obviously, the last step of
  the computation. 
\end{itemize}

The three last cases are explicit terminations. Termination of a
distributed algorithm is usually implicitly assumed to be (one kind
of) explicit.

We will see that these various notions are distinct and form a strict
hierarchy.  First we will give the formal definitions, some examples
and then the characterisations of each termination detection. The
characterisations are complete except for the local termination
detection where we have results only for uniform tasks, that is,
local computations ending in a uniform labelling of the
network.

\macro{\mem}{{\texttt{mem}}}
\macro{\xlabel}{{\texttt{x}}}
\macro{\out}{{\texttt{out}}}
\macro{\lterm}{{\texttt{term}}}

\newcommand{\term}{\textsc{Term}\xspace}

\subsection{Normalisation of the Labellings}

In order to have a unified presentation of the various results, we
restrict ourselves to ``normalised relabelling systems'' w.l.o.g. 

\begin{definition}

  A {\em normalised labelled graph} \bG is a labelled graph whose
  labelling is of the form $(\mem,\out,\lterm).$

  A {\em normalised relabelling system} \grs is a graph
  relabelling system on normalised graphs where 
  \begin{itemize}
  \item \mem can be used in preconditions and relabelled,
  \item \out is only relabelled,
  \item \lterm is only relabelled and has a value in $\{\bot,\term\}$. 
  \end{itemize}

  We also use the following convention: if the initial labelled
  graph is $\bG=(G,in)$ then it is implicitly extended to the
  normalised labelling $\left(G,(in,\bot,\bot)\right)$. The initial
  value of \mem is therefore given by $in$.
\end{definition}

Now, we make the following assumptions. All graphs are labelled graphs
and are all considered to be  normalised. All relabelling relations
are relabelling relations of normalised labelled graphs.

We also use the following notations. Let \bG and \bG' be some given
normalised graphs then, for any vertex $u\in\bG$ (resp. $\in\bG'$), for 
any $\xlabel\in\{\mem,\out,\lterm\}$, $\xlabel(u)$
(resp. $\xlabel'(u)$) is the \xlabel component of $u$ in \bG
(resp. \bG'). 

The graph $\xlabel_\bG$ is the labelled graph obtained from \bG by keeping
only the $\xlabel$ component.
\medskip

This presentation will find its justifications with the following
definitions.

\newcommand{\distask}{task\xspace}
\newcommand{\distasks}{tasks\xspace}
\newcommand{\Distasks}{Tasks\xspace}

\subsection{Termination Detection of Relabelling Systems and of
  \Distasks}

In this section, we give the formal definitions of the termination detection
for graph relabelling systems. Then we define what is a \distask,
informally it is a relation between the set of inputs 
and the set of ``legal'' outputs. 

Let \gfam be a given graph family, and \grs a graph relabelling
system. We denote by $Im_\grs(\gfam)$ the set 
$$\{\bG'\in\lgraph \mid \exists \bG\in\gfam, \bG\grs^*\bG'\}.$$

\subsubsection{Implicit Termination} There is no detection
mechanism. Hence \lterm is not used.

\begin{definition}
  A graph relabelling system \grs has an {\em implicit termination}
  on \gfam if 

  \begin{theoenum}
  \item \grs is noetherian on \gfam,
  \item the \lterm components of any graph in
    $Im_\grs(\gfam)$ are all equal to $\bot$.
  \end{theoenum}
\end{definition}

If the underlying local computation is aimed at the computation of
a special value, we will, in order to distinguish this value 
from the intermediate computed values, only look the special purpose
component \out. As there is no detection of termination, this label
is written all over the computation. It becomes significant only when
the graph is irreducible, but no node knows when this happens.

\begin{remark}
  Such a definition is also relevant for finite
  self stabilising
  algorithms \cite{dolev}. %
  Indeed, one can see implicit termination
  as a stabilisation. Furthermore, Mazurkiewicz' algorithm has been
  shown to be selfstabilizing \cite{SelfstabEnum}. 
\end{remark}

\subsubsection{Local Termination Detection}

We will now ask the \out label to remain unchanged once \lterm is set
to \term. 

\begin{definition}
  A graph relabelling system has a {\em local termination detection}
  (LTD)  on \gfam if 
  \begin{theoenum}
  \item \grs is noetherian on \gfam,
  \item \lterm components of graphs in $Irred_\grs(\gfam)$ are equal to
    \term,
  \item for all graphs $\bG,\,\bG'\in Im_\grs(\gfam)$ such that $\bG\grs^*\bG'$,
    for every vertex $u$ such that $\lterm(u)=\term$,  then 
    \begin{eqnarray*}
      \out(u)&=&\out'(u),\\
      \lterm(u)&=&\lterm'(u)=\term.
    \end{eqnarray*} 
  \end{theoenum}
\end{definition}

 \begin{remark}
   It shall be noted that this definition does not
   formally prevent a node in a terminal state to act as a gateway by
   maintaining connectivity of the active parts of the network. Note
   that the \mem component can also be left unchanged with rules that
   relabel only neighbours. We do not discuss here if, in some sense, a node 
   acting only as gateway is really ``terminated''. Furthermore, we will
   only give results for uniform tasks (to be defined later)  where
   these distinctions actually give equivalent definitions. 
 \end{remark}

\subsubsection{Observed Termination Detection}

In this section, we require that once \term appears, all \out values
have to remain unchanged until the end of the relabellings. 

\begin{definition}
  A graph relabelling relation \grs has an {\em observed termination
    detection} (OTD) on \gfam if 
   \begin{theoenum}
   \item \grs is noetherian on \gfam,
   \item \lterm components of graphs in $Irred_\grs(\gfam)$ are equal to
    \term,
   \item for all graphs $\bG,\,\bG'\in Im_\grs(\gfam)$ such that
     $\bG\grs^*\bG'$, for all vertex $u$ such that $\lterm(u)=\term$,  then
     \begin{itemize}
     \item $\lterm'(u)=\term$,
     \item for all vertex $v\in\bG$, $\out(v)=\out'(v).$ 
     \end{itemize}
   \end{theoenum}
\end{definition}

In other words, every node can know when every output value is final. The
point is that, in this definition, we ask the network to detect the
termination of the computation (in the sense of the \out value that is
computed), but not to detect the termination of that detection. In the
following, we usually have one vertex that detects that the \out values
are final and then it performs a broadcast of \term. This is actually the
termination of this broadcast that we do not ask to be detected. In
some sense, this broadcast is performed by an ``observer algorithm''
whose termination we do not consider.

\begin{remark}
  Up to a broadcast, this definition is equivalent to a ``weaker'' one
  where it is  asked that only at least one vertex of irreducible
  graphs has a \lterm label set to \term. 
\end{remark}

\subsubsection{Global Termination Detection}
\macro{\tasks}{\mathfrak T}
\macro{\task}{T}

There is a node that performs explicitly the last relabelling rule.

\begin{definition}
  A graph relabelling system \grs has a {\em global termination
  detection} (GTD) on \gfam if 
   \begin{theoenum}
   \item \grs is noetherian on \gfam,
   \item for all graphs $\bG\in Im_\grs(\gfam)$, there exists a vertex
     $u$ such that $\lterm(u)=\term$ if and only if $\bG$ is
     in $Irred_\grs(\gfam)$.%
  \end{theoenum}
\end{definition}

\subsubsection{Termination Detection of \Distasks} We now
  define \distasks by a specification and a domain. The
  specification is the general description of what we want to do. The
  domain is the set of labelled graphs where the local computation has to
  compute the correct outputs with respect to the specification. 

  First we recall some basic definitions about relations.
  \begin{definition}
    A relation $R$  is left-total on a set $X$ if for every $x\in X$,
    there exists $y$, such that $xRy$.
  \end{definition}

  \begin{definition}
    Let a relation $R$ on the set \lgraph of labelled graphs. Let
    $X\subseteq \lgraph$. The {\em restriction} of $R$ to $X$ is the
    relation $R_{\mid X}=R\cap X\times\lgraph.$
  \end{definition}
  
  \begin{definition}
    A  {\em \distask} \task is a couple $(\gfam,S)$ where
    $S$ is a (not necessarily locally generated) relabelling
    relation and  \gfam is a recursive labelled graph family, such
    that $S_{\mid \gfam}$ is left-total on \gfam.

    The set \gfam is the {\em domain} of the task, $S$ is the {\em
    specification} of the task.  

  \end{definition}

    Note that, in general, a specification $S$ is not particularly related
  to a given graph family. However, the computability of a
  \distask does depend on the domain. See the Election Problem in
  Sect.~\ref{election}.  

   A specification can be a decision task such as recognition of
  property of the underlying graph, or consensus
  problems, or the problem of election of a node (see Section \ref{election}),
  a task depending on the level of structural knowledge we have, the
  computation of a spanning tree, a $d-$colouration of a graph, etc....

  \begin{remark}
    It shall be emphasised that we do not explicitly deal with
    structural knowledge as a parameter for the algorithm. This is
    {\em exactly the same algorithm that is applied on all 
      the labelled graphs}. If there is any parameter to use, this has to 
    be done in the description of the domain and, maybe, encoded in the
    initial label.

    Our definition is aimed at emphasising the difference between the
    problem - that is the same for any network - and the set of
    networks on which we want to solve it ( if it is solvable at all
    ) with a unique  algorithm. E.g., in Section
    \ref{election}, we show that, for any minimal graph, there is
    an Election algorithm but there is no algorithm that solve
    Election for all minimal graphs. 
  \end{remark}

  Keeping in mind the previous remark about structural knowledge, any
  intuitive \distask can be encoded by this way. 
  \begin{example}
    We describe the specification of the $d-$colouration problem:
    \begin{eqnarray*}
      colo_d &=& \{(\bG,(\bG,\lambda))\;\mid\; \card(\lambda(\bG))\leq d,
      \mbox{ and } \\
    &&\forall (u,v)\in E(\bG), \lambda(u)\neq\lambda(v)\}.
    \end{eqnarray*}

    A solution to the task $(R,colo_3)$ is presented in Example \ref{colod}.
  \end{example}

  We now define the computability of a \distask with respect to the
  different flavours of termination.

  \begin{definition}
    A task $(\gfam,S)$ is locally computable with implicit termination
    (resp. LTD, OTD, GTD) if there exists a graph relabelling system
    \grs such that  
    \begin{theoenum}
    \item (termination) \grs has an implicit termination (resp. LTD, OTD,
      GTD) on \gfam,
    \item (correctness) for any graphs $\bG\in\gfam$, 
      $\bG'\in{\mathrm Irred}_\grs(\bG)$ $$ \bG S \out_{\bG'},$$
    \item (completeness) for any graph $\bG\in\gfam$, for any graph
      $\bG'$ such that $ \bG S \bG',$ there exists $\bG''$ such that 
      \begin{eqnarray*}
      \bG''&\in&\mathrm{Irred}_\grs(\bG),\\
      \bG'&=&\out_{\bG''}.
      \end{eqnarray*}
    \end{theoenum}
    
    In this case, we say that the graph relabelling relation \grs { \em
      computes} the task $(\gfam,S)$ with no (resp. local, observed,
    global) termination detection.
  \end{definition}

  \begin{remark}
    The reader should remark that previous definitions (\cite{YKsolvable,BV0})
    are restricted to the {\em correctness} property. This is the 
    first time, to our knowledge, that the {\em completeness} (that
    can be seen as a kind of fairness property over the legal outputs) is
    addressed, thus giving its full meaning to the sentence ``{\em $S$
      is locally computed by \grs on \gfam}''. 

    Moreover, the impossibility results remain true even without the
    completeness condition (see Remark~\ref{aboutcompleteness}).
  \end{remark}

  \begin{remark}
    The terms message termination and process termination have also
    been used to denote implicit and explicit termination
    \cite[introduction for chap. 8]{Tel}.
  \end{remark}

  We denote by ${\tasks}_{\mathrm I}(\gfam)$ (resp. ${\tasks}_{\mathrm
    {LTD}}(\gfam)$, ${\tasks}_{\mathrm  {OTD}}(\gfam)$, ${\tasks}_{\mathrm
    {GTD}}(\gfam)$ )  
  the set of specifications that are locally computable on domain
  \gfam with implicit termination (resp. LTD, OTD, GTD). If \gfam is
  obvious from the context, we will omit it in these notations. 

  From the definitions, we have   
  \begin{proposition}\label{hierarchy} For any labelled graph family \gfam,
    $${\tasks}_{\mathrm{ GTD}}(\gfam) \subset{\tasks}_{\mathrm
      {OTD}}(\gfam)  \subset{\tasks}_{\mathrm{ LTD}}(\gfam) 
    \subset{\tasks}_{\mathrm I}(\gfam).$$ 
  \end{proposition}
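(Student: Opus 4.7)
The plan is to check each of the three inclusions directly from the definitions, with only the $\tasks_{\mathrm{GTD}}\subseteq\tasks_{\mathrm{OTD}}$ step requiring a small explicit construction. Strictness of the hierarchy can then be obtained by comparing the characterisation theorems (Theorems~\ref{caracGTD}, \ref{caracOTD}, \ref{caracLTD}) proved in later sections, which yield explicit tasks separating the three classes; here I focus only on the inclusions, since those are what ``From the definitions, we have'' refers to.

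The easiest step is $\tasks_{\mathrm{OTD}}(\gfam) \subseteq \tasks_{\mathrm{LTD}}(\gfam)$: any relabelling system $\grs$ that OTD-computes a task $T$ already satisfies the three LTD clauses. Noetherianity is unchanged; the OTD irreducibility clause ``every $\lterm$ equals $\term$'' implies the LTD version; and the OTD global freezing property (once $\lterm(u)=\term$ anywhere, every $\out(v)$ in the graph is locked) specialises to the LTD property that the single vertex $u$ retains $\out(u)$ and $\lterm(u)$. Correctness and completeness of $T$ on the \out-labels are inherited verbatim, so the same $\grs$ witnesses $T\in\tasks_{\mathrm{LTD}}(\gfam)$.

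For $\tasks_{\mathrm{LTD}}(\gfam) \subseteq \tasks_{\mathrm I}(\gfam)$ I would derive from an LTD system $\grs$ an implicit system $\grs'$ by forcing the $\lterm$ coordinate to stay at $\bot$: each rule $(B,\lambda,\lambda')$ of $\grs$ is replaced by its ``$\lterm$-stripped'' version, in which the $\lterm$ labels of both $\lambda$ and $\lambda'$ are set to $\bot$. The runs of $\grs'$ are then in bijection with the runs of $\grs$ projected onto the $(\mem,\out)$ coordinates, so noetherianity and the correctness/completeness of the \out-labels at irreducibles are inherited. The implicit termination requirement that every $\lterm$ equal $\bot$ in $Im_{\grs'}(\gfam)$ is enforced by construction.

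For $\tasks_{\mathrm{GTD}}(\gfam) \subseteq \tasks_{\mathrm{OTD}}(\gfam)$ I would post-compose a GTD algorithm with a broadcast of the $\term$ signal. Take a GTD system $\grs$ and replace the rule that sets $\lterm=\term$ at the unique detecting node by a rule that instead raises an auxiliary flag in the \mem field there. Then add a local propagation rule which, at any vertex adjacent to one carrying the flag, installs the flag and sets $\lterm=\term$. The broadcast alters no \out label, so the OTD freezing clause holds automatically, the \out values having already been stable when the underlying GTD run reached its irreducible configuration. Noetherianity survives because every vertex receives the flag at most once, so at most $|V(G)|$ new steps are added; in the resulting irreducible configuration every vertex carries $\lterm=\term$, fulfilling the OTD requirement. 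The only minor point to take care of is to make sure the propagation rules are genuinely local and do not touch \out or \mem beyond the flag bit, but this is immediate from the convention on generic rules introduced earlier.
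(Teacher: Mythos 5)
Your proof is correct and follows essentially the same route as the paper's: the inclusion ${\tasks}_{\mathrm{OTD}}\subseteq{\tasks}_{\mathrm{LTD}}$ holds because an OTD system already satisfies the LTD clauses, ${\tasks}_{\mathrm{LTD}}\subseteq{\tasks}_{\mathrm I}$ is obtained by stripping the \lterm{} relabellings, and ${\tasks}_{\mathrm{GTD}}\subseteq{\tasks}_{\mathrm{OTD}}$ by appending a \term-broadcast. Your extra care in routing the broadcast trigger through a flag in \mem{} (since \lterm{} cannot appear in preconditions of a normalised system) is a welcome refinement of the paper's terser ``add a \term-broadcast rule''.
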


\begin{proof} We give the proof, from right to left, as an
  illustration for those definitions. 

  A task $T$ with local termination detection has an implicit
  termination: remove relabelling of \lterm in a relabelling system
  that computes $T$ with LTD. 
  
  Suppose a task $T$ is computable with observed termination
  detection. A relabelling system that computes $T$ with OTD has LTD by
  definition.  
  
  Suppose now that $T$ is computable with global termination
  detection by \grs. An OTD system for $T$ can be obtained by adding
  a \term-broadcast rule to a relabelling system that computes $T$
  with GTD. 
\end{proof}

Before we characterise these different
classes and show that they define a strict hierarchy, we present some
examples.

\subsection{Four Examples about Computing the Size of an  Anonymous Tree}
\label{exemple4}

We illustrate
these various kinds of termination with
the example of the computation of the size of a tree. We give
four  algorithms: with implicit termination, with local termination
detection, with distributed termination detection, with global
termination detection. In all cases, we start with the labels of the nodes
being uniformly set to $(0,\bot,\bot)$.

The first three relabelling systems are variations of the fourth
one. Thus we focus on the last relabelling system,
\textsc{TreeSize\_GTD}. The rules are described in their order of
appearance. 

First we prune the tree starting from the leaves. The size
of the pruned sub-tree is 
computed incrementally. When the last
vertex is pruned, it knows it has the total number of vertices. It
broadcasts this value. 

When the leaves get the broadcast value, they acknowledge it to their
neighbour. Then, the last vertex to get acknowledgements from all its 
neighbours knows this is the end of the local computation. It
shall be noted that this is not necessarily the same pseudo-root
vertex  at each wave.  

We recall that $N(v_0)$ is the set of neighbours of $v_0$ and that,
given a ``meta-rule'', we enable only rules that modify at least one 
label. 
Proofs are left as exercises.

\begin{example}\strut
  \begin{rrule}{\textsc{TreeSize\_I}} 
    \ritem{Pruning}{
    \item $\mem(v_0)=0$,
    \item $\exists! v\in N(v_0), \mem(v)=0$ or $\forall v\in N(v_0),
      \mem(v)\neq0$. 
    }{
    \item $\mem'(v_0) = 1 + \sum\nolimits_{v\in N(v_0)} \mem(v)$,
    \item $\out'(v_0) = \mem'(v_0)$.
    }
    \ritem{Fast Broadcast}{
    \item $\forall v\in N(v_0), \mem(v)\neq0$,%
    }{
    \item $\out'(v_0)=\max\nolimits_{v\in N(v_0)}\{\out(v)\}$. 
    }
  \end{rrule}
\end{example}

\begin{example} \strut
 \begin{rrule}{\textsc{TreeSize\_LTD}}
    \ritem{Pruning}{
    \item $\mem(v_0) = 0$,
    \item $\exists! v\in N(v_0), \mem(v)=0$.
    }{
    \item $\mem'(v_0) = 1 + \sum\nolimits_{v\in N(v_0)} \mem(v)$.
    }
    \ritem{Tree Size is Computed}{
    \item $\mem(v_0) = 0$,
    \item $\forall v\in N(v_0), \mem(v) \neq 0$.
    }{
    \item $\out'(v_0) = 1 + \sum\nolimits_{v\in N(v_0)} \mem(v)$,
    \item $\mem'(v_0) = \textsc{Size}$,
    \item $\lterm'(v_0) = \term$.
   }
    \ritem{Broadcast Size}{
    \item $\exists v\in N(v_0), \mem(v) = \textsc{Size}$.
    }{
    \item $\mem'(v_0) = \textsc{Size}$.
    \item $\out'(v_0)=\max\nolimits_{v\in
        N(v_0)}\{\out(v)\}$.
    \item $\lterm'(v_0) = \term$.
    }
  \end{rrule}  
\end{example}

\begin{example} \strut
 \begin{rrule}{\textsc{TreeSize\_OTD}}
    \ritem{Pruning}{
    \item $\mem(v_0)=0$,
    \item $\exists! v\in N(v_0), \mem(v)=0$.
    }{
    \item $\mem'(v_0) = 1 + \sum\nolimits_{v\in N(v_0)} \mem(v)$. 
    }
    \ritem{Tree Size is Computed}{
    \item $\mem(v_0)=0$,
    \item $\forall v\in N(v_0), \mem(v) \neq 0$.
    }{
    \item $\out'(v_0) = 1 + \sum\nolimits_{v\in N(v_0)} \mem(v)$,
    \item $\mem'(v_0) = \textsc{Size}$.
    }
     \ritem{Broadcast Size}{
     \item $\exists v\in N(v_0), \mem(v)=\textsc{Size}$.
     }{
     \item $\mem'(v_0) = \textsc{Size}$,
     \item $\out'(v_0)=\max\nolimits_{v\in
         N(v_0)}\{\out(v)\}$.
    }

    \ritem{End of Broadcast}{
    \item $\card\left( N(v_0)\right) \leq 2$,
    \item $\mem(v_0)=\textsc{Size}$.
    }{
    \item $\mem'(v_0) = \textsc{Ack}$.
    }
    \ritem{Acknowledgement}{
    \item $\exists v\in N(v_0), \mem(v)=\textsc{Ack}$.
    }{
    \item $\mem'(v_0) = \textsc{Ack}$. 
    }
    \ritem{Termination Detection}{
    \item $\mem(v_0) \neq\textsc{Ack}$,
    \item $\forall v\in N(v_0), \mem(v) = \textsc{Ack}$.
    }{
    \item $\mem(v_0)=\term$
    \item $\lterm'(v_0)=\term$.
    }
    \ritem{Broadcast Termination}{
    \item $\exists v\in N(v_0), \mem(v) = \term$.
    }{
    \item $\mem'(v_0) = \term$,
    \item $\lterm'(v_0)=\term$.
    }
  \end{rrule}  
\end{example}

\begin{example} \strut
 \begin{rrule}{\textsc{TreeSize\_GTD}}
    \ritem{Pruning}{
    \item $\mem(v_0)=0$,
    \item $\exists! v\in N(v_0), \mem(v)=0$.
    }{
    \item $\mem'(v_0) = 1 + \sum\nolimits_{v\in N(v_0)} \mem(v)$. 
    }
    \ritem{Tree Size is Computed}{
    \item $\mem(v_0)=0$,
    \item $\forall v\in N(v_0), \mem(v) \neq 0$.
    }{
    \item $\out'(v_0) = 1 + \sum\nolimits_{v\in N(v_0)} \mem(v)$,
    \item $\mem'(v_0) = \textsc{Size}$.
    }
     \ritem{Broadcast Size}{
     \item $\exists v\in N(v_0), \mem(v)=\textsc{Size}$.
     }{
    \item $\mem'(v_0) = \textsc{Size}$.
    \item $\out'(v_0)=\max\nolimits_{v\in
        N(v_0)}\{\out(v)\}$.
    }    

    \ritem{End of Broadcast}{
    \item $\card\left( N(v_0)\right) \leq 2$,
    \item $\mem(v_0)=\textsc{Size}$.
    }{
    \item $\mem'(v_0) = \textsc{Ack}$.
    }
    \ritem{Acknowledgement}{
    \item $\exists v\in N(v_0), \mem(v)=\textsc{Ack}$.
    }{
    \item $\mem'(v_0) = \textsc{Ack}$. 
    }
    \ritem{Termination Detection}{
    \item $\mem(v_0) \neq\textsc{Ack}$,
    \item $\forall v\in N(v_0), \mem(v) = \textsc{Ack}$.
    }{
    \item $\lterm'(v_0)=\term$.
    }
  \end{rrule}  
\end{example}

 \subsection{Computing a Spanning Tree}\label{exemple3}\label{subsec:spantree}

 We consider here the problem of building a spanning tree in a graph.
 We assume that there exists a distinguished vertex,
 all vertices are initially in some neutral state (encoded by the label
 $\bot$) except exactly one vertex which is in an active state 
 (encoded by the label $\epsilon$).

 The construction of a spanning tree for a rooted network is among
 the most fundamental tasks to be performed. The spanning
 tree may be used subsequently 
 for performing broadcast and convergecast communications.

 \subsubsection{Local Computation of a Spanning Tree
 With Detection of the Global Termination.}

The main idea is to use Dewey's prefix-based labelling. The father of
the node $v$ is the neighbour labelled by the prefix of $v$. This
encoding is necessary as, here, we restrict to no label on edges or 
ports.  Whenever the covering algorithm is finished, the leaves acknowledge
the termination to their fathers until the root node knows everything
is over.

The labels \mem are  words upon the alphabet \N. We
note $\alpha.\beta$ the concatenation of the words $\alpha$ and
$\beta$. $\epsilon$ denotes the empty word. 
We define the following notations in order to simplify the description
of the rules. Given a  vertex $v_0$, we define $new(v_0)=\{v\in
B(v_0)| \mem(v)=\bot\}$. We also define the set of neighbours labelled
by a prefix of the center's label. Let $children(v_0)=\{v\in
B(v_0)| \mem(v)\in\mem(v_0).\N\}$
Given a set $X$ of nodes, we note $\sigma_X$ an injective function
$X\longrightarrow \N$.

The tree has a distinguished vertex, labelled
$(\epsilon,\bot,\bot)$, all other nodes are labelled
$(\bot,\bot,\bot)$ .

    \begin{rrule}{\textsc{Spanning tree}}
      \ritem{Spanning Vertices}{
      \item $\mem(v_0)\neq\bot$,
      \item $new(v_0)\neq \emptyset.$
        }{
        \item if $v\in new(v_0)$,
        $\mem'(v)=\mem(v_0).\sigma_{new(v_0)}(v).$  
        }
      \ritem{Acknowledgement}{
      \item $\mem(v_0)\in\N^+$,
      \item $\forall v\in child(v_0), \textsc{Ack}$ is suffix of $\mem(v)$.
        }{
      \item $\mem'(v_0) = mem(v_0)|\textsc{Ack}.$
        }
      \ritem{Global Termination Detection}{
      \item $\mem(v_0)=\epsilon$,
      \item $\forall v\in child(v_0), \textsc{Ack}$ is suffix of $\mem(v)$.
        }{
      \item $\lterm'(v_0)=\term.$
        }
    \end{rrule}

As can be seen from the \lterm label, local termination and global
termination are closely related but will differ on the root
node. We can note that the nodes know their final number from the
first application of a rule (the Spanning Vertices rule), but they do not
terminate in order to convergecast the acknowledgement to the root.

\section{Characterisations}

\subsection{Implicit Termination}

\macro{\covS}{\widehat S}
 
We need the following definitions to express the local symmetry of a
task. 

\begin{definition}
  A graph family \gfam is {\em covering-closed} if for any graphs
  \bG,\bH such that \bG is a covering of \bH,
  $\bG\in\gfam\;\Longrightarrow\; \bH\in\gfam$.

  Let $\gamma:\bG\longrightarrow\bH$ be a covering, let $\bH'$ be a
  relabelling of \bH. Then the {\em lifting of $\bH'$ through $\gamma$} is
  the following labelling: $\forall v\in\bG$, the label of $v$ is the
  label of $\gamma(v)$ in $\bH'$. This labelled graph is denoted 
  $\gamma^{-1}(\bH')$. 
\end{definition} 

The following proposition is obvious.
\begin{proposition} 
  Let \gfam be a graph family. Then \covgfam is the smallest graph
  family containing \gfam that is covering-closed.
\end{proposition}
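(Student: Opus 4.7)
The plan is to establish three claims: (i) $\gfam \subseteq \covgfam$, (ii) $\covgfam$ is covering-closed, and (iii) any covering-closed family containing $\gfam$ must contain $\covgfam$. The three claims together express precisely the statement that $\covgfam$ is the smallest covering-closed family containing $\gfam$.

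For (i), I observe that for any graph $\bG$ the identity map $\bG\to\bG$ trivially satisfies the defining conditions of a covering, so $\bG$ covers itself. Hence every $\bG\in\gfam$ belongs to $\covgfam$ by definition.

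For (ii), the key fact to invoke is that the composition of two coverings is a covering. More precisely, if $\gamma_1\colon \bG\to\bH$ and $\gamma_2\colon \bH\to\bH'$ are coverings (preserving labels), then $\gamma_2\circ\gamma_1\colon\bG\to\bH'$ is surjective, a graph homomorphism preserving labels, and its restriction to any ball $B_\bG(v)$ is the composition of two bijections onto balls (first onto $B_\bH(\gamma_1(v))$, then onto $B_{\bH'}(\gamma_2(\gamma_1(v)))$), hence a bijection onto $B_{\bH'}(\gamma_2\circ\gamma_1(v))$. Now assume $\bH\in\covgfam$ and $\bH$ covers $\bH'$. Pick $\bG\in\gfam$ with $\bG$ covering $\bH$. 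By composition, $\bG$ covers $\bH'$, so $\bH'\in\covgfam$, proving covering-closedness.

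For (iii), let $\gfam'$ be any covering-closed family with $\gfam\subseteq\gfam'$. Take an arbitrary $\bH\in\covgfam$: there exists $\bG\in\gfam\subseteq\gfam'$ with $\bG$ covering $\bH$. The covering-closure of $\gfam'$ applied to $\bG\in\gfam'$ gives $\bH\in\gfam'$, so $\covgfam\subseteq\gfam'$.

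No step looks particularly delicate; the only point that is not purely set-theoretic is the composition-of-coverings used in (ii), which is a routine verification from the definition of covering (surjectivity and local bijection on $1$-balls) and is implicit in standard references such as \cite{Massey}.
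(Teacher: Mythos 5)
Your proof is correct. The paper states this proposition without proof (``The following proposition is obvious''), and your argument is exactly the routine verification one would expect: the only non-set-theoretic ingredient is that the composition of two coverings is again a covering, which you correctly isolate and check (surjectivity, label preservation, and local bijectivity on balls all compose).
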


\begin{definition}
  Let \gfam be a covering-closed graph family. A relation $R$ on
  \gfam is {\em lifting-closed} if
  for all graphs \bG and \bH in \gfam, such that \bG is a covering of
  \bH via $\gamma$, for all $\bH'$, $ \bH R \bH'\;\Longrightarrow\; \bG R
  \gamma^{-1}(\bH')$.
\end{definition}

\begin{definition}
 A relabelling relation $S$ is {\em covering-lifting closed on \gfam}
 if there exists a lifting-closed left-total recursive relation
 $\covS$ on \covgfam such that $$\covS_{\mid \gfam} =
 S_{\mid \gfam}.$$ 
\end{definition}

Reminding Remark~\ref{covgfamnotrecursive}, we underline that
${\covgfam}$,   the domain of $\covS$, is not necessarily
recursive. Hence we require only that \covS is recursive with left
input in \covgfam.

\medskip

The necessary condition relies upon Lifting Lemma \ref{lifting}. This a
classical result since the work of Angluin.  The sufficient condition
uses Mazurkiewicz' algorithm.   This result was first proved in a
slightly different context in  \cite{GMMrecognition}.  In
\cite{GMMrecognition}, the algorithm was quite technically involved.
We give here another, maybe simpler, proof using GSSP.  

\macro{\grsi}{\grs^{\mbox{\textsc{i}}}}

\macro{\choosen}{\mbox{\texttt{choice}}}
\newcommand{\CN}{{\em Necessary Condition.}\xspace}
\newcommand{\CS}{{\em Sufficient Condition.}\xspace}
\begin{theorem}\label{caracITD}
  A \distask $(\gfam, S)$ is locally computable with implicit
  termination if and only if it is covering-lifting closed. 
\end{theorem}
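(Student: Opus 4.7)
The plan is to prove both directions of the equivalence, using the Lifting Lemma~\ref{lifting} as the common tool and Mazurkiewicz' algorithm as the constructive core for sufficiency.

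\textbf{Necessary direction.} Suppose a relabelling system $\grs$ computes $(\gfam,S)$ with implicit termination. By Lemma~\ref{covgfamnoetherian}, $\grs$ is noetherian on $\covgfam$, so I define $\covS$ on $\covgfam$ by declaring $\bH \covS \bH'$ iff some $\bH'' \in \mathrm{Irred}_\grs(\bH)$ satisfies $\out_{\bH''}=\bH'$. Noetherianness yields left-totalness, and the recursiveness of $\grs$ together with that of the set of irreducibles gives recursiveness of $\covS$ with left input in $\covgfam$. The equality $\covS_{\mid\gfam}=S_{\mid\gfam}$ is immediate from the correctness and completeness of $\grs$ on $\gfam$. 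For lifting-closedness, fix a covering $\gamma : \bG \to \bH$ inside $\covgfam$ and $\bH \covS \bH'$ witnessed by an irreducible $\bH''$. The Lifting Lemma produces $\bG''$ with $\bG \grs^* \bG''$ and $\bG''$ a covering of $\bH''$ via $\gamma$. The key observation is that $\bG''$ must itself be irreducible: any rule firing in $B_{\bG''}(v)$ would, through the ball-isomorphism $\gamma$ and local-generatedness (Definition~\ref{locgen}), also fire in $B_{\bH''}(\gamma(v))$, contradicting the irreducibility of $\bH''$. Hence $\out_{\bG''}=\gamma^{-1}(\bH')$, giving $\bG \covS \gamma^{-1}(\bH')$.

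\textbf{Sufficient direction.} Given such a $\covS$, I construct $\grs$ by running Mazurkiewicz' algorithm \mk and composing it with an output-selection mechanism. By Theorem~\ref{fundamental} and Proposition~\ref{calcul}, \mk terminates and at every vertex $v$ yields a graph $\bH=\rho(\bG)\in\covgfam$ that is covered by $\bG$ via $\gamma:v \mapsto n_\rho(v)$. Using recursiveness and left-totalness of $\covS$, each vertex enumerates $\covS(\bH)=\{\bH'_1,\bH'_2,\dots\}$; the vertex numbered $1$ fires a non-deterministic selection rule choosing an index $k$ and broadcasts it, and vertex $v$ then sets $\out(v) := \bH'_k(n_\rho(v))$. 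Lifting-closedness gives $\bG \covS \gamma^{-1}(\bH'_k)$ and hence $\bG \, S \, \gamma^{-1}(\bH'_k)$, which is correctness. For completeness, any $\bG'$ with $\bG \, S \, \bG'$ satisfies $\bG \covS \bG'$, and by Proposition~\ref{calcul}.2 some run of \mk produces $\bH=\bG$ via the identity covering; the selection mechanism can then pick $k$ realising $\bG'=\bH'_k$.

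The main conceptual obstacle is the irreducibility argument in the necessary direction, which turns the symmetry inherent in a covering into the rigidity needed to lift terminal configurations: a rule that can fire in the covering projects down to the base, so an irreducible base forces the covering to be irreducible too. On the constructive side, the delicate technical point is coordinating the selection rule so that the combined system remains noetherian while still realising every element of $\covS(\bH)$; placing the non-deterministic choice at the Mazurkiewicz-distinguished vertex numbered $1$ and propagating it via a finite broadcast keeps the total number of rule firings bounded and hence preserves noetherianness.
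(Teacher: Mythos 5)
Your necessary direction is essentially the paper's own argument, and you make explicit a step the paper only asserts: that the lifted irreducible graph is itself irreducible. Your justification --- a rule enabled in a ball of the covering would, via the ball isomorphism and local generation, be enabled in the corresponding ball of the base, contradicting irreducibility of the base --- is exactly the right reason, and it cleans up a garbled sentence in the paper's version of this step.

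The sufficient direction, however, has a genuine gap: it is precisely the ``naive approach'' that the paper describes and then rejects before giving its actual construction. Your mechanism requires each vertex to enumerate $\widehat{S}(\bH)$ for the graph $\bH$ it has reconstructed from its mailbox, and implicitly to do so only once that reconstruction is final. But $\widehat{S}$ is only required to be recursive and left-total for left inputs in $\widehat{\gfam}$, the family $\widehat{\gfam}$ is in general not recursive (Remark~\ref{covgfamnotrecursive}), and with implicit termination no vertex can detect that Mazurkiewicz' algorithm has finished. So during the run a vertex holds intermediate snapshots that may lie outside $\widehat{\gfam}$, on which the enumeration of $\widehat{S}$ need not even halt, and it has no way of telling when its snapshot has become trustworthy; your relabelling rules are therefore not well defined as a recursive relabelling system. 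The paper closes exactly this hole by interleaving Mazurkiewicz' algorithm with the generalised SSP algorithm and the semi-algorithm $\chi_{\widehat{\gfam}}$ (the system of Theorem~\ref{prebH}), which provides a sound and eventually successful certificate that the current $\bH$ belongs to $\widehat{\gfam}$, and only then enables the output-selection rule. A secondary flaw is your reliance on ``the vertex numbered $1$'': when $\bG$ is a proper covering of its reconstruction there are as many vertices with final number $1$ as the covering has sheets, so a single-leader choice-and-broadcast is ill defined; the paper instead lets every vertex pick an output and unifies the choices by taking the minimum.
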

\begin{proof}
  \CN
  Let $(\gfam,S)$ be a \distask that is computable with
  implicit termination.
  
  There exists \grs that locally computes $(\gfam,S)$. We define an 
  extension $\covS$ on \covgfam in the following way: given \bH in
  $\covgfam$, we can apply \grs until an irreducible form is
  obtained (this always happens because of
  Lemma~\ref{covgfamnoetherian}).   We take $\bH\covS \bH'$ for any
  irreducible labelled graph \bH' obtained from \bH.

  By construction, \covS is left-total on \covgfam. We now show that
  $\covS$ meets the properties of the covering-lifting closure
  definition.  

  \smallskip
  First, we show that \covS is lifting-closed.   Let  $\mathbf H$ be a
  labelled graph and $\bG\in\gfam$ with 
  $\gamma:\mathbf G\longrightarrow{\mathbf H}$ a covering. 

  Let $\bH'$ such that $\bH \covS \bH'$. By construction,
  $\bH\RR^*\bH'$ and \bH' is 
  $\grs$-irreducible. Hence by the Lifting Lemma \ref{lifting}, we
  have $\bG\RR^*\gamma^{-1}(\bH')$. Furthermore $\gamma^{-1}(\bH')$ is 
  irreducible as $\gamma^{-1}(\bH')$ is. Then $G\covS
  \gamma^{-1}(\bH')$.  

  Finally, we show that the relations \covS and $S$ are equal on \gfam. Let
  $\bG,\bG'$ such that $\bG \covS \bG'$. Since \grs computes $S$, we
  have that $\bG S \bG'$.

  Let $\bG\in\gfam$. As \grs computes locally $S$ on \gfam, for any
  \bG' such that $\bG S \bG'$, there exists, by completeness, an
  execution that leads to an irreducible form equals to \bG'. Hence
  $\bG \covS \bG'$. 

  Given the previous result, we get \covS is recursive when the left
  member is in \covgfam since $S$ is.

  \medskip

  \CS 
  We suppose $(\gfam,S)$ is covering-lifting closed. We will describe
  a graph relabelling system \grsi that computes $(\gfam,S)$. 

  We first describe a ``naive'' approach. This approach describes what
  is essentially at stake here, but, rigorously, it fails for a
  recursivity reason. This approach is, that at any moment, to take \bH
  the computed asynchronous snapshot with \mk, then choose a $\bH'$
  such that $\bH \covS \bH'$
  and lift the \out labels to the vertices of \bG. By covering-lifting
  closure, and by Prop.~\ref{calcul}, at 
  the end of the computation of \mk, it will give a correct final
  labelling. The real problem of this approach is that, in the general
  case, during the computation, it is not possible to know simply when the
  computed \bH is really in \covgfam. Furthermore by Remark
  \ref{covgfamnotrecursive}, even knowing \gfam, it is not computable
  to decide if a given \bH is in \covgfam. 

  However, from Th.~\ref{bHok}, we have a relabelling system \carto
  that outputs when \bH is in \covgfam. \grsi is obtained by
  adding to \carto the following  rules, for any \bH' such that $\bH
  \covS \bH'$: 
  \begin{rrule}{\grsi}
    \ritem[\grsi-\bH']{Pick an Output}{
    \item $\dxi(\bH,n(v_0), r^t(v_0)) = Yes.$
    }{
    \item $\choosen'(v_0)=\bH'$.
    }
    \ritem[\grsi-\mbox{final}]{Unifying Outputs}{
    \item for all $v\in B(v_0)$, $\choosen(v_0)\preceq\choosen(v).$
    }{
    \item for all $v\in B(v_0)$, $\choosen'(v)=\choosen(v_0).$
    \item for all $v\in B(v_0)$, $\out'(v)=\out_{\choosen(v_0)}\left(n(v)\right).$
    }
  \end{rrule} 

  The final rule ensures that the same \bH' is used all over the
  graph by taking the smallest chosen one. 

  By Prop.~\ref{prebH}, \grsi is noetherian and the label \out is
  ultimately computed. By covering-lifting closure, the final \out
  labelling is correct. Moreover, by Proposition~\ref{calcul}
  (completeness), we get the completeness condition about $S$.
\end{proof}

\begin{remark}\label{aboutcompleteness}
  If we drop the completeness property from the requirement, the proof
  shows that it is only necessary and sufficient to have $\covS_{\mid \gfam}
  \subseteq S_{\mid \gfam}.$
\end{remark}

\begin{remark}
   If it is easy (read recursive) to check whether a given graph is in
  \covgfam\ -- for example if $\gfam$ is covering-closed\  -- the
  algorithm above is very much simplified because the main difficulty
  is to know when a ``Pick an output'' rule can be applied. This
  reveals to be actually the case for almost all  practical cases.  
\end{remark}

\subsection{Local Termination Detection of Uniform Tasks}

The results of this part comes from \cite{GMstructToCS}. They stand only
for uniform tasks, that is, for tasks with a uniform \out label. We
adapt the definitions to the context of this paper and we give the
main result.  
 The complete proofs (that are basically the same up to the notations) and some
applications, in particular about the problem of deducing by
local computations a structural information from another one,
are given in \cite{GMstructToCS}.

\begin{definition}
  A \distask is {\em uniform} if for every $\bG\in\gfam$,
  every \bG' such that $\bG S \bG'$, every vertices $u,v\in\bG$,
  $\out_{\bG'}(u) = \out_{\bG'}(v)$. In this case, the task is denoted by
  $(\gfam,f)$ where $f:\gfam\longrightarrow L$ is the final
  labelling function.
\end{definition}

\begin{definition}
  A uniform \distask $(\gfam,f)$ is {\em quasi-covering-lifting 
    closed} if there exists a recursive function 
  $r:\covgfam\longrightarrow\N$ such that, if there exist graphs \bK,
  \bK' in \gfam and \bH such that \bK and \bK' are
    quasi-coverings of \bH of radius $r(\bK)$, then $f(\bK) = f(\bK').$
\end{definition}

\begin{theorem}[\cite{GMstructToCS}]\label{caracLTD}
  A uniform \distask is locally computable with local
  termination detection if and only if it is quasi-covering-lifting closed.
\end{theorem}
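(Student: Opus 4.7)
The plan is to establish both directions separately, using the Lifting Lemma for necessity and combining the \carto algorithm with a search over $\gfam$ for sufficiency.

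For the \textbf{necessary condition}, suppose $\RR$ computes $(\gfam, f)$ with LTD. For each $\bK \in \gfam$, let $T(\bK)$ be a recursive upper bound on the length of any $\RR$-execution starting from $\bK$; this is computable because noetherianness prevents repeated labellings, so the set of labellings reachable from $\bK$ is finite and can be enumerated. Define $r(\bK) := 2T(\bK) + \mathrm{diam}(\bK) + 2$. Suppose $\bK, \bK' \in \gfam$ and $\bH$ are such that both $\bK$ and $\bK'$ quasi-cover $\bH$ with radius $r(\bK)$. Since $r(\bK) - 1 \geq \mathrm{diam}(\bK)$, the quasi-covering $\bK \to \bH$ is non-strict, so by Remark~\ref{nonstrictqcovarecov}, $\bK$ is actually a covering of $\bH$. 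By the Lifting Lemma applied iteratively, any terminating execution of $\RR$ on $\bH$ lifts to a symmetric terminating execution on $\bK$; both reach irreducible forms $\bH_t$ and $\bK_t$ in $t \leq T(\bK)$ steps of $\bH$. LTD on $\bK \in \gfam$ forces every vertex of $\bK_t$ to satisfy $\out = f(\bK)$ and $\lterm = \term$, and the label-preserving nature of the covering transfers these values to every vertex of $\bH_t$.

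Iterating the Quasi-Lifting Lemma along $\bK' \to \bH$ yields $\bK' \RR^* \bK'_t$ such that $\bK'_t$ quasi-covers $\bH_t$ of radius $r(\bK) - 2t \geq 2$. In particular, the center $z'$ of this quasi-covering inherits the label of the corresponding vertex in $\bH_t$, giving $\out(z') = f(\bK)$ and $\lterm(z') = \term$ in $\bK'_t$. The LTD property on $\bK' \in \gfam$ freezes these values for the remainder of the computation on $\bK'$, while the eventual irreducible form together with uniformity of $f$ demands that $\out(z') = f(\bK')$. Therefore $f(\bK) = f(\bK')$, as required.

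For the \textbf{sufficient condition}, fix a recursive function $r$ witnessing the quasi-covering-lifting closure and extend the cartography system \carto of Section~4. Each vertex $v$ runs Mazurkiewicz' algorithm combined with GSSP, maintaining an asynchronous snapshot $\bH$ and a radius of trust $\rtrust(v)$; by Theorem~\ref{qcov_prog}, the input graph $\bG$ quasi-covers $\bH$ with radius $\rtrust(v)$. Concurrently, $v$ enumerates graphs $\bK \in \gfam$ and tests whether $\bK$ quasi-covers $\bH$ of radius $r(\bK)$, a decidable check on finite graphs. As soon as such a $\bK$ is found with $\rtrust(v) \geq r(\bK)$, vertex $v$ sets $\out(v) := f(\bK)$ and $\lterm(v) := \term$, after which no rule is enabled at $v$. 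Correctness follows by applying the quasi-covering-lifting closure to the pair $(\bK, \bG)$: both are in $\gfam$ and both quasi-cover $\bH$ with radius $r(\bK)$, giving $f(\bG) = f(\bK)$; uniformity across vertices follows by the same argument applied to each vertex's candidate. Termination is ensured because $\rtrust(v)$ grows unboundedly once Mazurkiewicz stabilizes, and $\bG$ itself --- a covering of the stable snapshot $\bH$ by Proposition~\ref{calcul} --- is always a valid candidate, since a covering is a quasi-covering of arbitrary radius.

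The \textbf{main obstacle} lies in the necessary condition, specifically in reconciling the radius shrinkage of the Quasi-Lifting with the length of the lifted execution on $\bH$. The argument hinges on choosing $r(\bK) \geq 2T(\bK) + \mathrm{diam}(\bK) + 2$, which requires both an effective bound on execution length on $\bK$ (extracted from noetherianness plus recursiveness) and promoting the quasi-covering $\bK \to \bH$ to a genuine covering via non-strictness. Without the latter, the symmetric lift of $\bH$'s execution to $\bK$ would not be available, and the LTD property on $\bK$ could not be invoked to determine the labels of $\bH_t$.
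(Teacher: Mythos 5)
The paper itself does not prove Theorem~\ref{caracLTD}: it defers to \cite{GMstructToCS}, remarking only that the proofs are ``basically the same up to the notations'' as the other characterisations. Measured against the analogous arguments actually given here (Theorems~\ref{caracITD} and~\ref{caracOTD}), your sufficiency direction is the same machinery --- Mazurkiewicz plus GSSP producing a snapshot \bH and a radius of trust, an enumeration of candidates $\bK\in\gfam$ tested against \bH, and correctness obtained by applying the closure hypothesis to the pair formed by the candidate and the input graph --- and it is sound at the level of rigour the paper adopts for those proofs.

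The necessity direction, however, has a genuine gap: the quantity $T(\bK)$ on which your recursive $r$ is built need not exist, and your justification for its computability is invalid. Noetherianness forbids an infinite relabelling sequence, hence a repeated labelling \emph{within one execution}, but it does not make the set of reachable labellings finite: a graph relabelling system is only required to be a \emph{recursive} set of rules, so a single precondition may admit infinitely many relabellings (e.g.\ ``replace label $0$ by an arbitrary $n\in\N$'', then ``decrement until $1$''), giving a recursive, noetherian, infinitely branching relation whose executions on a fixed \bK have \emph{unbounded} length; and even when a bound exists, ``enumerate the reachable labellings'' is not an effective procedure, since one cannot decide when the enumeration is complete. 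The repair is local but changes how $r$ is built. Since $r(\bK)\geq\mathrm{diam}(\bK)+1$ already forces any quasi-covering $\bK\to\bH$ of radius $r(\bK)$ to be non-strict, hence a covering (Remark~\ref{nonstrictqcovarecov}), the relevant graphs \bH range only over the finitely many quotients of \bK, which are computable from \bK (Lemma~\ref{LB_kcov}). For each such \bH one can effectively \emph{find one} terminating execution $\bH\RR^*\bH_t$ of some length $t(\bH)$: enumerate finite $\RR$-chains (possible since $\RR$ is recursive) and test irreducibility (decidable by the paper's standing assumption); the search halts because $\RR$ is noetherian on \covgfam by Lemma~\ref{covgfamnoetherian}. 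Setting $r(\bK)=\mathrm{diam}(\bK)+1+2\max_{\bH} t(\bH)$, the rest of your argument --- lift the chosen execution to \bK via Lemma~\ref{lifting}, note that the lifted terminal graph is irreducible (any enabled ball would project to an enabled ball of $\bH_t$) and therefore carries $\out=f(\bK)$ and $\lterm=\term$ everywhere, push these labels down to $\bH_t$, quasi-lift to $\bK'$ via Lemma~\ref{quasilifting} while keeping a positive radius at the centre, and invoke the LTD freezing property --- goes through verbatim, because it only ever uses the length of that single chosen execution on each \bH, never a bound over all executions on \bK.
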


\subsection{Observed Termination Detection}

 \macro{\grso}{\grs^{\mbox{\textsc{o}}}}

\begin{theorem}\label{caracOTD}
  A \distask $T=(\gfam,S)$ is locally computable with observed
  termination detection if and only if 
  \begin{theoenum}
  \item $T$ is covering-lifting closed,
  \item \label{boundedquasicov}  there exists a recursive function
    $r:\covgfam\longrightarrow \N$ such 
    that for any $\bH\in\widehat{\gfam}$, there is no  strict
    quasi-covering  of \bH  of radius $r(\bH)$ in \gfam. 
  \end{theoenum}
\end{theorem}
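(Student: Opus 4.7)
The plan is to prove both directions by combining tools from the earlier sections: the Quasi-Lifting Lemma (Lemma~\ref{quasilifting}) for the necessary direction, and the cartography system \carto together with the semi-procedure \dxi for the sufficient direction. In each direction, condition (i) is handled uniformly: by Proposition~\ref{hierarchy}, OTD refines implicit termination, so Theorem~\ref{caracITD} immediately supplies covering-lifting closure, and I will focus on condition (ii) and on building the OTD-system.

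For the necessary direction, assume $\grs$ computes $T$ with OTD on $\gfam$. I first extend the OTD property from $\gfam$ to $\covgfam$ by the Lifting Lemma: any OTD-violation on some $\bH \in \covgfam$ would lift, via a covering $\bG\in\gfam$ of $\bH$, to an OTD-violation on $\bG$. Since $\grs$ is noetherian on $\covgfam$ by Lemma~\ref{covgfamnoetherian}, the longest $\grs$-execution length $T(\bH)$ is recursive in $\bH$; I set $r(\bH) := 2T(\bH) + \operatorname{diam}(\bH) + 2$. Suppose for contradiction that $\bG \in \gfam$ is a strict quasi-covering of $\bH$ of radius $r(\bH)$, with center $z$ and quasi-covering map $\gamma$. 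Iterating the Quasi-Lifting Lemma $T(\bH)$ times along an execution on $\bH$ reaching an irreducible $\bH'$ (where every vertex bears $\term$ by the extended OTD), I obtain an execution on $\bG$ reaching a state $\bG_1$ whose ball $B_\bG(z, r(\bH) - 2T(\bH))$ carries labels lifted from $\bH'$. In particular some vertex of this ball carries $\lterm = \term$; and strictness of the quasi-covering yields vertices outside $B_\bG(z, r(\bH))$ whose output labels still equal the initial $\bot$. OTD then freezes these labels: any irreducible $\bG^*$ reached from $\bG_1$ satisfies $\out_{\bG^*} = \out_{\bG_1}$, and correctness forces $\bG S \out_{\bG^*}$; but covering-lifting closure of $S$ prevents such a mixed labelling, giving the contradiction.

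For the sufficient direction, assuming (i) and (ii), I extend the cartography system \carto. Each vertex $v$ runs \carto, obtaining by Theorem~\ref{qcov_prog} a reconstructed graph $\bH_j(v)$ and trust radius $r^t(v)$ such that $\bG$ is a quasi-covering of $\bH_j(v)$ of radius $r^t(v)$. Once $\dxi(\bH_j(v), n(v), r^t(v)) = \yes$ and $r^t(v) \ge r(\bH_j(v))$, condition (ii) combined with Remark~\ref{nonstrictqcovarecov} forces $\bG$ to be a non-strict quasi-covering of $\bH_j(v)$, i.e., a covering. Node $v$ then selects, via a fixed recursive enumeration of $\covS$-successors, the smallest $\bH'$ with $\bH_j(v) \covS \bH'$ and writes the output at $v$ to $\out_{\bH'}(n(v))$; a unifying rule analogous to the final rule in the proof of Theorem~\ref{caracITD} ensures all vertices agree on the same $\bH'$, so the lift is globally consistent. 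A second GSSP layer with valuation ``$v$ has written its output'' then detects global completion, after which $\lterm$ is set to $\term$ at each node. OTD is preserved because every output label stabilises strictly before the first $\term$ appears.

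The hard part is formalising the contradiction in the necessary direction: the incompatibility of $\out_{\bG_1}$ with $S$ must be extracted from covering-lifting closure rather than from direct inspection of the initial label, since $S$ could in principle tolerate $\bot$. A secondary but delicate point is timing the second GSSP in the sufficient direction so that the global-completion detection fires only after every node has written its output—which is exactly where condition (ii) reappears, bounding how far the partially-written state can propagate.
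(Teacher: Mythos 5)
Your overall architecture matches the paper's: condition (i) via the hierarchy and Theorem~\ref{caracITD}, condition (ii) via iterated quasi-lifting on a strict quasi-covering of radius roughly twice the execution length, and the converse via \carto, \dxi and a threshold on the trust radius $r^{\mathrm{t}}$ (the paper folds your ``second GSSP layer'' into a single GSSP by strengthening the guard $\varphi_O$ so that the counter cannot pass $r(\bH)$ until the node's \out has been written, but that is an implementation detail, not a different route).

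Two steps, however, would fail as written. First, in the necessary direction you claim that ``covering-lifting closure of $S$ prevents such a mixed labelling''. It does not: covering-lifting closure only constrains how $\covS$ transports outputs along coverings $\bG\rightarrow\bH$; it says nothing about whether a labelling of $\bK$ that is $\bot$ outside the simulated ball can be a legal $S$-output of $\bK$, so no contradiction with the specification is obtained this way. The paper does not route the contradiction through $S$ at all: once some vertex of $\bK$ carries \term, clause (iii) of the OTD definition freezes \emph{every} \out value of $\bK$, and the untouched vertex outside $B_{\bK}(z,r)$ is frozen at its initial value without ever having taken a step --- this is read directly as a violation of the observed-termination contract. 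You correctly identified the delicate point but your proposed repair does not close it. Second, in the sufficient direction you have each node ``select, via a fixed recursive enumeration, the smallest $\bH'$ with $\bH\covS\bH'$''. A deterministic choice produces a single output per input and therefore loses the completeness clause of local computability (every $S$-legal output must be reachable by some run). The paper instead keeps one nondeterministic ``Pick an Output'' rule for each $\bH'$ with $\bH\covS\bH'$ and only afterwards unifies the (possibly different) choices to the smallest one \emph{actually chosen} in that run; you need the same nondeterminism.
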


\begin{proof}
  \CN
  This is actually a simple corollary of the
  quasi-lifting lemma. 

We prove this by contradiction. We assume there is a graph relabelling
system \grs  with observed termination detection that
computes the specification $S$ on \gfam.

Now we suppose there exists $ \bH\in\widehat{\gfam}$ that admits strict
quasi-coverings of unbounded radius in \gfam. By Lemma
\ref{covgfamnoetherian}, $\grs$ is noetherian for \bH. 
Consider an execution of $\grs$ of length $l$.

By hypothesis, there exists $\bK\in\gfam$ a strict quasi-covering of
\bH of radius $2l+1$. By the quasi-lifting lemma, we can simulate on a
ball of radius $2l+1$ of \bK the execution of
\grs on \bH. At the end of this relabelling steps,
there is a node in \bK that is labelled \term. As the quasi-covering
\bK is strict, there exists at least one node outside of the ball that
has not even taken a relabelling step of \grs, hence that has not
written anything to \out. Hence \grs has
not the observed termination property on \bK. A contradiction.

\CS
In some sense, we will observe the
termination of \grsi by letting \rtrust increase a bit more. In order to
do that, we have to relax the condition $\varphi_I$. 

We define the condition 
$\varphi_O$ by\footnote{with the convention that -- in order to avoid
  the problems of definition of \bH, or its belonging to \covgfam -- in the
{\em or} conditions, the right part is not ``evaluated'' if the left
 part is true.}:
\begin{itemize}
\item $\dxi(\bH,n(v_0),r^{\mathrm{t}}(v_0))\neq Yes$ or
  $r^{\mathrm{t}}(v_0) \leq r(\bH),$  
\item $\dxi(\bH,n(v_0),r^{\mathrm{t}}(v_0))\neq Yes$ or
  $\out(v_0)=\out_{\choosen(v_0)}\left(n(v_0)\right).$ 
\end{itemize}

In order to define \grso, we add to $\algo\left((\bH,
  \choosen),\varphi_\mathrm O\right)$ the following rule:

  \begin{rrule}{\grso}  
    \ritem[\grso-\bH']{Termination Detection and Pick an Output}{
    \item $\dxi(\bH,n(v_0), r^t(v_0)) = Yes.$
    }{
    \item $\choosen'(v_0)=\bH'.$
    }
    \ritem[\grso]{Unifying Outputs}{
    \item for all $v\in B(v_0)$, $\choosen(v_0)\preceq\choosen(v).$
    }{
    \item for all $v\in B(v_0)$, $\choosen'(v)=\choosen(v_0).$
    \item for all $v\in B(v_0)$, $\out'(v)=\out_{\choosen(v_0)}\left(n(v)\right).$
    }

    \ritem[\grso]{Termination Detection}{
    \item $r^{\mathrm{t}}(v_0) > r(\bH).$
    }{
    \item $\lterm'(v_0)=\term.$
    }

  \end{rrule}

This system  computes the task $(\gfam,S)$ and has an observed
termination detection.

First, \grso is noetherian. The valuation is now slightly different of
the one of \carto, but we can use the same proof as for Theorem
\ref{dxinoeth} to prove that $\left((\bH,
  \choosen),\varphi_\mathrm O\right)$ is uniform. Here again, the GSSP
Rule will stop being enabled on each vertex for the
same value of $r^{\mathrm{t}}$, the one that is equal to $r(\bH)+1$. 

Now, suppose we have, at a given time $i$, on a node $v$, $r(\bH(v)) <
r^t(v)$, then, by the hypothesis \ref{boundedquasicov} and by the
Remark~\ref{nonstrictqcovarecov}, the entire graph $\bG_i$ is a covering of  
\bH. Hence \mk is terminated. Furthermore, the second 
precondition of $\varphi_O$ ask $\out$ to be computed on each vertex,
from the same graph \bH' as $\choosen$ is a component of the
valuation. 

Thus the detection of termination is correct. Moreover, by
covering-closure, the \out labels are correct for the specification
$S$. 
\end{proof}

In the following we refer to hypothesis \ref{boundedquasicov} as the {\em
  relatively bounded radius of quasi-covering condition}.

\subsection{Global Termination Detection}

In this section, we characterise the most demanding termination mode.
\macro{\grsg}{\grs^{\mbox{\textsc{g}}}}
\begin{theorem} \label{caracGTD}
  A \distask $(\gfam,S)$ is locally computable with global
  termination detection if and only if
  \begin{itemize}
  \item any labelled graph in \gfam is covering minimal,
  \item   there exists a recursive function $r:\gfam\longrightarrow \N$ such
    that for any $\bG\in\gfam$, there is no quasi-covering of
    \bG  of radius $r(\bG)$ in \gfam, except \bG. 
  \end{itemize}
\end{theorem}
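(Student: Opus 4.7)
My plan is to prove both directions separately: the necessary condition through two lifting-style arguments in the spirit of Theorems~\ref{caracITD}--\ref{caracOTD}, and the sufficient one through a refinement of the OTD-algorithm of Theorem~\ref{caracOTD}. \emph{For the necessity of (a),} I would assume \grs computes $(\gfam, S)$ with GTD and suppose, for contradiction, that some $\bG \in \gfam$ properly covers a labelled graph $\bH$ via a $q$-sheeted covering $\gamma$ with $q \geq 2$. Since $\bH \in \covgfam$, Lemma~\ref{covgfamnoetherian} makes \grs noetherian on \bH, so pick a complete execution $\bH \RR^* \bH'$ to an irreducible $\bH'$. The Lifting Lemma yields $\bG \RR^* \bG'$ with $\gamma : \bG' \to \bH'$ still a $q$-sheeted covering, and $\bG'$ is itself irreducible, because any applicable rule on $\bG'$ would project, by local generation, to one applicable on $\bH'$. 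Being an irreducible element of $\mathrm{Im}_\grs(\gfam)$, $\bG'$ must carry \term at some vertex by GTD; by the covering, the $q$ preimages of that vertex are all labelled \term, and any two of them are at distance at least $2$ by Lemma~\ref{un}. Since the last rule that produced $\bG'$ affects only a ball of radius $1$, at least one of these \term-labels was already present one step earlier, in a reducible configuration --- contradicting GTD.

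\emph{For the necessity of (b),} I would take $r(\bG)$ to be the smallest integer $k \geq 2|V(\bG)|$ for which no $\bK \in \gfam \setminus \{\bG\}$ is a quasi-covering of \bG of radius $k$. Recursiveness of $r$ is then effective: Lemma~\ref{qcov_sizeandradius} bounds any candidate \bK by size $(\Delta+1)^k$, so the check at each $k$ is finite. Finiteness of $r(\bG)$ is the contradiction target: otherwise \bG would admit arbitrarily large-radius quasi-coverings $\bK \in \gfam \setminus \{\bG\}$, which are strict by part~(a). Iterating the Quasi-Lifting Lemma along a maximal execution of \grs on \bG then gives a computation $\bK \RR^* \bK^*$ where $\bK^*$ is a quasi-covering of the irreducible form of \bG of residual radius $\geq 1$, so its centre carries \term. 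Using $k \geq 2|V(\bG)|$, Lemma~\ref{techlemma} ensures at least two sheets, hence a second \term-vertex in $\bK^*$ at distance $\geq 2$ from the centre (Lemma~\ref{un}); this contradicts GTD on $\bK \in \gfam$ exactly as in the previous paragraph.

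\emph{For the sufficient direction,} assume (a) and (b). Condition (a) implies $\covgfam = \gfam$ --- any covering between graphs of \gfam is a bijection --- so the covering-lifting closure required by Theorem~\ref{caracOTD} is automatic with $\covS = S$, and together with (b) that theorem provides an OTD-algorithm \grso for $(\gfam, S)$. My plan is to refine \grso into a GTD-algorithm \grsg by funnelling the very last relabelling through a canonical leader. By Proposition~\ref{calcul}, on covering-minimal graphs Mazurkiewicz' algorithm becomes a true cartography algorithm producing a bijection $n : V(\bG) \to \{1, \dots, |V(\bG)|\}$, so the unique vertex $v_0$ with $n(v_0) = 1$ is well-defined. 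I would keep the cartography layer \carto and the Pick an Output / Unify Outputs rules of \grso --- triggered once $\dxi(\bH, n(v), \rtrust(v)) = \yes$ and $\rtrust(v) > r(\bH)$ --- so that every vertex correctly installs its \out-label; then I would replace the OTD-termination rule by a single rule, enabled only at $v_0$, that writes $\lterm := \term$ once a second GSSP layer tracking the valuation ``my \out is definitive'' certifies at $v_0$ that this valuation holds globally (concretely, when this second counter exceeds $3\Delta(\bH)$, with $\Delta(\bH)$ known from the cartography). The main obstacle is showing that, once this rule fires at $v_0$, no other rule of \grsg is enabled anywhere in the graph: this requires combining (b) (to identify the reconstructed \bH with \bG) with the stability guarantee of Lemma~\ref{techlemmaGSSP} applied to the second GSSP layer, and deriving noetherianness of \grsg from a uniformity argument analogous to Theorem~\ref{dxinoeth}.
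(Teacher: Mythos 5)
Your two necessity arguments are sound and close in spirit to the paper's. For item (a) you essentially reproduce the paper's lifting argument (lift a terminating execution from \bH to a proper covering \bG and observe that a \term label must then appear in a reducible configuration); for item (b) the paper simply invokes Proposition~\ref{hierarchy} together with the necessary condition of Theorem~\ref{caracOTD}, noting that under minimality $\gfam=\covgfam$ and the only non-strict quasi-coverings of \bG in \gfam are copies of \bG, whereas you re-run the quasi-lifting argument directly, which is acceptable. One concrete slip, though: your claim that recursiveness of $r$ follows because Lemma~\ref{qcov_sizeandradius} ``bounds any candidate \bK by size $(\Delta+1)^k$'' misreads that lemma. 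The ``size'' it bounds is $\card(V(B_{\bK}(z,k)))$, the size of the ball witnessing the quasi-covering, not the size of \bK itself; a quasi-covering of radius $k$ may be arbitrarily large outside that ball, so the search at each $k$ is not finite and your definition of $r$ is not obviously computable. (The paper sidesteps this by inheriting the recursive-$r$ clause wholesale from Theorem~\ref{caracOTD}.)

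The genuine gap is in your sufficiency construction. The paper replaces the OTD termination rules by the spanning-tree construction of Section~\ref{subsec:spantree} rooted at the unique vertex numbered $1$ by \mk: the root declares \term only after a convergecast of acknowledgements from the leaves, and it is this acknowledgement wave that makes the root's rule provably the last one applied. Your substitute --- a second GSSP layer whose counter at $v_0$ must exceed $3\Delta(\bH)$ --- cannot deliver this. Lemma~\ref{techlemmaGSSP} only certifies that at some instant in the \emph{past} all vertices in the ball agreed on the valuation; it says nothing about whether the other vertices have stopped relabelling. Indeed the uniformity argument of Theorem~\ref{dxinoeth}, which you invoke, shows that all vertices eventually reach the same counter value: so when $v_0$ first crosses the threshold and writes \term, vertices farther away still have GSSP increments to perform, and the resulting configuration carries \term without being irreducible --- exactly the situation your own necessity argument exploits for a contradiction. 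No threshold on a GSSP counter can repair this, because the constraint $a(v_0)\le 1+\min$ over the neighbours bounds how far $v_0$ can get ahead but never tells it that the others have finished; some acknowledgement or convergecast mechanism, as in the paper, is needed to localise the last step at a single vertex.
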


\begin{proof}
  \CN
  We need only to prove the first item. As
  minimality implies $\gfam = \covgfam$, the second
  one is a restatement of the one for termination detection by
  observer.  

  The minimality of any graph in \gfam is again a corollary of the
  lifting lemma. Suppose there are \bG and \bH in \gfam such that \bG
  is a strict covering of \bH. 

  We consider a relabelling chain in \bH. It comes from the lifting
  lemma that this can be lifted step by step in \bG. When the final
  step is reached in \bH, and as \bG is a strict covering of \bH,
  there are at least two nodes in \bG where to apply the final \term
  rule. Hence a contradiction.

  \CS
  The two hypothesis imply that task $(\gfam,S)$
  has the observed termination detection property (the
  cove\-ring-lifting property is a trivial tautology when all
  concerned graphs are minimal).  Hence there exists a relabelling
  system \grso that computes $S$ with OTD.

  We define
  \grsg the relabelling system obtained by the union of $\grso$
  without the ``Termination Detection'' rules and the
  rules of Section \ref{subsec:spantree} for the computation of a
  spanning tree. The root is the vertex that gets number $1$ in \mk,
  when this vertex observes the termination for \grso with the 
  following rule:

  \begin{rrule}{\grsg}
  \ritem[\grsg]{Root}{
    \item $r^{\mathrm{t}}(v_0) > r(\bH)$,
    \item $n(v_0) = 1$.
    }{
    \item $\mem'(v_0)=\epsilon.$
    }
  \end{rrule}

  By minimality of \bG, there is only one vertex with number $1$ when
  \mk is finished. Hence we really get a spanning tree and not a
  spanning forest.
\end{proof}

\section{Applications}

In this section, we present consequences from the previous theorems. There
are known computability results, some new ones and the proof that the
different notions of termination detection are not equivalent.

We emphasise that the following results are bound to the
model of local computations. Results on other models should be similar
even if strictly and combinatorially speaking different. They remain
to be precisely described and computed.

\subsection{Domains and Specifications}

Consider a locally computable task $\task=(\gfam,S)$. The first remark
is that implicit termination and LTD give conditions on the
specification (with respect to the domain) but there are (sometimes
trivial) tasks on any domain. And on the contrary, OTD 
and GTD have conditions upon the domain and (weak) conditions on the
specification. The difference between domains that have OTD for
(almost) any task and the ones that have only GTD for any tasks
depends upon the covering-minimality of the graphs in the given
domain. 

As a conclusion, with respect to the termination detection
criteria, whether we can work where we want but we cannot do what we
want (the specification has to respect covering-lifting and
quasi-covering-lifting closures), whether we can do 
whatever we want, but we can do it only on 
particular families of networks. The more interesting possible
trade-off is probably on the LTD tasks but that is the most complex
families and its complete characterisation has yet to be done.

\subsection{Known Results as Corollaries}

We first sum up some results for every category of termination
detection. Then we show that the hierarchy is strict. With the remark
from the previous subsection, we focus mainly on the relevant part
(domain or specification). A very important application for
distributed algorithms, the Election problem, is done in a dedicated
section, Section~\ref{sec:election}.

\subsubsection{Implicit Termination.}
From Th.\ref{caracITD}, we can see that what can be computed with
implicit termination depends only of what is kept whenever
there is lifting. Such a property is called ``degree-refinement'' in
the graph-theoretic context \cite{Leighton}. Hence, what can be
computed with implicit termination is exactly a computation about the
degree-refinement of the network. See \cite{GMMrecognition} about an
investigation of the decision task of recognising whether the
underlying network belongs to a given class.

\begin{example}
  We denote by $R$ the family of rings.
  Consider the following task $\task_1=(\graph,\chi_{R})$
  which asks to decide whether the network is a ring or not. The
  \distask $\task_1$
  is locally computable with implicit termination but not with a
  relabelling system with LTD. Consider chains. Long ones are
  quasi-coverings of arbitrary radius for a given ring. Hence
  $\chi_{R}$ is not quasi-covering-lifting closed.
\end{example}

We give a second example with domain $R$.

\begin{example}
  We denote \textsc{Div} the following 
  specification: the \out labels are taken in \N and $\bG
  \textsc{Div}\bG'$ if and only if the final \out label divides the
  size of \bG. 

  $(R,\textsc{Div})$ is covering-lifting closed as a ring \bG is a
  covering of a ring \bH if and only if the size of \bG divides the
  size of \bH. However, \textsc{Div} is not quasi-covering-lifting
  closed. There are ``huge'' minimal rings that are quasi-covering of
  arbitrary radius of, say, $R_7$. 
\end{example}

\subsubsection{Local Termination Detection.}
See \cite{GMstructToCS} for numerous examples about of the computation
of a structural knowledge (that is a uniform labelling) from another
one.

\begin{example}
  The relation  $\mbox{\textsc{Colo}}_3$ is the specification of the
  3-colouring problem. The task $\task_2=(R,\mbox{\textsc{Colo}}_3)$ has
  local termination detection (relabelling system given in
  Example~\ref{colod}) but has not observed termination detection
  for there are ``huge'' rings that are quasi-covering of any given arbitrary
  radius of, say, $R_3$.   
\end{example}

\subsubsection{Observed Termination Detection.}
Here we will find the frequent (sometimes implicit) assumptions usually made
by distributed algorithms:

\begin{itemize}
\item size or diameter is known,
\item a bound on the diameter or the size is known.
\end{itemize}

It shall be noted that the computability results from the work of
Yamashita and Kameda belong to this category.

\begin{example}
  Let $n\in\N, n\geq 6$. We note ${R}^n$ the rings of size at most $n$.
  We consider $\task_3=({R}^n,\mbox{\textsc{Colo}}_3)$. The
  radius of strict quasi-covering are bounded in ${R}^n$. Hence
  $\task_3$ has OTD, but it has not GTD, for the ring $R_6$ is not
  covering-minimal.  
\end{example}

\subsubsection{Global Termination Detection}
Here we really find all the well known assumptions usually made
about distributed network algorithms. 
The theorems  admit well known corollaries; more precisely from  
Theorem \ref{caracGTD} we deduce immediately that  we have global
termination detection {\em for any task} for the following families of
graphs:  
\begin{itemize}
\item graphs having a leader,
\item graphs such that each node is identified by a unique name,
\item trees.
\end{itemize}

From Theorem \ref{caracGTD} we deduce there is no 
such termination for:
\begin{itemize}
\item the family of covering-minimal anonymous rings,
\item the family of covering-minimal anonymous networks.
\end{itemize}

\begin{example}
  Let $n\in\N$. We note ${PR}^n$ the rings of prime size at
  most $n$. We consider $\task_4=(\mbox{\textsc{Colo}}_3,{
  PR}^n)$. The radius of quasi covering are bounded in ${
  PR}^n$, and rings of prime size are covering-minimal. Hence
  $\task_4$ is in $\task_{GTD}$. 
\end{example}

\subsection{The Hierarchy is Strict}

The previous examples $\task_1$, $\task_2$ and $\task_3$ show that
the hierarchy is strict and that the four notions of termination are
different. 

\begin{proposition}\label{stricthierarchy} 
  \begin{eqnarray*}
  {\tasks}_{\mathrm{ GTD}}(\graph)={\tasks}_{\mathrm{ OTD}}(\graph)=\emptyset\subsetneq{\tasks}_{\mathrm{ LTD}}(\graph)&\subsetneq &{\tasks}_{\mathrm I}(\graph),\\
  {\tasks}_{\mathrm{ GTD}}(R)={\tasks}_{\mathrm{ OTD}}(R)=\emptyset\subsetneq{\tasks}_{\mathrm{ LTD}}(R)&\subsetneq &{\tasks}_{\mathrm I}(R),\\
  {\tasks}_{\mathrm{ GTD}}(R^n)=\emptyset&\subsetneq&{\tasks}_{\mathrm {OTD}}(R^n).
  \end{eqnarray*}
\end{proposition}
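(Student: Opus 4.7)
The proposition reads as a catalogue whose every line is a direct corollary of the characterisation theorems \ref{caracITD}, \ref{caracLTD}, \ref{caracOTD} and \ref{caracGTD}, combined with the sample tasks $\task_1$, $\task_2$, $\task_3$ and the $\mathrm{Div}$ task exhibited earlier in this section. For each emptiness assertion I shall verify that a purely \emph{domain-level} side condition of the relevant characterisation fails on the whole family; for each strict inclusion I shall point to one of these named tasks as a separator.

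\textbf{Emptiness.} Each of $\graph$, $R$ and $R^n$ (with $n\geq 6$) contains $R_6$, which covers $R_3$ non-bijectively and is therefore not covering-minimal; Theorem \ref{caracGTD} then gives $\tasks_{\mathrm{GTD}}(\gfam)=\emptyset$ in all three cases. For $\gfam\in\{\graph,R\}$ I next disprove condition (ii) of Theorem \ref{caracOTD}: given any recursive $r:\covgfam\to\N$, put $r_0:=r(R_3)$ and choose $R_m$ with $m>2r_0+1$; its ball of radius $r_0$ is a chain of length $2r_0+1$, isomorphic to the corresponding ball in the universal covering $\mathbb{Z}\to R_3$, so $R_m$ is a strict quasi-covering of $R_3$ of radius $r_0$ lying in $\gfam$. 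Hence no recursive $r$ bounds those radii uniformly, and no task is OTD-computable.

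\textbf{Non-emptiness and strict inclusions.} Any constant uniform task is trivially quasi-covering-lifting closed (take $r\equiv 0$), so $\tasks_{\mathrm{LTD}}(\graph)$ and $\tasks_{\mathrm{LTD}}(R)$ are non-empty. For $\tasks_{\mathrm{LTD}}(\graph)\subsetneq\tasks_{\mathrm I}(\graph)$ I would use $\task_1=(\graph,\chi_R)$: $\chi_R$ is covering-lifting closed since a finite connected covering of a ring is itself a ring (degree preservation), yet for every recursive $r$ the path $P_{2r(R_3)+3}$ is a quasi-covering of $R_3$ of radius $r(R_3)$ with $\chi_R(P_{2r(R_3)+3})\neq\chi_R(R_3)$, defeating quasi-covering-lifting closure. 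On $R$ the corresponding separator is $\mathrm{Div}$: it is covering-lifting closed, since $m\mid n$ makes every divisor of $m$ a divisor of $n$; LTD computability fails because the completeness clause forces every valid divisor to be produced in some execution, but for a prime ring $R_p$ quasi-covering $R_3$ at arbitrarily large radius, the divisor $p$ is unreachable. Finally, $\task_3=(R^n,\mathrm{Colo}_3)$ witnesses $\tasks_{\mathrm{OTD}}(R^n)\neq\emptyset$: the specification is covering-lifting closed, and since any strict quasi-covering of radius $r$ has at least $r$ vertices while graphs in $R^n$ have at most $n$ vertices, the constant function $r\equiv n+1$ satisfies condition (ii) of Theorem \ref{caracOTD}.

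\textbf{Main obstacle.} The most delicate point is the $\mathrm{Div}$ argument on $R$: the specification is superficially satisfied by the trivial algorithm outputting $1$ everywhere, which naively suggests LTD-computability. Ruling this out requires invoking the completeness clause of the definition -- every valid relabelling must be produced by some terminating execution -- and combining it with the quasi-lifting lemma, which constrains the outputs reachable on a large prime $R_p$ to those already reachable on $R_3$, thereby precluding the divisor $p$.
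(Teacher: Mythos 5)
Your overall strategy is exactly the paper's: the proposition is disposed of there in one sentence by pointing at the examples $\task_1$, $\task_2$, $\task_3$ and \textsc{Div} together with Theorems \ref{caracITD}, \ref{caracLTD}, \ref{caracOTD} and \ref{caracGTD}, and you reconstruct precisely those separators while filling in the routine verifications the paper leaves implicit (non-covering-minimality of $R_6$ for the three GTD emptiness claims, long rings as strict quasi-coverings of $R_3$ of radius $r(R_3)$ for the OTD emptiness on \graph and $R$, the bound $r\equiv n+1$ on $R^n$ via the fact that a strict quasi-covering of radius $r$ has at least $r$ vertices, and a constant task to witness ${\tasks}_{\mathrm{LTD}}\neq\emptyset$). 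All of that is correct.

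The one genuine gap is in your treatment of \textsc{Div}, which is the only separator you offer for ${\tasks}_{\mathrm{LTD}}(R)\subsetneq{\tasks}_{\mathrm I}(R)$ (and note that Theorem \ref{caracLTD} does not apply to it off the shelf, since \textsc{Div} is a multivalued specification rather than a function $f$, so a direct argument is indeed required). You claim that the Quasi-Lifting Lemma \ref{quasilifting} ``constrains the outputs reachable on a large prime $R_p$ to those already reachable on $R_3$'' and hence that the divisor $p$ cannot be produced, i.e.\ that \emph{completeness} fails on $R_p$. The lemma says nothing of the sort: it transports an execution on the quasi-covered graph $\bH$ \emph{up} into a ball of the quasi-covering $\bG$; it does not assert that every execution on $\bG$ is such a lift, so it places no restriction on what other executions on $R_p$ might output. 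The argument must run the other way: by completeness there is a terminating LTD execution on $R_3$ in which every node ends with $\out=3$ and $\lterm=\term$; pick a prime $p>3$ so large that $R_p$ is a strict quasi-covering of $R_3$ of radius exceeding twice the length of that execution; replay it by Lemma \ref{quasilifting} inside a ball of $R_p$, producing a node with $\lterm=\term$ and $\out=3$; the LTD freezing condition then fixes that node's \out forever, so every irreducible continuation violates \emph{correctness} on $R_p$ since $3\nmid p$. As written, your mechanism does not yield the stated conclusion, and your closing paragraph repeats the same misreading of the lemma; with the direction corrected, the separation (and hence the second line of the proposition) goes through.
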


\subsection{New Corollaries}
New interesting corollaries are obtained from these theorems. 

\subsubsection{Multiple leaders}

 From Theorem~\ref{caracOTD} and Lemma~\ref{techlemma}, we get
 \begin{corollary}
   Any covering-lifting closed task has an OTD solution in the following
   families:
   \begin{itemize}
   \item graphs having exactly $k$ leaders,
   \item graphs having at least one and at most $k$ leaders.
   \end{itemize}
 \end{corollary}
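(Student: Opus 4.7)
The plan is to apply Theorem~\ref{caracOTD}: since covering-lifting closure of the task is given by hypothesis, it suffices to verify condition~\ref{boundedquasicov}, namely to exhibit a recursive function $r:\covgfam\longrightarrow\N$ such that no graph $\bH\in\covgfam$ admits a strict quasi-covering in $\gfam$ of radius $r(\bH)$. Both families under consideration have the common feature that every labelled graph in $\gfam$ carries between $1$ and $k$ vertices with label \emph{leader}, so I shall treat them uniformly.

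First I would observe that every $\bH\in\covgfam$ itself carries at least one \emph{leader}. Indeed, by definition of $\covgfam$ there is some $\bG_0\in\gfam$ and a label-preserving covering $\delta:\bG_0\longrightarrow\bH$; since $\bG_0$ has at least one leader and $\delta$ is surjective and preserves labels, $\bH$ has at least one leader too. Fix any such leader $\ell\in V(\bH)$.

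Now define $r(\bH):=(k+1)|V(H)|$, which is plainly recursive in $\bH$. Suppose for contradiction that some $\bG\in\gfam$ were a strict quasi-covering of $\bH$ of radius $r(\bH)$ via $\gamma$, with center $z$. Lemma~\ref{techlemma} applied with $q=k+1$ guarantees that $\gamma$ has at least $k+1$ sheets, so every vertex of $\bH$, in particular $\ell$, has at least $k+1$ preimages lying inside $B_G(z,r(\bH))$. Because $\gamma$ is label-preserving on its domain, these $k+1$ preimages of $\ell$ all carry the label \emph{leader} in $\bG$, contradicting the fact that $\bG\in\gfam$ has at most $k$ leaders. Thus condition~\ref{boundedquasicov} holds, and Theorem~\ref{caracOTD} yields the desired OTD algorithm.

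The only delicate point is ensuring that the $k+1$ leader-preimages produced by Lemma~\ref{techlemma} genuinely sit in the ball where $\gamma$ is defined and label-preserving; but this is precisely what the \emph{number of sheets} of a quasi-covering was defined to guarantee, so the argument is self-contained once one recalls that definition.
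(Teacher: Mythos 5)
Your proof is correct and follows exactly the route the paper intends: the paper derives this corollary directly ``from Theorem~\ref{caracOTD} and Lemma~\ref{techlemma}'', and the argument you spell out (taking $r(\bH)=(k+1)|V(H)|$ so that a strict quasi-covering would be at least $(k+1)$-sheeted and hence carry $k+1$ leaders) is the same one the paper itself writes out for the analogous Proposition~\ref{kdist}. No gaps.
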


 From Theorem \ref{caracOTD}, we deduce a negative result for
 the family of graphs having at least $k\geq2$ leaders.

\subsubsection{Link Labellings and Sense of Direction.}
We recall that a homomorphism $\varphi$ 
from the labelled graph $\mathbf G$ to 
the labelled graph $\mathbf G'$
is a graph homomorphism from $G$ to
 $G'$ which preserves the labelling: a node is mapped
to a node with the same label and a link is mapped to a link
with the same label. 

Thus,  a family
of labelled graphs induced by a weak sense of direction  satisfies
the condition \ref{boundedquasicov} of Theorem \ref{caracOTD} (indeed
weak sense of direction forbids quasi-coverings). Thus, for any task,
observed termination detection is possible in all families of graphs
with weak sense of direction.

\subsection{About the Complexity of Local Computations}
The step complexity of \mk is $O(n^3)$ \cite{SelfstabEnum}. Denote $C$ the
complexity of GSSP in the bounded radius of quasi-covering
context. Hence we can see that the complexity of a \distask
is bounded by $O(n^2+C)$. It is easy to see that the complexity of
GSSP is closely related to the bound $r$ of the radius of
quasi-coverings. When \mk is terminated, any node has to go from $0$
to $r$ with GSSP rule. Thus $C\leq n\times (r+1)$. 

Whether the complexity comes from the distributed gathering of
information or from the termination detection depends upon the order
of magnitude of $r$.

A similar study of the complexity of distributed algorithms by
upper-bounding by ``universal algorithm'' is done in \cite{BVholo}
where, it shall be noted, the notion of quasi-covering is introduced
for trees.

\section{A Characterisation of Families of Networks
in which Election is Possible}
\label{sec:election}
\macro{\election}{{\small\textsc{Election}}}
\macro{\elect}{{\small\textsc{Elect}}}
\macro{\nonelect}{{\small\textsc{Non-Elect}}}

Considering a labelled  graph, we say informally that a given vertex
$v$ has been elected when the graph is in a global state such that 
exactly one  vertex has the label
\elect  and all other vertices have the 
label \textit{\nonelect.} The labels $\it {\elect}$ and
\textit{\nonelect} are terminal, i.e., when they appear on a vertex
they remain until the end of the computation. This is the standard
definition. 

Note that if we  ask nothing about the non elected vertices, this gives
an equivalent definition in terms of computability. Because when a node
is elected, it can broadcast it to all the nodes of the networks.

\begin{definition}
  Let $\gfam$ be a class of connected  labelled graphs.
Let $\cal R$ be a locally generated relabelling relation, we say 
that $\cal R$ is an election algorithm for the class $\gfam$ if $\cal R$
is noetherian and for any graph $\mathbf G$ of $\gfam$ and
for any normal form $\mathbf G'$ obtained from $\mathbf G,$ 
$\mathbf G {\cal R}^* \mathbf G',$ there exists exactly one vertex
with the label ${\elect}$ and all other vertices have the label 
{\nonelect.}
\end{definition}

With the notation of the previous part, we have the various
definitions for the various kinds of termination detection.

\begin{definition}
  Let $\gfam$ be a class of connected  labelled graphs. Let $\election$ be the
  following relation: \bG and \bG' are in relation if and only if
  there exists in \bG' exactly one vertex  with the label ${\elect}$
  and all other vertices have the label  
  {\nonelect.} 

  The implicit(resp. LTD, OTD, GTD )-\election on \gfam is
  the task  $(\gfam,\election)$ with implicit (resp. local, observed, global)
  termination detection. 
\end{definition}

We underline that we are looking for classes of networks that admit
the same \election algorithm for all its elements. Having an algorithm
that works for several networks (say, independently of the knowledge of
its size) is very important for reliability. In this setting, saying
that \bG admits an \election algorithm amounts to say that
$(\left\{\bG\right\},\election)$ is a computable task. It is important
to note that saying that 
\election is computable on a given family \gfam {\em does not} mean
that $(\left\{\bG\right\},\election)$ is a computable task for any
$\bG\in\gfam$, but means that $(\gfam,\election)$ is a computable
task. 

\medskip
We can see that the definition of LTD-\election is equivalent
to the standard definition of \election. 

We will prove that the possibility of the LTD-\election on \gfam is
equivalent to the possibility of the GTD-\election. But
first we give two examples of elections.

\subsection{Two Examples}\label{exemple2}
\subsubsection{An Election Algorithm in the Family of Anonymous Trees.}

 The following relabelling system elects in trees.
  The set of labels is $L=\{N,\elect,\nonelect\}$.  
  The initial label on all vertices is   $N$. 

  \begin{rrule}{\textsc{Elec\-tion\_Tree}}
    \ritem{Pruning rule\label{prune}}{
    \item $\lambda(v_0)=N$,
    \item $\exists!\; v\in B(v_0,1), v\neq v_0, \lambda(v)=N$.
      }{
    \item $\lambda'(v_0)=\nonelect.$
      }
    \ritem{Election rule\label{elec}}{
    \item $\lambda(v_0)=N$,
    \item $\forall v\in B(v_0,1), v\neq v_0, \lambda(v)\neq N$.
      }{
    \item $\lambda'(v_0)=\elect.$
      }
  \end{rrule}

  Let us call a pendant vertex any vertex labelled $N$ having
  exactly one neighbour with the label $N.$
  There are two meta-rules ${\sc Election\_Tree1}$ and 
  ${\sc Election\_Tree2}.$
  The  meta-rule ${\sc Election\_Tree1}$
   consists in cutting a pendant vertex by giving
  it the label 
  $\nonelect.$ The  label $N$ of a vertex $v$ 
  becomes $\elect$ by the meta-rule ${\sc Elec\-tion\_Tree2}$ 
  if the vertex $v$ has no
  neighbour labelled $N.$ A complete proof of this system
  may be found in \cite{LMS1}.
\subsubsection{An Election Algorithm in the Family of Complete Graphs.}

 The following relabelling system elects in complete graphs.
  The set of labels is $L=\{N,\elect,\nonelect\}$.  
  The initial label on all vertices is   $l_0=N$. 

  \begin{rrule}{\textsc{Election\_Complete-graph}}
    \ritem{Erasing rule}{
    \item $\mem(v_0)=N$,
    \item $\exists\; v\in B(v_0,1), v\neq v_0, \mem(v)=N$.
      }{
    \item $\mem'(v_0)=\nonelect.$
      }
    \ritem{Election rule}{
    \item $\mem(v_0)=N$,
    \item $\forall v\in B(v_0,1), v\neq v_0, \mem(v)\neq N$.
      }{
    \item $\mem'(v_0)=\elect.$
      }
  \end{rrule}

It is straightforward to verify that this system elects in the family of
complete graphs.

\subsection{Characterisation of Election}

We  show that the LTD-\election is
solvable if and only if the GTD-\election is solvable. Then we use the
general characterisation of this paper to conclude.

\begin{proposition}
  Let \gfam be a labelled graph family. The LTD-\election task on
  \gfam is computable if and only if the GTD-\election is.
\end{proposition}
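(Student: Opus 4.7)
The $\Leftarrow$ direction is immediate from Proposition \ref{hierarchy}, which gives ${\tasks}_{\mathrm{GTD}}(\gfam) \subseteq {\tasks}_{\mathrm{LTD}}(\gfam)$. For the $\Rightarrow$ direction, the plan is to apply Theorem \ref{caracGTD}: given an LTD-\election algorithm $\grs$ on $\gfam$, I would establish (i) every labelled graph in $\gfam$ is covering-minimal, and (ii) there exists a recursive function $r\colon \gfam \longrightarrow \N$ such that no $\mathbf G \in \gfam$ admits a strict quasi-covering in $\gfam$ of radius $r(\mathbf G)$.

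For (i), suppose some $\mathbf G \in \gfam$ properly covers $\mathbf H$ via a covering $\gamma$, with $q \geq 2$ sheets by Lemma \ref{sheets}. By Lemma \ref{covgfamnoetherian}, $\grs$ is noetherian on $\mathbf H$, so there is an irreducible $\mathbf H'$ with $\mathbf H \grs^* \mathbf H'$. The Lifting Lemma (Lemma \ref{lifting}) yields $\mathbf G \grs^* \gamma^{-1}(\mathbf H')$; moreover $\gamma^{-1}(\mathbf H')$ is itself $\grs$-irreducible, for any rule firing at a vertex $u$ of it would project through $\gamma$ (balls being isomorphic) to a rule firing at $\gamma(u)$ in $\mathbf H'$, contradicting irreducibility. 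By correctness of LTD-\election on $\mathbf G$, the state $\gamma^{-1}(\mathbf H')$ must contain exactly one $\elect$; yet the number of $\elect$ vertices in $\gamma^{-1}(\mathbf H')$ equals $q$ times the number in $\mathbf H'$, which cannot equal $1$ when $q \geq 2$, giving the contradiction.

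For (ii), I would define $r(\mathbf G) = 2 l(\mathbf G) + 2 |V(\mathbf G)| + 1$, where $l(\mathbf G)$ denotes the maximum length of a $\grs$-execution starting at $\mathbf G$, computable by exhaustive exploration of the noetherian, finitely branching execution tree (K\"onig's lemma giving that it is finite), so that $r$ is recursive. Assume, for contradiction, that some $\mathbf G' \in \gfam$ is a strict quasi-covering of $\mathbf G$ of radius at least $r(\mathbf G)$. Iterating the Quasi-Lifting Lemma (Lemma \ref{quasilifting}) for the $l(\mathbf G)$ steps of a complete $\grs$-execution on $\mathbf G$, I obtain a $\grs$-execution on $\mathbf G'$ whose current labelling, inside the central ball of $\mathbf G'$, is a strict quasi-covering of the $\grs$-normal form of $\mathbf G$ of radius at least $2|V(\mathbf G)| + 1$. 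By Lemma \ref{techlemma}, this quasi-covering has at least two sheets, so the unique $\elect$ vertex in the normal form of $\mathbf G$ lifts to at least two distinct vertices of $\mathbf G'$ bearing $\elect$. By the LTD property, the label $\elect$ is terminal; hence any extension of this partial execution to a $\grs$-irreducible form on $\mathbf G'$ still contains at least these two $\elect$ labels, contradicting the LTD-\election correctness of $\grs$ on $\mathbf G'$.

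The main technical delicacy is the recursivity of $l(\mathbf G)$, which I expect to follow from the finite branching of relabelling systems applied to a fixed labelled graph (only finitely many rules of $\grs$ apply to any given finite labelled ball) combined with the noetherian property; once this is in hand, $r$ is recursive by exhaustive simulation of the execution tree of $\grs$ on $\mathbf G$, and Theorem \ref{caracGTD} delivers the GTD-\election algorithm.
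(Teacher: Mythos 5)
Your proof takes a genuinely different route from the paper's. The paper disposes of the necessary direction in three lines: given an LTD-\election system \grs, the elected vertex is a distinguished root, so one simply appends the rooted spanning-tree construction of Section~\ref{subsec:spantree}, whose acknowledgement convergecast already has global termination detection. You instead re-derive, for the specific \election specification, the two necessary conditions of Theorem~\ref{caracGTD} (covering-minimality via the Angluin lifting argument, relatively bounded radius of quasi-coverings via the quasi-lifting argument) and then invoke the sufficiency half of that theorem to manufacture a GTD algorithm. Both work; the paper's construction is shorter and yields an explicit algorithm layered on top of \grs, whereas yours routes through the universal algorithm behind Theorem~\ref{caracGTD}. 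Your part (i) is the standard argument and is fine, and the freezing step in part (ii) is correctly justified: the preimages of the elected vertex inherit $\lterm=\term$ from the irreducible form of \bG, so their \out labels persist in every continuation on $\bG'$.

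The one genuine gap is exactly where you flagged it: $l(\bG)$, defined as the \emph{maximum} length of a \grs-execution from \bG, need not be finite, let alone recursive. A graph relabelling system is a recursive set of rules, and nothing in the definition prevents infinitely many rules from sharing the same precondition; the execution tree is therefore not finitely branching, K\"onig's lemma does not apply, and there are noetherian systems on a fixed graph all of whose executions are finite while their lengths are unbounded (e.g.\ one rule rewriting a label $0$ into an arbitrary pair $(n,n)$ with $n\in\N$, and a second rule decrementing the first component down to $0$). The repair is easy and leaves the structure of your argument intact: replace $l(\bG)$ by the length of \emph{one} canonical complete execution, obtained by simulating \grs on \bG with a fixed computable strategy (always apply the least applicable occurrence in some enumeration); noetherianity guarantees this simulation halts, its length is recursive in \bG, and your quasi-lifting step only ever needs to lift that single execution. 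With that substitution, and the observation that covering-minimality turns ``no strict quasi-covering of radius $r(\bG)$'' into the exact hypothesis of Theorem~\ref{caracGTD}, the proof goes through.
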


\begin{proof}
  The sufficient condition is easy (Proposition \ref{hierarchy}). We
  focus on the necessary condition.

  Suppose \grs is a graph relabelling relation with LTD solving the
  Election task on \gfam. In order to convert it in a  graph
  relabelling relation with GTD, we will add some rules to \grs.  We
  add a rule that starts the computation, with GTD, of a spanning tree
  rooted in the \elect   vertex. This standard construction is given in 
  Section \ref{subsec:spantree}. 
\end{proof}

\begin{remark}
  This demonstration shows that even if we define a task with a LTD
  flavour, it can reveal to be in the GTD family of tasks because of
  the form of the specification. Furthermore, we will now not
  distinguished between LTD(resp. OTD, GTD)-\election.
\end{remark}

As a corollary of Theorem \ref{caracGTD}, we get:

\begin{theorem}\label{election}
  Let $\gfam$ be a class of connected labelled graphs.
  There exists an \election algorithm for $\gfam$ if and only
  if 
  \begin{itemize}
  \item graphs of $\gfam$ are minimal for the covering relation, and
  \item there exists a computable function $r:\gfam\ra \N$ such
  that for all graph $\mathbf G$ of  $\gfam$, there is no
  quasi-covering of $\mathbf G$ of radius greater than $r(\mathbf G)$ in
  $\gfam$, except $\mathbf G$ itself. 
  \end{itemize}
\end{theorem}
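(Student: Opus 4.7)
The plan is to obtain Theorem~\ref{election} as a direct corollary of the general characterisation of GTD-computable tasks given by Theorem~\ref{caracGTD}, via the equivalence between the standard (LTD) formulation of election and its GTD variant established in the proposition that immediately precedes this theorem.

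First, I would cast election as the distributed task $(\gfam, \election)$, where $\election$ is the relabelling relation associating to any $\bG\in\gfam$ every labelling that carries \elect on exactly one vertex and \nonelect on all others. Left-totality on \gfam is trivial (any vertex can be designated as the elected one), so $(\gfam,\election)$ is a well-defined distributed task. By the preceding proposition, the existence of a standard election algorithm for $\gfam$ is equivalent to the local computability of this task with global termination detection; the nontrivial direction combines the LTD algorithm with a rooted spanning-tree construction launched at the \elect vertex, exactly as in Section~\ref{subsec:spantree}.

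The second step is then simply to invoke Theorem~\ref{caracGTD} applied to $(\gfam,\election)$. Its two necessary-and-sufficient conditions read: (i) every labelled graph in $\gfam$ is covering-minimal, and (ii) there exists a recursive $r:\gfam\to\N$ such that no $\bG\in\gfam$ admits, except itself, a quasi-covering in $\gfam$ of radius greater than $r(\bG)$. These are exactly the two conditions stated in Theorem~\ref{election}, and the derivation is complete.

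The main thing to double-check, rather than a genuine obstacle, is that no additional hypothesis on the specification $\election$ intervenes when specialising Theorem~\ref{caracGTD}. The covering-lifting closure that appears in Theorem~\ref{caracITD} is vacuous here because condition (i) forces every covering inside $\gfam$ to be a bijection. The quasi-covering hypothesis (ii) then carries the full weight of both directions already packaged in Theorem~\ref{caracGTD}: on the impossibility side, any run of an election algorithm on a quasi-covered graph can be lifted step-by-step via the Quasi-Lifting Lemma~\ref{quasilifting} to either produce two \elect vertices (if minimality fails, by Lemma~\ref{lifting}) or force a disagreement between the sheets; on the constructive side, Mazurkiewicz' algorithm produces a bijective numbering of $V(\bG)$ by Corollary~\ref{ambiguous}, the GSSP layer detects when the radius of trust $r^{\mathrm{t}}$ exceeds $r(\bH)$ (so that the asynchronous snapshot $\bH$ is known to be a genuine covering of $\bG$), and at that moment the vertex numbered $1$ safely declares itself \elect while triggering the acknowledgement wave of Section~\ref{subsec:spantree} to signal global termination.
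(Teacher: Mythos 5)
Your proposal is correct and follows essentially the same route as the paper: the paper likewise reduces standard election to LTD-\election{}, converts LTD to GTD via the rooted spanning-tree construction of Section~\ref{subsec:spantree} in the proposition immediately preceding the theorem, and then reads off the two conditions directly from Theorem~\ref{caracGTD}, noting as you do that the covering-lifting hypothesis is vacuous once all graphs in $\gfam$ are covering-minimal.
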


\begin{remark}
  In fact, the \election algorithm can be directly derived from the
  \carto algorithm. When a node detects the termination
  of \mk, it sets its $\out$ label to \elect or \nonelect whether it
  is numbered 1 or not. 
\end{remark}

\subsection{Applications}
\label{BVelection}
The first attempt of a complete characterisation of election was first
done in \cite{BVelection}, but the results were only given when a
bound upon the diameter is initially known. In the general
no knowledge case, they give a ``pseudo''-election algorithm, \ie,
some \elect labels can appears during the computation, this is only
when the computation is finished that this label has to be
unique. This is exactly the definition of implicit-\election.

Known results appear now as simple corollaries of Theorem~\ref{election}.
\begin{itemize}
\item \cite{MazurEnum} Covering minimal networks where the size is
  known;%
\item Trees, complete graphs, grids, networks with identities.
\end{itemize}

Those last  families contains no $q$-sheeted quasi-covering of a
given graph for   $q\geq2$,
hence the $r$ function can be  twice the size of the graph, see
Lemma~\ref{techlemma}.

We also get some new results. 
An interesting result is that there is no election algorithm for
the family of all the networks where the election is possible.
\begin{proposition}
  Let \bG be a labelled graph. \election is computable on \bG
  if and only if \bG is covering-minimal.
\end{proposition}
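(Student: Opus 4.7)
The plan is to derive this proposition as a direct specialisation of Theorem~\ref{election} to the singleton family $\gfam = \{\bG\}$, which reduces both directions to short observations about coverings and quasi-coverings of a single fixed graph.

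For the necessary condition, I would assume that \election is computable on $\{\bG\}$ and invoke Theorem~\ref{election} directly: its first clause asserts that every graph in the family is covering-minimal, so $\bG$ itself must be covering-minimal. Nothing more is needed here.

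For the sufficient condition, I would assume $\bG$ is covering-minimal and verify the two hypotheses of Theorem~\ref{election} with $\gfam = \{\bG\}$. The first (minimality of every member of the family) is immediate. For the second, define $r(\bG) = |V(G)|$ (any value strictly greater than the diameter of $\bG$ would do). Suppose $\bK \in \{\bG\}$ is a quasi-covering of $\bG$ of radius $> r(\bG)$; then $\bK = \bG$, and because the radius exceeds the diameter of $\bG$, we have $B_\bG(z, r(\bG)-1) = \bG$, so the quasi-covering is not strict. By Remark~\ref{nonstrictqcovarecov}, it is in fact a covering of $\bG$ by $\bG$. Covering-minimality of $\bG$ then forces this covering to be a bijection, \ie\ an isomorphism, so up to isomorphism the only such quasi-covering of $\bG$ in $\{\bG\}$ of radius $> r(\bG)$ is $\bG$ itself. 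The recursive function $r$ defined on the finite family $\{\bG\}$ is trivially recursive.

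There is essentially no main obstacle: once Theorem~\ref{election} is available and Remark~\ref{nonstrictqcovarecov} is invoked, the proof is a one-line check on each side. The only subtle point worth double-checking is that ``except $\bG$ itself'' in the statement of Theorem~\ref{election} is interpreted up to isomorphism, which matches the fact that covering-minimality yields an isomorphism rather than literal equality; this matches the convention already used throughout the paper (e.g.\ in Remark after Theorem~\ref{fundamental} and in the definition of covering-minimal).
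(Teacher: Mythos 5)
Your proof is correct and follows exactly the route the paper intends: the proposition is stated as an immediate application of Theorem~\ref{election} to the singleton family $\{\bG\}$, with the first clause giving necessity and the second clause verified by a radius bound exceeding the diameter together with Remark~\ref{nonstrictqcovarecov}. The only (harmless) redundancy is the appeal to covering-minimality to get a bijection in the sufficiency direction: a covering of $\bG$ by a graph isomorphic to $\bG$ is automatically $1$-sheeted by Lemma~\ref{sheets}, hence a bijection, regardless of minimality.
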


\begin{proposition}
  There is no \election algorithm on the family of covering-minimal graphs.
\end{proposition}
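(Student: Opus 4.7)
The strategy is to apply Theorem \ref{election}. The family $\gfam$ of all covering-minimal graphs trivially satisfies the first (minimality) condition, so it suffices to show that the second condition fails: for every recursive function $r:\gfam\to\N$, there must exist $\bG\in\gfam$ admitting a quasi-covering in $\gfam$ of radius strictly greater than $r(\bG)$ and distinct from $\bG$.

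I would work inside the subfamily of anonymous prime rings $\{R_p \mid p \text{ prime}, p\geq 3\}$. First, each $R_p$ is covering-minimal: by Lemma~\ref{sheets}, any covering $R_p\to H$ has $q$ sheets with $q\mid p$, hence $q=1$ (forcing the covering to be a bijection) or $q=p$; the latter would give $|V(H)|=1$ with no edges, which contradicts the fact that $R_p$ is a connected graph with edges in a model without self-loops. Thus $R_p\in\gfam$.

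Second, for any distinct primes $m,p$ with $p>2r+1$, the ring $R_p$ is a strict quasi-covering of $R_m$ of radius $r$. As associated covering I take the universal covering of $R_m$, which is the infinite line $\mathbf L$. For any $z\in V(R_p)$ and $z_0\in V(\mathbf L)$, the ball $B_{R_p}(z,r)$ is a simple path on $2r+1$ vertices (no wrap-around since $p>2r+1$) and is isomorphic via some $\varphi$ to $B_{\mathbf L}(z_0,r)$; setting $\gamma = \delta \circ \varphi$, where $\delta:\mathbf L\to R_m$ is the universal covering map, verifies all the conditions of the quasi-covering definition. Strictness is immediate since $|V(B_{R_p}(z,r-1))|=2r-1 < p$.

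Now, given any recursive $r:\gfam\to\N$, I fix a prime $m\geq 3$ and, using the infinitude of primes, choose a prime $p\neq m$ with $p>2(r(R_m)+1)+1$. Then $R_p$ is a quasi-covering of $R_m$ of radius $r(R_m)+1 > r(R_m)$, it is distinct from $R_m$, and both lie in $\gfam$. This violates the hypothesis of Theorem \ref{election}, so no \election algorithm exists on $\gfam$. The only mildly delicate point is the identification of the infinite line as a legitimate associated covering (quasi-coverings allow infinite $\mathbf G_0$); once this is noted, the geometric verification is immediate and there is no real combinatorial or computability obstacle.
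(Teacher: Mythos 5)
Your proof is correct and follows exactly the paper's approach: the paper's own argument is the one-line observation that prime rings are covering-minimal yet violate the relatively bounded quasi-covering condition of Theorem~\ref{election}. You have simply filled in the details (minimality via Lemma~\ref{sheets} and the quasi-covering construction via truncation of the universal covering, as in Remark~\ref{troncatureuniv}), all of which check out.
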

\begin{proof}
  Rings with a prime size are minimal and does not respect the
  relatively bounded quasi-covering condition.
\end{proof}

However, from Theorem~\ref{caracITD}, it is easy to derive where
implicit-\election is computable.

\begin{proposition}
  \election is computable with implicit termination on the family of
  covering-minimal graphs. 
\end{proposition}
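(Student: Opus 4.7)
The plan is to apply Theorem~\ref{caracITD}, the characterisation of \distasks locally computable with implicit termination: it suffices to show that the task $(\gfam,\election)$ is covering-lifting closed, where $\gfam$ denotes the family of covering-minimal graphs. First I would note that $\gfam$ is recursive, since covering-minimality of a finite graph reduces to enumerating the finitely many candidate quotient graphs (by Lemma~\ref{LB_kcov}, quotients of locally bijective labellings) and checking whether any of them yields a proper covering.

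The key observation is that $\covgfam = \gfam$. Indeed, if $\bG \in \gfam$ is a covering of some $\bH$, then by covering-minimality of $\bG$ the covering is a bijection, so $\bH \simeq \bG$, and therefore $\bH$ is itself covering-minimal (covering-minimality being preserved under isomorphism). The reverse inclusion $\gfam \subseteq \covgfam$ is immediate.

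I would then define the extension $\covS := \election$ on $\covgfam\times\lgraph$. Left-totality is trivial (every graph admits an election labelling). Recursivity is clear (one checks that exactly one vertex carries $\elect$ and all others carry $\nonelect$, with the underlying unlabelled graph unchanged). Lifting-closure is the only substantive point, and it becomes trivial: for $\bG,\bH \in \covgfam = \gfam$ with a covering $\gamma\colon\bG \to \bH$, minimality forces $\gamma$ to be an isomorphism, so the lifting $\gamma^{-1}(\bH')$ of any election labelling $\bH'$ of $\bH$ is itself an election labelling of $\bG$. Theorem~\ref{caracITD} then yields the result.

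A constructive counterpart is transparent: run Mazurkiewicz's algorithm \mk augmented so that each vertex $v$ maintains $\out(v) := \elect$ whenever its current number $n(v) = 1$ and $\out(v) := \nonelect$ whenever $n(v) \geq 2$. On a covering-minimal graph, Theorem~\ref{fundamental} together with the remark after Proposition~\ref{calcul} guarantees that \mk halts with $n_\rho$ a bijection onto $\{1,\dots,|V(G)|\}$, hence the final $\out$-labelling is a valid election; completeness follows from part~2 of Proposition~\ref{calcul}. The only conceptual obstacle is to appreciate that no bound on radii of quasi-coverings is required here, precisely because implicit termination does not ask any node to become aware that the labelling has stabilised.
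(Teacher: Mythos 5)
Your proposal is correct and follows essentially the same route as the paper, which obtains this proposition as an immediate corollary of Theorem~\ref{caracITD}: on the family of covering-minimal graphs one has $\covgfam=\gfam$ and every covering between its members is an isomorphism, so the covering-lifting closure condition holds vacuously. Your additional remarks (recursivity of the family, and the explicit Mazurkiewicz-based construction with $\out$ set according to whether $n(v)=1$) are consistent with the paper's machinery but are elaborations rather than a different argument.
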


We obtained as a direct corollary:
\begin{proposition}
  There exists an election algorithm for covering minimal graphs where
  a bound of the size is known. 
\end{proposition}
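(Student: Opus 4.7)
The plan is to verify the two hypotheses of Theorem~\ref{election}. Formally, let $b\in\N$ be the known bound on the size and let $\gfam$ be the family of covering-minimal labelled graphs of size at most $b$ (the bound $b$ being encoded as a common initial label on every vertex, so that it is recursively available). The first condition of Theorem~\ref{election}, namely covering-minimality of every element of $\gfam$, is part of the definition of $\gfam$, so nothing has to be proved.

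For the second condition I would define the recursive function $r:\gfam \to \N$ by $r(\bG)=b(b+1)$ (or any equivalent function only depending on $b$) and show that no quasi-covering of $\bG$ in $\gfam$ of radius strictly greater than $r(\bG)$ exists except $\bG$ itself. So let $\bG'\in\gfam$ be a quasi-covering of $\bG$ of radius $r' > b(b+1)$. The argument splits into two cases according to strictness.

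If the quasi-covering is not strict, then by Remark~\ref{nonstrictqcovarecov} the graph $\bG'$ is in fact a covering of $\bG$. Since $\bG'\in\gfam$ is covering-minimal, the covering $\bG'\to\bG$ is a bijection and hence an isomorphism, so $\bG'\simeq\bG$. If the quasi-covering is strict, I apply Lemma~\ref{techlemma} with $q=b+1$: since $r' > b(b+1)\geq (b+1)\,|V(\bG)|$, the quasi-covering has at least $b+1$ sheets. But then the ball of radius $r'$ already contains at least $(b+1)|V(\bG)|\geq b+1$ vertices, so $|V(\bG')|\geq b+1$, contradicting $\bG'\in\gfam$.

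The only non-trivial point is really the strict case, and it is handled directly by the size/sheet accounting provided by Lemma~\ref{techlemma}. Once both hypotheses of Theorem~\ref{election} are verified, the election algorithm is obtained as a direct instantiation of the universal construction derived from \carto and \mk described after Theorem~\ref{election}.
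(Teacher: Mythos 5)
Your proof is correct and follows essentially the route the paper intends: the proposition is stated there as a direct corollary of Theorem~\ref{election}, with the radius bound obtained from the known size bound via the sheet-counting of Lemma~\ref{techlemma}, exactly as in the neighbouring Proposition~\ref{kdist}. Your explicit split into the non-strict case (handled by Remark~\ref{nonstrictqcovarecov} and covering-minimality) and the strict case (handled by Lemma~\ref{techlemma}) just spells out the details the paper leaves implicit.
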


We can notice that no trivial extension of the proof of the
Mazur\-kiewicz algorithm enables to obtain directly  this proposition.

\medskip
We also have a new and interesting result for graphs with at most $k$
distinguished vertices: 

\begin{proposition}\label{kdist} 
  Let $k\in\N$. Let $\mathcal I$ be a family of covering-minimal
  $\{0,1\}$-labelled graphs such that for all graph, there are at most
  $k$ vertices labelled with $1$. Then, there exists an election
  algorithm for this family.
\end{proposition}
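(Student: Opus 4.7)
The plan is to apply Theorem~\ref{election}, which reduces the problem to verifying (i) every labelled graph in $\mathcal{I}$ is covering-minimal, and (ii) there is a computable function $r:\mathcal{I}\to \N$ such that, up to isomorphism, no graph in $\mathcal{I}$ other than $\mathbf G$ itself is a quasi-covering of $\mathbf G$ of radius greater than $r(\mathbf G)$. Condition (i) is given directly by hypothesis, so the real task is to produce a suitable $r$.

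I would set $r(\mathbf G) = (k+1)\,|V(G)|$, which is manifestly computable from $\mathbf G$, and then argue by cases on a putative $\mathbf H \in \mathcal{I}$ that is a quasi-covering of $\mathbf G$ of radius strictly greater than $r(\mathbf G)$ via some $\gamma$. In the non-strict case, Remark~\ref{nonstrictqcovarecov} tells us that $\mathbf H$ is actually a covering of $\mathbf G$; since $\mathbf H\in\mathcal I$ is itself covering-minimal, this covering must be a bijection, and hence $\mathbf H \simeq \mathbf G$, which is the allowed exception.

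The main work is the strict case, which I would dispatch using Lemma~\ref{techlemma} applied with $q = k+1$. Since the radius strictly exceeds $(k+1)\,|V(G)|$, the lemma produces at least $k+1$ sheets, so every vertex of $\mathbf G$ has at least $k+1$ preimages in the ball $B_H(z, r(\mathbf G))$. Because $\gamma$ preserves labels and $\mathbf G$ carries at least one vertex bearing the label $1$ (which I interpret, in the spirit of the preceding corollary on graphs with at least one and at most $k$ leaders, as implicit in the statement), each of those preimages inherits the label $1$. This forces $\mathbf H$ to contain at least $k+1$ vertices labelled $1$, contradicting $\mathbf H\in\mathcal I$. The only genuine obstacle is therefore ensuring enough replication in the strict quasi-covering to cross the $k$-threshold for the $1$-labelled vertices, and Lemma~\ref{techlemma} is precisely tailored for this; everything else is either hypothesis or the covering-minimality argument of the non-strict case.
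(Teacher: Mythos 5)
Your proof is correct and follows essentially the same route as the paper's: the same choice $r(\mathbf G)=(k+1)|V(\mathbf G)|$ and the same appeal to Lemma~\ref{techlemma} to bound the number of sheets, with the paper merely compressing your case analysis into the remark that quasi-coverings within $\mathcal I$ are at most $k$-sheeted. Your explicit observation that the argument needs at least one vertex labelled $1$ (otherwise $\mathcal I$ could be, e.g., all covering-minimal graphs labelled uniformly $0$, for which the paper itself shows election is impossible) is a legitimate sharpening of a hypothesis the paper leaves implicit.
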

\begin{proof}
  We define $r(\mathbf G) = (k+1)|V(\mathbf G)|$ and we remark that
  quasi-covering in $\mathcal I$ can be at most $k$-sheeted. Hence, by
  Lemma~\ref{techlemma}, we deduce that $r$ has the desired property.
\end{proof}
From this proposition we deduce that to have an \election algorithm
in a network where uniqueness of an identity is not guaranteed, we
only need a bound on the multiplicity of identities.

\section{Conclusion}

\subsection{Characterisations of termination detection}

Distributed algorithms are very different from sequential ones. How to
make them terminate is a difficult problem. Moreover in this paper, we
show that even if the termination is given, and so can be detected by
an omniscient observer, the detection of this fact is not always
possible for the nodes inside the network.

In this paper, we present a quite comprehensive description of the
computability of \distasks with explicit detection of the
termination. We show one can define four 
flavours of termination detection: implicit termination detection,
local termination detection, termination detection by a distributed
observer and global
termination detection. For each termination detection, we give the
characterisations of distributed 
tasks that admit such a termination detection, and we show they form
a strict hierarchy. The local termination detection flavour is only
characterised in the  case of uniform tasks. It has yet to be
completely investigated. 

We prove that if we ask for implicit or local termination detection,
we can work in any family of networks, but the computable tasks are
restricted. On the other hand, we show that if we ask for global
termination detection, we have to work on special classes of graphs  -
minimal graphs with relatively bounded radius of quasi-coverings - but
there, every task is 
computable. This characterisation precisely explains numerous kind of
hypothesis that are 
traditionally made when designing distributed algorithms.

In conclusion, we show that a distributed task is not only described by
a specification - a relation between inputs and outputs -, a domain - 
the family of networks in which we have to meet the specification -,
but also by the kind of termination detection we ask for.%

\subsection{Comparison with other models}

In contrast with previous works about the computability of distributed
tasks, we can say that, usually, the
termination of the distributed algorithms is ``factored out'': the nodes
know at the beginning an upper bound on the number of 
steps it will take. For Yamashita and Kameda models and Boldi and
Vigna models, it is the  
particular initial knowledge that enable to determine how many steps of
union of local views is sufficient.

It can be observed that, actually, the universal algorithms in these works
are constituted by a potentially infinite loop (merge local views for
Yamashita/Kameda and Boldi/Vigna, snapshot read-write for
Herlihy/Shavit and Borowsky/Gafni \cite{BGatomicsnapshot}) and an
external condition to say when to end the infinite loop. This
condition does not depend on the distributed computations. In this
sense, we can say that the termination is factored out: it is not
detected in a truly distributed way as the number of rounds is known
in advance, it does not depend of what is gathered by each node in the
exchange of information of the distributed algorithm.  

In a kind of contrast, we can see that our asynchronous snapshot
algorithm is constituted of two parts: Mazurkiewicz' algorithm, that
is {\em always} terminating (implicit termination); and the
generalised SSP stability detection that does not terminate alone.
That is this combination that enables to detect, in a truly
distributed way, the termination of the distributed tasks. When to
stop GSSP is computed from the value obtained by Mazurkiewicz'
algorithm, and not from a given a priori value like in the other
approachs.

\subsection{Impossibility results in non-faulty networks}

The results given in this paper show that there are also
possibility/impossibility results even with non-faulty networks. This
paradox could be settled in the recent approach of failure detectors:
the various kind of distributed systems can be seen as a perfect
system (synchronous, centralised, with identities ...) with various failure
(asynchronicity, node failures, communication failures,
...)\cite{ChandraToueg,GafniRoundByRound}. In this 
contribution, we show that lack of structural knowledge (nodes do not
know exactly what is the topology of their network), and lack of
structural information (e.g. unique identities) are also a kind of
failure in this concern.

\bigskip
The authors wish to thank the anonymous referees for some helpful
comments. They are also specially grateful to Bruno Courcelle, Pierre
Cast\'eran and Vincent Filou for their corrections and stimulating questions
regarding the previous version of this report.

\bibliographystyle{alpha}
\bibliography{terminaison}
\end{document}